\documentclass{article}

\usepackage{amsfonts,amssymb,amsmath,a4wide}
\usepackage{epsfig,rotating,framed,enumerate,listings,nicefrac}
\usepackage{paralist}
\usepackage{marginnote}
\usepackage{authblk}

\setlength{\textheight}{21.5cm}

\newtheorem{theorem}{Theorem}[section]
\newtheorem{example}[theorem]{Example}
\newtheorem{lemma}[theorem]{Lemma}

\newtheorem{corollary}[theorem]{Corollary}
\newtheorem{definition}[theorem]{Definition}

\newtheorem{remark}[theorem]{Remark}
\newtheorem{observation}[theorem]{Observation}
\newenvironment{proof}{\noindent{\bf Proof~}}{\null\hfill $\Box$\par\medskip}

\newcommand{\IN}{\mathbb{N}}

\newcommand{\bigo}{\ensuremath{\mathcal{O}}}

\newcommand{\fpt} {\mbox{FPT}}
\newcommand{\xp} {\mbox{XP}}
\newcommand{\w} {\mbox{W}}
\newcommand{\pfpt} {\mbox{PFPT}}
\newcommand{\val} {\text{val}}

\newcommand{\sizevar} {\text{size-var}}
\newcommand{\pvar} {\text{profit-var}}
\newcommand{\p} {\mbox{P}}
\newcommand{\np} {\mbox{NP}}

\newcommand{\Q}{\mathbb{Q}}
\newcommand{\N}{\mathbb{N}}
\newcommand{\Z}{\mathbb{Z}}
\newcommand{\sgn} {\operatorname{sign}}

\newcommand{\gansfuss}[1]{\mbox{``}{#1}\mbox{''}}

\newenvironment{desctight}
  {\begin{list}{}{
\setlength\labelwidth{-5pt}
        \setlength{\itemsep}{0.5pt}
        \setlength{\parsep}{0pt}
        \setlength\itemindent{-\leftmargin}
        
}}
    {\end{list}}

\pagestyle{plain}

\begin{document}

\title{Knapsack Problems: A Parameterized Point of View\thanks{Short versions of this paper appeared in Proceedings of the {\em International Conference on Operations Research} (OR 2014) \cite{GRY16} and (OR 2015) \cite{GRY16a}.}}

\author[1]{Carolin Albrecht}
\author[1]{Frank Gurski}
\author[2]{Jochen Rethmann}
\author[1]{Eda Yilmaz}

\affil[1]{\small University of  D\"usseldorf,
Institute of Computer Science, Algorithmics for Hard Problems Group,\newline 40225 D\"usseldorf, Germany}

\affil[2]{\small Niederrhein University of Applied Sciences,
Faculty of Electrical Engineering and Computer Science, 47805 Krefeld,
Germany}

\maketitle

\begin{abstract}
The knapsack problem (KP) is a very famous NP-hard problem in combinatorial optimization. 
Also its generalization to multiple dimensions named d-dimensional knapsack problem (d-KP) and
to multiple knapsacks named multiple knapsack problem (MKP) are well known problems.
Since KP, d-KP, and MKP are integer-valued problems defined on inputs of various informations, 
we study the fixed-parameter tractability of these problems. 
The idea behind fixed-parameter tractability is to split the 
complexity into two parts - one part that depends purely on the size of the input, and one part 
that depends on some parameter of the problem that tends to be small in practice.  Further we 
consider the closely related question, whether the sizes and the values can be reduced, such that 
their bit-length is bounded polynomially or even constantly in a given parameter, i.e. 
the existence of kernelizations is studied.
We discuss the following parameters: the number of items, the threshold value for the profit, the sizes, 
the profits, the number $d$ of dimensions, and the number $m$ of knapsacks.
We also consider the connection of parameterized knapsack problems to linear programming, 
approximation, and pseudo-polynomial algorithms.

\bigskip
\noindent
{\bf Keywords:} 
knapsack problem; d-dimensional knapsack problem;  multiple knapsack problem; 
parameterized complexity;  kernelization
\end{abstract}

\section{Introduction}

The knapsack problem is one of the famous tasks in combinatorial optimization.\footnote{The 
knapsack problem obtained its name 
by the following well-known example.
Suppose a hitchhiker needs to fill a knapsack for a
trip. He can choose between $n$ items  and each of
them has a profit of $p_j$ measuring the usefulness
of this item during the trip and a size $s_j$.
A natural constraint is that the total size of
all selected items must not exceed the capacity $c$ of the
knapsack. The aim of the hitchhiker
is to select a subset of items while maximizing the
overall profit under the capacity constraint.} 
In the knapsack problem (KP) we are given a set $A$ of $n$ items. 
Every item $j$ has a profit $p_j$ and a size $s_j$. 
Further there is a capacity $c$ of the knapsack. The task is to choose a subset $A'$ of $A$, 
such that the total profit of $A'$ is maximized and the total size of $A'$ is at most $c$.
Within the d-dimensional knapsack problem (d-KP) a set A of $n$ items and a number $d$ 
of dimensions is given. Every item $j$ has a profit $p_j$ and for dimension $i$ the size $s_{i,j}$. 
Further for every dimension $i$ there is a capacity $c_i$. The goal is to find a subset $A'$ of $A$, 
such that the total profit of $A'$ is maximized and for every dimension $i$ the total size of $A'$ 
is at most the capacity $c_i$.
Further we consider the multiple knapsack problem (MKP) where beside $n$ items a number $m$ of 
knapsacks is given. Every item $j$ has a profit $p_j$ and a size $s_j$ and each knapsack $i$ has a 
capacity $c_i$. The task is to choose $m$ disjoint subsets of $A$ such that the total profit of the 
selected items is maximized and each subset can be assigned to a different knapsack $i$ without exceeding 
its capacity $c_i$ by the sizes of the selected items.
Surveys on the knapsack problem and several of its variants can be found in books
by Kellerer et al. \cite{KPP10} and by Martello et al. \cite{MT90}.

The knapsack problem arises in resource allocation where there are 
financial constraints, e.g. capital budgeting.
Capital budgeting problems\footnote{In the field of finance
problems defined on projects with a \gansfuss{take it or leave it} opportunity
are denoted as {\em capital budgeting problems} (cf. Section 11.2 of \cite{CT13}).} 
have been introduced in the 1950s by Lorie and Savage \cite{LS55} and also by
Manne and Markowitz \cite{MM57}
and a survey can be found in \cite{Wei66}.

From a computational point of view the knapsack problem is intractable \cite{GJ79}.
This motivates us to consider the fixed-parameter tractability and
the existence of kernelizations of knapsack problems.
Beside the standard parameter $k$, i.e. the threshold value for the profit
in the decision version of these problems, and the number of items, 
knapsack problems offer a large number of interesting parameters.
Among these are the sizes, 
the profits, the number of different sizes,  the number of different profits, 
the number $d$ of dimensions, the number $m$ of knapsacks,
and combined parameters on these. Such parameters were considered 
for fixed-parameter tractability of the subset sum problem, which can be regarded as a
special case of the knapsack problem, in \cite{FGR10} and 
in the field of kernelizaton in \cite{EKMR15}.

This paper is organized as follows.
In Section \ref{sec-pre},  we give preliminaries on  fixed-parameter tractability and
kernelizations, which are two equivalent concepts within parameterized complexity theory. 
We give a characterization for the special case of polynomial fixed-parameter
tractability, which in the case of integer-valued problems is a super-class
of the set of problems allowing polynomial time algorithms. We show that a
parameterized problem can be solved by a polynomial fpt-algorithm if and
only if it is decidable and has a kernel of constant size. This
implies a tool to show kernels of several knapsack problems. 
Further we cite a  useful theorem for finding 
kernels of knapsack problems with respect to parameter $n$  
by compressing large integer values to smaller ones.
We also give results on the connection between the existence
of parameterized algorithms, approximation algorithms, 
and pseudo-polynomial algorithms.
In Section \ref{sec-kp}, we consider the knapsack problem. We apply
known results as well as our characterizations to show fixed-parameter tractability and
the existence of kernelizations.
In Section \ref{sec-mkp}, we look at the d-dimensional knapsack problem.
We show that the problem is not pseudo-polynomial in general by
a pseudo-polynomial reduction from {\sc Independent Set}, but pseudo-polynomial for
every fixed number $d$ of dimensions. We give several parameterized algorithms and
conclude bounds on possible kernelizations.
In Section \ref{def-sec-mkp}, we consider the multiple knapsack problem.
We give a dynamic programming solution and a pseudo-polynomial reduction
from {\sc 3-Partition} in order to show that the problem is not pseudo-polynomial in general, but for
every fixed number $m$ of knapsacks. Further we give parameterized algorithms and
bounds on possible kernelizations for several parameters.
In the final Section \ref{sec-con} we give some conclusions and an outlook
for further research directions.

\section{Preliminaries}\label{sec-pre}

In this section we recall basic notations for common
algorithm design techniques for hard problems
from the textbooks \cite{ACGKMP99},  \cite{DF13},  \cite{FG06}, and \cite{GJ79}.

%

\subsection{Parameterized Algorithms}

Within parameterized complexity we consider a two dimensional 
analysis of the computational complexity of a problem. Denoting the
input by $I$, the two considered dimensions are the input size $|I|$ 
and the value of a parameter $\kappa(I)$, see  \cite{DF13} and \cite{FG06} for surveys.

Let $\Pi$ be a decision problem and ${\mathcal I}$ the set of all instances
of $\Pi$. A {\em parameterization} or {\em parameter} of $\Pi$ is a 
mapping $\kappa: \mathcal{I} \to \IN$ that is polynomial time computable.
The value of the parameter $\kappa(I)$ is expected to be
small for all inputs $I\in  \mathcal{I}$.
A {\em parameterized problem} is a pair $(\Pi,\kappa)$, where $\Pi$
is a decision problem and $\kappa$ is a parameterization of $\Pi$.
For $(\Pi,\kappa)$ we will also use the abbreviation $\kappa$-$\Pi$.



\subsubsection{FPT-Algorithms}
An algorithm $A$ is an {\em fpt-algorithm with respect to $\kappa$}, if
there is a computable function $f: \IN \to \IN$ 
and a constant $c\in \IN $
such that for every instance $I\in {\mathcal I}$ the running time
of $A$ on $I$ is at most  
$$f(\kappa(I))\cdot |I|^c$$ or 
equivalently at most $f(\kappa(I))+ |I|^c$, see \cite{FG06}. 
For the case where $f$ is also a polynomial, $A$ is denoted as  
{\em polynomial fpt-algorithm with respect to $\kappa$}.

A parameterized problem $(\Pi,\kappa)$  
belongs to the class $\fpt$ and is called {\em fixed-parameter tractable},
if there is an fpt-algorithm with respect to $\kappa$ which decides $\Pi$.
Typical running times of an fpt-algorithm w.r.t.\ parameter $\kappa$
are $2^{\kappa(I)} \cdot |I|^2$ and $\kappa(I)! \cdot |I|^3 \cdot \log(|I|)$.

A parameterized 
problem $(\Pi,\kappa)$ belongs to the class $\pfpt$ and  is 
{\em polynomial fixed-parameter tractable} (cf.~\cite{CHKX07}),
if there is a polynomial fpt-algorithm with respect to $\kappa$ which 
decides $\Pi$. 

Please note that polynomial fixed-parameter tractability does not necessarily 
imply polynomial time computability for the decision problem in general.
A reason for this is that within  integer-valued problems 
there are parameter values $\kappa(I)$ which are
larger than any polynomial in the instance size $|I|$. An example is parameter
$\kappa(I)=c$ for problem {\sc Knapsack} in Section \ref{sec-para-kp}.
On the other hand, for small parameters polynomial fixed-parameter tractability
leads to polynomial time computability.

\begin{observation}
Let $(\Pi,\kappa)$ be some parameterized problem and $c$ be some constant such that for
every instance $I$ of $\Pi$ it holds $\kappa(I)\in \bigo(|I|^c)$. Then the existence 
of a polynomial fpt-algorithm with respect to $\kappa$ implies
a polynomial time algorithm for $\Pi$.
\end{observation}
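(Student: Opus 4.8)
The plan is to observe that the hypotheses make the given polynomial fpt-algorithm itself run in polynomial time, so that no new algorithm has to be constructed. By the definition of a polynomial fpt-algorithm with respect to $\kappa$, there is a polynomial $f$ and a constant $c_1\in\IN$ such that on every instance $I$ the algorithm $A$ decides $\Pi$ in time at most $f(\kappa(I))\cdot |I|^{c_1}$. The goal is simply to bound this expression by a polynomial in $|I|$ alone.

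First I would record a convenient bound on $f$. Since $f$ is a polynomial with values in $\IN$, of degree $e$ say, there is a constant $C$ with $f(x)\le C\cdot x^e$ for all $x\ge 1$; that is, $f$ is dominated by its leading monomial on the positive integers, and we may assume $f$ is nondecreasing there. Next I would invoke the assumption $\kappa(I)\in\bigo(|I|^c)$ to fix a constant $\alpha$ and a size threshold beyond which $\kappa(I)\le \alpha\cdot |I|^c$ holds for every instance $I$.

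Substituting the second estimate into the first yields $f(\kappa(I))\le C\cdot(\alpha |I|^c)^e = C\alpha^e\cdot |I|^{ce}$, and therefore the running time of $A$ on $I$ is at most $C\alpha^e\cdot |I|^{ce+c_1}$ for all instances above the threshold. The finitely many instances below the threshold contribute only an additive constant to the running time. Hence $A$ runs in time polynomial in $|I|$ and already constitutes a polynomial time algorithm deciding $\Pi$.

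The argument is really just a closure property: polynomials are closed under composition and under multiplication. The only step that needs attention is precisely the one place where polynomiality of $f$ is used, namely replacing $f$ by a dominating monomial and substituting the polynomial bound on $\kappa$. This is also exactly where the argument would break down for a general fpt-algorithm, whose function $f$ may grow faster than any polynomial; there the composition $f(\kappa(I))$ need not be polynomially bounded even when $\kappa(I)\in\bigo(|I|^c)$. I do not expect any genuine obstacle beyond keeping track of these constants.
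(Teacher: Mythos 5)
Your proposal is correct and follows essentially the same route as the paper: bound the running time $f(\kappa(I))\cdot|I|^{c_1}$ by replacing $f(\kappa(I))$ with a monomial $\kappa(I)^{e}$ and then substituting $\kappa(I)\in\bigo(|I|^c)$ to obtain a bound polynomial in $|I|$. The extra bookkeeping about thresholds and finitely many small instances is harmless but not needed beyond what the paper's one-line argument already contains.
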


\begin{proof}
Let $A$ be some polynomial fpt-algorithm with respect to $\kappa$ for $(\Pi,\kappa)$.
Then $A$ has a running time of $\bigo(\kappa(I)^d\cdot |I|^{d'})$ for two
constants $d$, $d'$. Since $\kappa(I)\in \bigo(|I|^c)$ we obtain a running time
which is polynomial in $|I|$.
\end{proof}

In order to state lower bounds we give the following corollary.

\begin{corollary}\label{cor-pfpt-p}
Let $(\Pi,\kappa)$ be some parameterized problem and $c$ be some constant 
such that $\Pi$ is NP-hard and for
every instance $I$ of $\Pi$ it holds $\kappa(I)\in \bigo(|I|^c)$. 
Then there is no polynomial fpt-algorithm with respect to $\kappa$ 
for $(\Pi,\kappa)$, unless  $\p=\np$.
\end{corollary}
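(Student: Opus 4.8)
The plan is to derive this corollary directly from the preceding observation by contraposition. The observation already establishes the key implication: whenever a parameterized problem $(\Pi,\kappa)$ satisfies $\kappa(I)\in\bigo(|I|^c)$ for all instances and admits a polynomial fpt-algorithm with respect to $\kappa$, then $\Pi$ can be solved by a genuine polynomial time algorithm. The corollary merely adds the extra hypothesis that $\Pi$ is NP-hard and reads off the standard complexity-theoretic consequence.

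First I would assume, for contradiction, that a polynomial fpt-algorithm with respect to $\kappa$ does exist for $(\Pi,\kappa)$. By the hypothesis of the corollary we have $\kappa(I)\in\bigo(|I|^c)$ for every instance $I$, so the assumptions of the observation are met. Applying the observation then yields a polynomial time algorithm deciding $\Pi$, which places $\Pi\in\p$.

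Next I would invoke the second hypothesis, namely that $\Pi$ is NP-hard. Since $\Pi$ is both NP-hard and, by the previous step, solvable in polynomial time, every problem in $\np$ reduces in polynomial time to a polynomially solvable problem, and hence lies in $\p$. This gives $\np\subseteq\p$, and together with the trivial inclusion $\p\subseteq\np$ we conclude $\p=\np$. Thus the existence of the polynomial fpt-algorithm forces $\p=\np$, which is exactly the contrapositive of the claimed statement.

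There is essentially no technical obstacle here: the entire content is packaged in the observation, and the corollary is a routine contrapositive combined with the definition of NP-hardness. The only point deserving a moment of care is making sure that ``NP-hard'' is used in the sense that a polynomial time algorithm for $\Pi$ would collapse $\np$ into $\p$; for decision problems this is immediate, while for the optimization/threshold formulations of the knapsack problems considered later one should keep in mind that the relevant decision versions are the NP-hard objects to which the corollary is applied.
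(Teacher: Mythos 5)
Your proof is correct and matches the paper's intent exactly: the paper states this corollary immediately after the Observation without a separate proof, treating it as the direct contrapositive consequence that you spell out (polynomial fpt-algorithm plus $\kappa(I)\in\bigo(|I|^c)$ gives a polynomial time algorithm for the NP-hard problem $\Pi$, hence $\p=\np$). Nothing further is needed.
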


\subsubsection{XP-Algorithms}
An algorithm $A$ is an {\em xp-algorithm with respect to $\kappa$}, if
there are two computable functions $f,g: \IN \to \IN$ 
such that for every instance $I\in {\mathcal I}$ the running time
of $A$ on $I$ 
is at most  
$$f(\kappa(I))\cdot |I|^{g(\kappa(I))}.$$

A parameterized problem  $(\Pi,\kappa)$ 
belongs to the class $\xp$ and is called {\em slicewise polynomial},
if there is an xp-algorithm with respect to $\kappa$ which decides $\Pi$.
Typical running times of an xp-algorithm w.r.t.\ parameter $\kappa$
are  $2^{\kappa(I)}\cdot|I|^{\kappa(I)^2}$ and $|I|^{\kappa(I)+2}$.

\subsubsection{Fixed-parameter intractability}
In order to show fixed-parameter intractability, it is useful to 
show the hardness with respect  to one of the classes $\w[t]$ for some 
$t\geq 1$, which were introduced by  Downey and Fellows \cite{DF99} 
in terms of weighted satisfiability problems on classes of circuits. 
The following relations -- the so called {\em $\w$-hierarchy} -- hold 
and all inclusions are assumed to be strict.
$$\pfpt\subseteq\fpt \subseteq \w[1] \subseteq \w[2] \subseteq \ldots \subseteq \xp$$

In the case of hardness\footnote{In this paper we
will consider several combined parameters for knapsack problems, e.g.
all profits, sizes, or capacities since the complexity of a parameterization by only
one of them remains open.} with respect to some parameter $\kappa$ a natural
question is whether the problem remains hard for {\em combined} parameters, i.e.\
parameters $(\kappa_1,\ldots,\kappa_r)$ that consists of $r\geq 2$ parts of the input. 
The given notations can be carried over to combined parameters, e.g.
an fpt-algorithm with respect to $(\kappa_1,\ldots,\kappa_r)$ is an algorithm of
running time $f(\kappa_1(I),\ldots,\kappa_r(I))\cdot |I|^c$ for  some constant $c$ and some computable function depending only on $\kappa_1,\ldots,\kappa_r$.


\subsubsection{Kernelization}\label{sec-kern} 

Next we consider the question, whether the 
sizes and the values can be reduced, such that their bit-length 
is bounded polynomially in a given parameter.

Let $(\Pi,\kappa)$ be a parameterized problem,  
${\mathcal I}$ the set of all instances
of $\Pi$ and $\kappa: \mathcal{I} \to \IN$ a para\-meter\-ization for $\Pi$.
A polynomial time transformation  $f:{\mathcal I} \times \IN \to {\mathcal I}
\times \IN$ is called a  {\em kernelization} for $(\Pi,\kappa)$, if $f$ maps
a pair $(I,\kappa(I))$ to a pair $(I',\kappa(I'))$, 
such that the following three properties
hold.

\begin{itemize}
\item For all $I\in {\mathcal I}$ it holds $I$ is a 
  yes-instance for $\Pi$ if and only if $I'$ is a 
  yes-instance for $\Pi$.

\item $\kappa(I')\leq \kappa(I)$.\footnote{In some recent works 
the restriction that the value of the new parameter is at most the value of the old parameter 
was relaxed, see \cite{CFKLMPPS15}.}

\item There is some function $f': \IN \to  \IN$, such that 
$|I'|\leq f'(\kappa(I))$.\footnote{The size of $I'$ depends only on the parameter $\kappa(I)$ and
not on the size of $I$.}
\end{itemize}
The pair $(I',\kappa(I'))$ is called {\em kernel} for $(\Pi,\kappa)$
and $f'(\kappa(I))$ is the {\em size of the kernel}. 
If $f'$ is a polynomial, linear, or constant function of $\kappa$, we say
$(I',\kappa(I'))$ is a {\em polynomial}, {\em linear}, or {\em constant 
kernel}, respectively, for $(\Pi,\kappa)$.

Next we show that fpt-algorithms lead to kernels. Although the existence is
well known (Theorem 1.39 in \cite{FG06}), we give a proof since we need the 
kernel size  later on.

\begin{theorem}\label{th-ker} 
Let $(\Pi,\kappa)$ be some parameterized problem.
If there is an fpt-algorithm that solves $(\Pi,\kappa)$ for every
instance $I$ in time $\bigo(f(\kappa(I))\cdot |I|^{c})$, then 
$\Pi$ is decidable and there is a 
kernel of size $\bigo(f(\kappa(I))$ for $(\Pi,\kappa)$.
\end{theorem}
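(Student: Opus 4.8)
The plan is to convert the given fpt-algorithm $A$ into a kernelization by a case distinction on whether the instance is already \emph{small} relative to $f(\kappa(I))$. Write the running time of $A$ as at most $d\cdot f(\kappa(I))\cdot |I|^{c}$ for some constant $d$. Decidability is then immediate: since $A$ halts after at most $d\cdot f(\kappa(I))\cdot |I|^{c}$ steps and returns the correct answer, simply running $A$ to completion decides $\Pi$, so no separate argument is needed for that part of the claim.

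For the kernel, the key idea is to simulate $A$ for only a \emph{polynomially} bounded number of steps rather than to completion. Concretely, I would run $A$ on $I$ for at most $d\cdot |I|^{c+1}$ steps and distinguish two cases. If $A$ halts within this budget, we have learned whether $I$ is a yes-instance, and I would output a fixed trivial instance of constant size: a canonical yes-instance if the answer is yes, and a canonical no-instance otherwise. (If $\Pi$ has no yes-instance or no no-instance at all, it is trivial and a constant kernel exists outright.) If $A$ does \emph{not} halt within $d\cdot |I|^{c+1}$ steps, then its true running time exceeds this bound, which together with the bound $d\cdot f(\kappa(I))\cdot|I|^{c}$ forces $d\cdot|I|^{c+1}<d\cdot f(\kappa(I))\cdot|I|^{c}$ and hence $|I|<f(\kappa(I))$; in this case I would output the instance $I$ unchanged.

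It then remains to check the three kernelization properties in both cases. Equivalence is immediate: in the first case the output has the same yes/no status as $I$ by construction, and in the second case the instance is literally unchanged. The parameter does not increase, since we either keep $\kappa(I)$ or emit a canonical instance whose parameter can be taken minimal. The size bound $|I'|\le f'(\kappa(I))$ holds with $f'(\kappa(I))\in\bigo(f(\kappa(I)))$, because in the first case $|I'|$ is constant while in the second case $|I'|=|I|<f(\kappa(I))$. Finally the transformation is polynomial in $|I|$: the simulation runs for at most $d\cdot|I|^{c+1}$ steps and $\kappa$ is polynomial-time computable by assumption, so producing the output never leaves polynomial time.

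I expect the main obstacle to be avoiding an explicit evaluation of $f(\kappa(I))$, which could itself be super-polynomial in $|I|$ and would break the polynomial-time requirement on the kernelization. The time-bounded simulation of $A$ is precisely what circumvents this: instead of computing $f(\kappa(I))$ and comparing it with $|I|$, we let the failure of $A$ to terminate within the polynomial budget $d\cdot|I|^{c+1}$ \emph{certify} that $|I|<f(\kappa(I))$, so that returning $I$ itself is a valid kernel of the claimed size $\bigo(f(\kappa(I)))$. A secondary point requiring care is the bookkeeping for the condition $\kappa(I')\le\kappa(I)$ when emitting the canonical constant-size instances, which is handled by choosing those instances to have minimal parameter value.
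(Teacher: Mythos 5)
Your proposal is correct and follows essentially the same route as the paper's proof: simulate the fpt-algorithm for a polynomial number of steps, output a constant-size canonical yes/no-instance if it halts, and otherwise conclude $|I|<f(\kappa(I))$ and return $I$ itself as the kernel. You even spell out two details the paper leaves implicit (the trivial case where one of the canonical instances does not exist, and choosing the canonical instances with minimal parameter value to guarantee $\kappa(I')\leq\kappa(I)$), but the underlying argument is identical.
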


\begin{proof}
Let $(\Pi,\kappa)\in\fpt$ and $A$ be an fpt-algorithm 
with respect to $\kappa$ which runs on input $I$ in time 
$f(\kappa(I))\cdot |I|^{c}$ for some function $f$ and some constant $c$. 
W.l.o.g.~we assume that there is one constant size yes-instance $I_0$ and one 
constant size no-instance $I_1$ of $\Pi$.

The following algorithm $A'$ computes a kernelization for $(\Pi,\kappa)$.
Algorithm $A'$ simulates $|I|^{c+1}$ steps of
algorithm $A$. If during this time $A$ stops and accepts or rejects, 
then $A'$ chooses $I_0$ or $I_1$, respectively, as the kernel.
Otherwise we know that $|I|^{c+1} \leq f(\kappa(I)) |I|^{c}$ and thus $|I|\leq f(\kappa(I))$ and
$A'$ states $(I,\kappa(I))$ as kernel.

Algorithm $A'$ has a running time in $\bigo(|I|^{c})$ and 
leads to a kernel of size $|I_0|+|I_1|+f(\kappa(I)) \in \bigo(1) \cup \bigo(1)\cup \bigo(f(\kappa(I)))\subseteq \bigo(f(\kappa(I)))$.
\end{proof}

The existence of an fpt-algorithm is even equivalent to the existence of a
kernelization for decidable\footnote{Bodlaender \cite{Bod09}
gives an example which shows that the condition that the problem is 
decidable is necessary.} problems \cite{DF13,FG06,Nie06}.

\begin{theorem}[Theorem 1.39 of \cite{FG06}]\label{th-fg} For every
parameterized problem $(\Pi,\kappa)$ the following properties
are equivalent:
\begin{enumerate}[(1)]
\item $(\Pi,\kappa)\in\fpt$
\item $\Pi$ is decidable and $(\Pi,\kappa)$ has a kernelization.
\end{enumerate}
\end{theorem}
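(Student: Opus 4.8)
The plan is to prove the two implications separately, and the implication $(1)\Rightarrow(2)$ is essentially already done. If $(\Pi,\kappa)\in\fpt$, then there is an fpt-algorithm solving it, and Theorem~\ref{th-ker} immediately yields that $\Pi$ is decidable and that $(\Pi,\kappa)$ has a kernelization (in fact a kernel of size $\bigo(f(\kappa(I)))$). So for this direction nothing beyond citing the previous theorem is required.

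For the substantial direction $(2)\Rightarrow(1)$, the idea is to build an fpt-algorithm by first shrinking the instance with the kernelization and then deciding the small kernel by brute force. Given an instance $I$, I would first run the kernelization, which by definition runs in polynomial time, say $\bigo(|I|^c)$, and outputs a kernel $(I',\kappa(I'))$ with $|I'|\le f'(\kappa(I))$ for some computable function $f'$. Since the kernelization preserves membership (yes-instance of $\Pi$ iff yes-instance of $\Pi$), it suffices to decide whether $I'$ is a yes-instance.

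The key step is to bound the cost of deciding $I'$ purely in terms of $\kappa(I)$, and this is where the decidability hypothesis is used. Let $M$ be a Turing machine that decides $\Pi$; by assumption $M$ halts on \emph{every} input. Define $t_M(n)$ to be the maximum number of steps $M$ takes over all inputs of length $n$. Because there are only finitely many inputs of each length and $M$ halts on each of them, this maximum exists and can itself be computed (by simulating $M$ on all inputs of length $n$ and taking the maximum), so $t_M$ is a total computable function. Running $M$ on the kernel then costs at most $t_M(|I'|)\le t_M(f'(\kappa(I)))$ steps, a computable function of $\kappa(I)$.

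Combining the two phases, the total running time is $\bigo(|I|^c)+t_M(f'(\kappa(I)))$, i.e.\ of the additive form $|I|^{c}+g(\kappa(I))$ with $g=t_M\circ f'$ computable. By the additive characterization of fpt running times recalled earlier, this is an fpt-algorithm with respect to $\kappa$, whence $(\Pi,\kappa)\in\fpt$. The main obstacle is exactly the passage from \emph{decidable} to \emph{decidable within a computable time bound}: this is the only place the decidability assumption is needed, and it cannot be dropped, as witnessed by Bodlaender's example mentioned in the footnote above.
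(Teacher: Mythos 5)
Your proof is correct, and it is worth noting that the paper itself does not prove this theorem at all: it states it as a citation (Theorem 1.39 of \cite{FG06}), proving only the forward direction separately as Theorem \ref{th-ker}, which is exactly what you invoke for $(1)\Rightarrow(2)$. Your $(2)\Rightarrow(1)$ argument is the standard textbook one, and you correctly isolate the crux, namely that decidability must be upgraded to \emph{decidability within a computable time bound}; the construction of $t_M(n)$ by simulating the decider on all (finitely many) inputs of length $n$ is precisely how this is done. Two small points deserve attention. First, you should take $t_M$ to be monotone nondecreasing (replace it by $\max_{n'\leq n} t_M(n')$, still computable) so that $t_M(|I'|)\leq t_M(f'(\kappa(I)))$ actually follows from $|I'|\leq f'(\kappa(I))$. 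Second, your claim that $g=t_M\circ f'$ is computable requires $f'$ to be computable; the paper's restatement of the kernelization definition only says ``some function $f'$'', whereas the original definition in \cite{FG06} (and the one needed here, since the paper's definition of an fpt-algorithm insists on a computable $f$) requires the size bound to be a computable function. With that convention made explicit, your argument is complete.
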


Thus for fixed-parameter tractable problems the existence of
kernels of {\em polynomial} size are of special interest. 
For a long time polynomial kernels only were known for
parameterized problems obtained from optimization problems with
the standard parameterization (i.e.~problem $k$-$\Pi$ defined 
in Section \ref{sec-approx}). 
In this paper we will give a lot of examples for 
further parameters which lead to polynomial kernels
for knapsack problems.

For the special case where $f$ is a polynomial
Theorem \ref{th-ker}
implies that polynomial fpt-algorithms lead to polynomial kernels.

\begin{corollary}\label{th-polyker} 
Let $(\Pi,\kappa)$ be some parameterized problem.
If $(\Pi,\kappa)\in\pfpt$, then $\Pi$ is decidable and there is a 
poly\-no\-mial kernel for $(\Pi,\kappa)$.
\end{corollary}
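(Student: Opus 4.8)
The plan is to derive this corollary directly from Theorem \ref{th-ker}, which we have already established, by specializing it to the case where the function $f$ governing the running time is itself a polynomial. The statement to prove is: if $(\Pi,\kappa)\in\pfpt$, then $\Pi$ is decidable and $(\Pi,\kappa)$ admits a polynomial kernel.

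First I would unpack the hypothesis $(\Pi,\kappa)\in\pfpt$. By the definition of polynomial fixed-parameter tractability given earlier, this means there is a polynomial fpt-algorithm $A$ that decides $\Pi$; that is, an fpt-algorithm whose running time is bounded by $f(\kappa(I))\cdot|I|^{c}$ for some constant $c$, where crucially the function $f$ is a \emph{polynomial}. This is exactly the situation to which Theorem \ref{th-ker} applies, with the extra information that $f$ is polynomial.

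Next I would invoke Theorem \ref{th-ker} verbatim: since $A$ is in particular an fpt-algorithm solving $(\Pi,\kappa)$ in time $\bigo(f(\kappa(I))\cdot|I|^{c})$, the theorem yields that $\Pi$ is decidable and that there is a kernel of size $\bigo(f(\kappa(I)))$ for $(\Pi,\kappa)$. The only remaining step is to observe that the kernel size inherits the polynomial character of $f$: because $f$ is a polynomial function of $\kappa$, the bound $\bigo(f(\kappa(I)))$ is a polynomial in $\kappa(I)$, which by the definition of kernel sizes given in Section \ref{sec-kern} means the kernel is a \emph{polynomial} kernel. This completes the argument.

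I do not expect any genuine obstacle here, since the corollary is essentially a bookkeeping specialization of Theorem \ref{th-ker} together with the definition of a polynomial kernel; the entire content is that the polynomiality of $f$ carries through the construction unchanged. The one point that warrants a sentence of care is making explicit that "$f$ polynomial in $\kappa$" implies "$f'(\kappa(I))=\bigo(f(\kappa(I)))$ is polynomial in $\kappa$," so that the size bound matches the Section \ref{sec-kern} definition of a polynomial kernel rather than merely an abstract kernel; but this is immediate and requires no calculation.
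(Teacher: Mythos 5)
Your proposal is correct and follows exactly the paper's route: the paper derives Corollary~\ref{th-polyker} as the special case of Theorem~\ref{th-ker} in which the function $f$ bounding the fpt running time is a polynomial, so that the resulting kernel of size $\bigo(f(\kappa(I)))$ is polynomial in $\kappa(I)$. Your added remark that the polynomiality of $f$ must be seen to carry over to the kernel-size function $f'$ is precisely the (immediate) bookkeeping the paper leaves implicit.
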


A remarkable difference to the relation of Theorem \ref{th-fg} is
that the reverse direction of Corollary \ref{th-polyker} 
does not hold true, unless $\p=\np$. This can be shown by the knapsack problem
parameterized by the number of items $n$, $n$-{\sc KP} for short. By Theorem \ref{maintheorem2}
there is a polynomial kernel  of size $\bigo(n^4)$ for $n$-{\sc KP}
but by Corollary \ref{cor-pfpt-p} there is no polynomial fpt-algorithm for  $n$-{\sc KP}.

The used transformation $f$ for the proof of Corollary \ref{th-polyker} 
runs in time $\bigo(|I|^{d'+1})$, while a kernelization may have a 
transformation $f$ which runs in polynomial time in $|I|$ {\em and} $\kappa(I)$.
This will be exploited within the following characterization of
problems allowing kernels of constant size.

\begin{theorem}\label{th-me} For every
parameterized problem $(\Pi,\kappa)$ the following properties
are equivalent:
\begin{enumerate}[(1)]
\item $(\Pi,\kappa)\in\pfpt$
\item $\Pi$ is decidable and $(\Pi,\kappa)$ has a kernel of $\bigo(1)$ size.
\end{enumerate}
\end{theorem}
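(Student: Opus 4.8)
The plan is to prove Theorem \ref{th-me} by establishing both directions of the equivalence, leaning on the earlier results and definitions in the paper.

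For the direction $(1)\Rightarrow(2)$, I would start from the assumption that $(\Pi,\kappa)\in\pfpt$, so there is a polynomial fpt-algorithm $A$ with running time $\bigo(f(\kappa(I))\cdot|I|^c)$ where $f$ is itself a \emph{polynomial}. Decidability is immediate since $A$ decides $\Pi$. The key idea is to strengthen the simulation trick from the proof of Theorem \ref{th-ker}, exploiting the fact that here we may run the kernelization in time polynomial in $|I|$ \emph{and} $\kappa(I)$, as noted in the paragraph preceding the statement. Concretely, I would design a kernelization $A'$ that fully simulates $A$ on input $I$ for the allotted $\bigo(f(\kappa(I))\cdot|I|^c)$ steps; since $f$ is a polynomial, this whole simulation runs in time polynomial in $|I|$ and $\kappa(I)$, which is exactly what a kernelization is allowed. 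Because $A$ always halts within this bound, $A'$ learns the correct yes/no answer and can output one of the two fixed constant-size instances $I_0$ (yes) or $I_1$ (no). The resulting kernel has size $\bigo(1)$, independent of $\kappa(I)$, giving property (2).

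For the direction $(2)\Rightarrow(1)$, I would assume $\Pi$ is decidable and $(\Pi,\kappa)$ has a kernelization producing instances $(I',\kappa(I'))$ with $|I'|\leq f'(\kappa(I))$ for a \emph{constant} function $f'$, i.e.\ $|I'|\leq s$ for some fixed constant $s$. The kernelization runs in time polynomial in $|I|$, say $\bigo(|I|^{c'})$. The plan is then to compose the kernelization with a brute-force decision procedure for $\Pi$ restricted to instances of size at most $s$. Since there are only finitely many instances of size at most $s$ and $\Pi$ is decidable, deciding such a bounded-size instance takes time bounded by a constant (depending only on $s$, hence on the problem, not on the input); more carefully, one can precompute or fix a lookup over all instances up to size $s$. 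The total running time is the polynomial kernelization time plus a constant, i.e.\ $\bigo(|I|^{c'})+\bigo(1)$, which is of the form $f(\kappa(I))\cdot|I|^{c'}$ with $f$ a constant (hence polynomial) function. By the correctness of the kernelization, the answer on $I'$ equals the answer on $I$, so this composite algorithm decides $\Pi$ and is a polynomial fpt-algorithm, giving $(\Pi,\kappa)\in\pfpt$.

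The main obstacle I anticipate is in the $(2)\Rightarrow(1)$ direction, specifically justifying that deciding a constant-size instance costs only constant time in a way that keeps the overall bound a genuine \emph{polynomial} fpt-running time rather than merely an fpt-running time. Since $f'$ is constant, the kernel size $s$ does not depend on $\kappa(I)$ at all, so the cost of the final decision is a fixed constant and cannot smuggle in any nonpolynomial dependence on $\kappa$; this is exactly why a \emph{constant} kernel (as opposed to a polynomial one) is needed, mirroring the remark after Corollary \ref{th-polyker} that a polynomial kernel is not enough to recover $\pfpt$. I would take care to state that the constant absorbed into $f$ depends only on the fixed problem $\Pi$ and the fixed bound $s$, and to contrast this with Theorem \ref{th-fg}, where an arbitrary (computable, possibly superpolynomial) kernel size forces only an ordinary fpt-algorithm.
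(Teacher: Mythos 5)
Your proposal is correct and follows essentially the same route as the paper: for $(1)\Rightarrow(2)$ run the polynomial fpt-algorithm to completion (which is permitted because a kernelization here may use time polynomial in $|I|$ \emph{and} $\kappa(I)$) and output a fixed constant-size yes- or no-instance, and for $(2)\Rightarrow(1)$ compose the kernelization with a constant-time lookup over the finitely many instances of size at most the constant kernel bound. The only minor imprecision is that in the second direction you let the kernelization run in time polynomial in $|I|$ alone, whereas the paper allows time polynomial in $|I|$ and $\kappa(I)$; this does not affect the conclusion, since such a bound is still of the form $f(\kappa(I))\cdot|I|^{c}$ with $f$ a polynomial.
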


\begin{proof}
(1) $\Rightarrow$ (2): Let $(\Pi,\kappa)\in\pfpt$ and $A$ be a polynomial fpt-algorithm 
with respect to $\kappa$ which runs on input $I$ in time 
$\kappa(I)^d\cdot |I|^{d'}$ for two constants $d$ and $d'$. W.l.o.g.~we assume
that there is one constant size yes-instance $I_0$ and one 
constant size no-instance $I_1$ of $\Pi$. 
We run $A$ on input $I$ and instead of deciding we transform the input 
to the yes- or no-instance of bounded size. This leads to a kernel of constant size
and the algorithm uses polynomial time in $|I|$ and $\kappa(I)$.

(2) $\Rightarrow$ (1): If $(\Pi,\kappa)$ has a kernel of size $c\in \bigo(1)$, we can
solve the problem on input $I$ as follows. First we compute the kernel 
$(I',\kappa(I'))$ in polynomial time
w.r.t. $|I|$ and $\kappa(I)$. Then we check, whether the instance $I'$ belongs to the set of
yes-instances of size at most $c$, which does not dependent on the input.
\end{proof}

The special case that we take the parameter in unary, i.e. $|I|+\kappa(I)\in\Theta(|I|)$, 
was mentioned in \cite{Bod09}. Then for some parameterized problem $(\Pi,\kappa)$ 
it holds that $\Pi$ belongs to $\p$
if and only if it has a kernel of size $\bigo(1)$.

By Theorem \ref{th-me} and Corollary \ref{cor-pfpt-p} we 
obtain the following result.

\begin{corollary}\label{cor-pfpt-p2}
Let $(\Pi,\kappa)$ be some parameterized problem and $c$ be some constant 
such that $\Pi$ is NP-hard and for
every instance $I$ of $\Pi$ it holds $\kappa(I)\in \bigo(|I|^c)$. 
Then there is no kernel of $\bigo(1)$ size with respect to $\kappa$ 
for $(\Pi,\kappa)$, unless  $\p=\np$.
\end{corollary}

We have shown that every $(\Pi,\kappa)\in \pfpt$ has for every instance
$I$ a kernel of size $\kappa(I)^{\bigo(1)}$ 
using a kernelization of running time $|I|^{\bigo(1)}$. 
Further we have shown that every $(\Pi,\kappa)\in \pfpt$ has for every instance
$I$  a kernel of size $\bigo(1)$ 
using a kernelization of running time $(|I|+\kappa(I))^{\bigo(1)}$. 
We want to have a closer look at the differences between these two types of
kernelizations.

\begin{theorem}\label{th-me1} For every
parameterized problem $(\Pi,\kappa)$ the following properties
are equivalent:
\begin{enumerate}[(1)]
\item $(\Pi,\kappa)\in\fpt$
\item $\Pi$ is decidable and $(\Pi,\kappa)$ has a kernelization  of running time 
$|I|^{\bigo(1)}$.
\item $\Pi$ is decidable and $(\Pi,\kappa)$ has a kernelization  of running time 
$(|I|+\kappa(I))^{\bigo(1)}$.
\end{enumerate}
\end{theorem}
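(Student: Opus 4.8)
The plan is to prove the three-way equivalence by showing a cycle of implications, exploiting the results already established in the excerpt. The statement concerns three conditions: membership in $\fpt$, the existence of a kernelization running in time $|I|^{\bigo(1)}$, and the existence of a kernelization running in time $(|I|+\kappa(I))^{\bigo(1)}$. First I would observe that the equivalence of (1) and the mere existence of \emph{some} kernelization is exactly Theorem~\ref{th-fg}, so the work here is purely about pinning down the \emph{running time} of that kernelization in the two stated forms. The natural strategy is to prove $(1)\Rightarrow(2)\Rightarrow(3)\Rightarrow(1)$, since this lets each implication reuse the constructions already developed.

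For $(1)\Rightarrow(2)$, I would invoke the proof of Theorem~\ref{th-ker} essentially verbatim. Given an fpt-algorithm $A$ of running time $f(\kappa(I))\cdot|I|^c$, the kernelization $A'$ simulates $|I|^{c+1}$ steps of $A$, outputting a constant-size yes- or no-instance if $A$ halts, and otherwise outputting $(I,\kappa(I))$ itself (which is then small because $|I|\le f(\kappa(I))$). Crucially, $A'$ runs in time $\bigo(|I|^{c+1})$, which is of the form $|I|^{\bigo(1)}$ and does \emph{not} reference $\kappa(I)$ at all. The implication $(2)\Rightarrow(3)$ is immediate and trivial: any kernelization running in time $|I|^{\bigo(1)}$ a fortiori runs in time $(|I|+\kappa(I))^{\bigo(1)}$, since $|I|\le|I|+\kappa(I)$ and the bound only grows.

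For the closing implication $(3)\Rightarrow(1)$, the idea is the standard argument that a decidable problem with a kernelization lies in $\fpt$, now with attention to the time bound. Given a kernelization of running time $(|I|+\kappa(I))^{\bigo(1)}$ producing a kernel $(I',\kappa(I'))$ with $|I'|\le f'(\kappa(I))$, I would first run the kernelization and then decide the reduced instance $I'$ by brute force using the decidability of $\Pi$; since $|I'|$ is bounded by a function of $\kappa(I)$ alone, the decision step costs at most some computable $g(\kappa(I))$. The total running time is then $(|I|+\kappa(I))^{\bigo(1)}+g(\kappa(I))$, and one must massage this into the fpt form $h(\kappa(I))\cdot|I|^{O(1)}$. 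The one subtlety, and the main obstacle, is the appearance of $\kappa(I)$ inside the \emph{polynomial part} of the running time: a term like $\kappa(I)^{\bigo(1)}$ is not on its face of the form $h(\kappa(I))\cdot|I|^{O(1)}$ with $h$ computable. The resolution is routine but worth stating: by the equivalence $f(\kappa)\cdot|I|^c$ versus $f(\kappa)+|I|^c$ recalled in the definition of fpt-algorithms, any mixed polynomial $(|I|+\kappa(I))^a$ expands into a sum of terms each of which is either absorbed into a purely $\kappa$-dependent additive function or bounded by $|I|^a$, so the whole expression fits the additive fpt form and hence $(\Pi,\kappa)\in\fpt$.

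\begin{proof}
We show $(1)\Rightarrow(2)\Rightarrow(3)\Rightarrow(1)$.

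$(1)\Rightarrow(2)$: This is precisely the construction in the proof of Theorem~\ref{th-ker}. The kernelization $A'$ there runs in time $\bigo(|I|^{c+1})\subseteq |I|^{\bigo(1)}$, independently of $\kappa(I)$.

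$(2)\Rightarrow(3)$: Immediate, since $|I|\leq |I|+\kappa(I)$ implies $|I|^{\bigo(1)}\subseteq (|I|+\kappa(I))^{\bigo(1)}$.

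$(3)\Rightarrow(1)$: Let $K$ be a kernelization for $(\Pi,\kappa)$ of running time $(|I|+\kappa(I))^{a}$ for some constant $a$, which on input $(I,\kappa(I))$ outputs a kernel $(I',\kappa(I'))$ with $|I'|\leq f'(\kappa(I))$. Since $\Pi$ is decidable, there is an algorithm $D$ that decides membership in $\Pi$; let $g(\kappa(I))$ bound the time $D$ needs on any instance of size at most $f'(\kappa(I))$, which is a computable function of $\kappa(I)$ alone. The algorithm that first runs $K$ and then $D$ on $I'$ decides $\Pi$ correctly, by the first property of a kernelization, in time at most $(|I|+\kappa(I))^{a}+g(\kappa(I))$. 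Expanding $(|I|+\kappa(I))^{a}$ by the binomial theorem yields a sum of terms $\binom{a}{i}|I|^{i}\kappa(I)^{a-i}$ with $0\leq i\leq a$; bounding $\binom{a}{i}\leq 2^{a}$ and $|I|^{i}\leq |I|^{a}$, each term is at most $2^{a}\kappa(I)^{a-i}\cdot|I|^{a}$, so the whole sum is at most $h(\kappa(I))\cdot|I|^{a}$ with $h(\kappa(I))=2^{a}(a+1)\kappa(I)^{a}$. Adding $g(\kappa(I))$ keeps the bound of the fpt form, so $(\Pi,\kappa)\in\fpt$.
\end{proof}
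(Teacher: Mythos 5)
Your proof is correct, but it takes a different route from the paper: the paper disposes of this theorem entirely by citation, attributing $(1)\Leftrightarrow(2)$ to the proof of Theorem 1.39 in Flum--Grohe and $(1)\Leftrightarrow(3)$ to the proof of Theorem 1 in Bodlaender's kernelization survey, whereas you give a self-contained cyclic argument $(1)\Rightarrow(2)\Rightarrow(3)\Rightarrow(1)$. Your decomposition is arguably the more economical one: the implication $(1)\Rightarrow(2)$ is exactly the construction the paper already carried out in Theorem~\ref{th-ker} (and you correctly note that the simulation-based kernelization there never touches $\kappa(I)$, which is the whole point of condition (2)), the step $(2)\Rightarrow(3)$ is free, and only $(3)\Rightarrow(1)$ needs new work. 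Your handling of the closing implication is sound, including the binomial expansion that absorbs the $\kappa(I)$-dependence of the kernelization's running time into the multiplicative function $h$; this is precisely the content of Bodlaender's observation that allowing the kernelization polynomial time in $|I|+\kappa(I)$ does not enlarge the class beyond $\fpt$. One small caveat: your claim that $g(\kappa(I))$ is a \emph{computable} function of $\kappa(I)$ tacitly assumes that the size bound $f'$ in the definition of a kernelization is computable, which the paper's definition does not explicitly require; this is a standard convention-level issue shared by the cited textbook proofs and not a defect specific to your argument. What the paper's citation-based approach buys is brevity and an explicit pointer to where the two nontrivial running-time regimes were first isolated; what yours buys is a proof that is verifiable within the paper itself and that makes visible why the $|I|^{\bigo(1)}$ and $(|I|+\kappa(I))^{\bigo(1)}$ regimes coincide for plain kernels while (as the paper notes after Theorem~\ref{th-me}) they genuinely differ for constant-size kernels.
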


\begin{proof}
(1) $\Leftrightarrow$ (2): Proof of Theorem 1.39 in \cite{FG06}.
(1) $\Leftrightarrow$ (3): Proof of Theorem 1 in \cite{Bod09}.
\end{proof}

That is, the existence of a kernel found by a kernelization  of running time which is 
polynomial in $|I|$ is equivalent to the existence of a kernelization of running time 
which is polynomial in $|I|$ and $\kappa(I)$. It remains open whether this is also 
the case for the existence of polynomial kernels.

For the case of constant kernels the proof of Theorem \ref{th-me} uses
a kernelization  of running time which is polynomial in $|I|$ and $\kappa(I)$. 
This is really necessary, which can be seen as follows. The {\sc Knapsack} problem
parameterized by the capacity $c$, $c$-{\sc KP} for short, is in $\pfpt$
by Theorem \ref{maintheorem}. But if there would be a kernel for  $c$-{\sc KP} of size $\bigo(1)$
found by a kernelization  of running time  $|I|^{\bigo(1)}$ then the part (2) $\Rightarrow$ (1) of the proof
of Theorem \ref{th-me} implies a polynomial algorithm for {\sc Knapsack}.

A very useful theorem for finding 
kernels of knapsack problems with respect to parameter number of items $n$ is 
the following result of Frank and Tardos on
compressing large integer values to smaller ones.\footnote{We use the notations
appearing in the knapsack problems when stating the result of \cite{FT87}.}

\begin{theorem}[\cite{FT87}]\label{th-ft87}
Given a vector $(s_1,\dots, s_n, c)\in\Q^{n+1}$ and an integer $\ell\in\N_0$,
there exists an algorithm that computes a vector $(\tilde{s}_1,\dots,
\tilde{s}_n, \tilde{c})\in\Z^{n+1}$ in polynomial time, such that\footnote{Please note that in Theorem \ref{th-ft87} for some number $x$, the notation $|x|$ gives its absolute value.} 
$$\max\{
|\tilde{c}|, |\tilde{s}_j| : 1\leq
j\leq n\}\leq 2^{4(n+1)^3} (\ell+2)^{(n+1)(n+3)}$$ and
$$\sgn((s_1,\dots, s_n, c)\cdot (x_1,\dots, x_n,
x_{n+1}))=\sgn((\tilde{s}_1,\dots, \tilde{s}_n, \tilde{c})\cdot (x_1,\dots,
x_n, x_{n+1}))$$ 
for all $(x_1,\dots, x_n, x_{n+1})\in \Z^{n+1}$ with
$\sum_{j=1}^{n+1} |x_j| \leq \ell+1$.
\end{theorem}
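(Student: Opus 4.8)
The plan is to obtain the statement from simultaneous Diophantine approximation, made algorithmic by the Lenstra--Lenstra--Lov\'asz lattice basis reduction (LLL); this is the route of Frank and Tardos. Write $m=n+1$, collect the input as a single rational vector $w=(s_1,\dots,s_n,c)\in\Q^m$, and let $N=\ell+1$ bound the $\ell_1$-norm of the admissible test vectors. After clearing denominators and rescaling it suffices to treat $w$ with $\|w\|_\infty=1$, and the target is an integer vector $\tilde w$ of infinity-norm below the stated bound satisfying $\sgn(w\cdot x)=\sgn(\tilde w\cdot x)$ for all integer $x$ with $\|x\|_1\le N$.

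The engine is the following polynomial-time consequence of LLL: for $\alpha\in\Q^m$ and $\epsilon\in(0,1)$ one can compute integers $p_1,\dots,p_m$ and $q$ with $1\le q\le 2^{m(m+1)/4}\epsilon^{-m}$ and $|q\alpha_i-p_i|\le\epsilon$ for every $i$. Applying this to $w$ gives $|(p-qw)\cdot x|\le\epsilon\|x\|_1\le\epsilon N$; hence whenever $|w\cdot x|>\epsilon N/q$ the sign of $p\cdot x$ equals that of $qw\cdot x=q\,(w\cdot x)$, and dividing by $q>0$ transfers the sign to $\tilde w=p$. Thus a \emph{single} approximation already settles every test vector whose dot product with $w$ is bounded away from zero by $\epsilon N/q$.

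The real difficulty, and the heart of the Frank--Tardos argument, is that $w\cdot x$ may be arbitrarily close to zero for some admissible $x$, and the smallest nonzero such value depends on the \emph{magnitudes} of the entries of $w$, not merely on $n$ and $\ell$. A finely tuned single $\epsilon$ would therefore produce a vector whose size depends on the encoding length of the input, whereas the theorem demands a bound in $n$ and $\ell$ alone. I would overcome this exactly as Frank and Tardos do, by an iterated multi-scale scheme: extract the coarse numerator vector $p$, pass to the renormalized residual $w-p/q$, and repeat, accumulating the successive numerators with suitable integer weights into one integer vector $\tilde w$. Each round correctly resolves the relative order of all entry-combinations separated at the current scale, so the clusters of still-unresolved combinations refine at every step; since there are only $m$ coordinates this can happen at most $m$ times, after which all admissible $x$ with $w\cdot x\ne 0$ are separated and all $x$ with $w\cdot x=0$ are mapped to $\tilde w\cdot x=0$.

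The main obstacle is therefore twofold and intertwined: achieving \emph{exact} sign preservation near zero, and keeping the output size independent of the input encoding. Both are handled by bounding the recursion depth by $m$ and choosing each per-round precision $\epsilon$ on the order of $N^{-\Theta(m)}$. The denominators multiply across the $O(m)$ rounds and the LLL bound $2^{m(m+1)/4}\epsilon^{-m}$ compounds accordingly, which produces entries of order $2^{\Theta(m^3)}N^{\Theta(m^2)}$. Matching these accumulated constants against the claimed bound $2^{4(n+1)^3}(\ell+2)^{(n+1)(n+3)}$ is the delicate quantitative step; the remaining verifications---that $\tilde w$ is integral and that the sign transfer survives division by the positive product of denominators---are routine.
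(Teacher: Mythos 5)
First, a point of reference: the paper does not prove this statement at all --- it is quoted (up to renaming of variables) from Frank and Tardos \cite{FT87} and used as a black box, so there is no in-paper proof to compare against. Your proposal correctly identifies the route taken in \cite{FT87}: simultaneous Diophantine approximation made effective by LLL, iterated over at most $n+1$ scales on successive residuals, with the per-round numerator vectors recombined into a single integer vector. As a description of the strategy it is accurate.

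As a proof, however, it has genuine gaps, and they sit exactly at the points that make the theorem nontrivial rather than at routine verifications. (1) The recombination step is left unspecified: you say the successive numerators are accumulated ``with suitable integer weights,'' but the choice of those weights is what simultaneously guarantees integrality of $\tilde w$, that the sign contribution of an earlier round dominates all later rounds on every admissible $x$, and the final magnitude bound; without it one cannot verify $\sgn(\tilde w\cdot x)=\sgn(w\cdot x)$ in the case where different rounds would individually disagree. (2) Exact preservation of zeros is asserted but not argued: the theorem requires $w\cdot x=0\Rightarrow\tilde w\cdot x=0$ (and conversely), and a ``small symmetric error around zero'' after finitely many rounds would break the $\sgn$ equality in both directions; you need the recursion to terminate with the rational dependencies of $w$ on admissible $x$ captured exactly, which is an argument about the structure of the residuals, not about precision. (3) The quantitative step you explicitly defer is precisely the content of the stated bound: ``entries of order $2^{\Theta(m^3)}N^{\Theta(m^2)}$'' is consistent with $2^{4(n+1)^3}(\ell+2)^{(n+1)(n+3)}$ but does not establish it, and the paper later uses these explicit constants to derive the kernel sizes in Theorems \ref{maintheorem2}, \ref{maintheorem2-dkp}, and \ref{maintheorem2-mkp}, so the constants cannot be waved through. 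In short: right approach, but the three acknowledged ``delicate'' steps are the proof.
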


By choosing vectors $(x_1,\dots, x_n, x_{n+1})=(1, 0, \dots),$ $\dots,$ $(0,
\dots, 0, 1)$ we immediately see that for each
$(s_1,\dots, s_n, c)\in\N^{n+1}$ there also is $(\tilde{s}_1,\dots,
\tilde{s}_n, \tilde{c})\in\N^{n+1}$.
This result can be used to equivalently replace equations and inequalities for $\odot \in\{=,\leq,\geq\}$
$$ s_1 x_1+s_2x_2+\dots+s_n x_n - c \odot 0$$
with
$$ \tilde{s}_1 x_1+\tilde{s}_2x_2+\dots+\tilde{s}_n x_n - \tilde{c} \odot
0$$
by choosing vector $(x_1,\dots,x_{n+1})=(x_1,\dots,x_n,-1)$ and $\ell$ such that
$\sum_{j=1}^{n} |x_j| \leq \ell$.

\subsection{Approximation Algorithms}\label{sec-approx}

Let $\Pi$ be some optimization problem and $I$ be some
instance of $\Pi$. By $OPT(I)$ we denote the value
of an optimal solution for $\Pi$ on input $I$. An {\em approximation
algorithm} $A$ for $\Pi$ is an algorithm which returns
a feasible solution for $\Pi$.
The value of the solution of $A$ on input $I$ is denoted
by $A(I)$.  
An approximation algorithm $A$ has {\em relative performance guarantee $\ell$}, 
if 
$$\max\bigg\{\frac{A(I)}{OPT(I)},\frac{OPT(I)}{A(I)} \bigg\}\leq \ell$$
holds for every instance 
$I$ of $\Pi$.

A {\em polynomial-time approximation scheme (PTAS)} for $\Pi$ is 
an algorithm $A$, for which the input consists of an instance of $\Pi$ and 
some $\epsilon$, $0<\epsilon <1$,
such that for every fixed $\epsilon$  algorithm  $A$ is a polynomial time 
approximation algorithm with relative performance  guarantee $1+\epsilon$.
An {\em efficient polynomial-time approximation scheme (EPTAS)} 
is a PTAS running in time $f(\nicefrac{1}{\epsilon})\cdot|I|^{c}$,
for some  computable function $f$ and some constant $c\in \IN $.
A {\em fully polynomial-time approximation scheme (FPTAS)} is  a PTAS 
running in time $(\nicefrac{1}{\epsilon})^{c}\cdot |I|^{c'}$, for two constants
$c$ and $c'$.
Obviously every FPTAS is an EPTAS and every EPTAS is a PTAS.

Next we recall relations between the existence of approximation schemes for 
optimization problems and fixed-parameterized algorithms.

Given some optimization problem $\Pi$ the corresponding decision problem of $\Pi$ is 
obtained by adding an integer $k$ to the input of $\Pi$ and changing the task 
into the question, whether the size of an optimal solution is at least 
(for maximization problems) or at most (for minimization problems) $k$. 
By choosing the threshold value $k$ as a parameter we obtain the so-called 
{\em standard parameterization} $k$-$\Pi$ of the so-defined decision problem.

\begin{desctight}
\item[Name] $k$-$\Pi$
\item[Instance] An instance $I$ of $\Pi$ and an integer $k$.
\item[Parameter] $k$
\item[Question] Is there a solution such that $OPT(I)\geq k$ (for a maximization problem
$\Pi$) or $OPT(I)\leq k$ (for a minimization problem $\Pi$)?
\end{desctight}

There are two useful connections between the existence of special 
PTAS for some optimization problem $\Pi$ and fpt-algorithms
for $k$-$\Pi$.

\begin{theorem}[\cite{CT97}, Proposition 2 in \cite{Mar08}]\label{theorem-fg}
If some optimization problem $\Pi$ has an EPTAS with
running time $\bigo(|I|^c \cdot f(\nicefrac{1}{\epsilon}))$, then 
there is an fpt-algorithm that solves 
the standard parameterization  $k$-$\Pi$ of the corresponding decision problem 
in time $\bigo(|I|^c \cdot f(2k))$.
\end{theorem}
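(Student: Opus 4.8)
The plan is to invoke the EPTAS exactly once, with a precision $\epsilon$ chosen as a function of the parameter $k$, and then to decide $k$-$\Pi$ by a single comparison of the returned value $A(I)$ against the threshold $k$. The crucial feature I will exploit is that the problems under consideration are integer-valued, so $OPT(I)$ is an integer; the entire argument then reduces to choosing $\epsilon$ small enough that the relative performance guarantee $1+\epsilon$ cannot span a full unit gap. That is, I want the approximate value to separate the case $OPT(I)\geq k$ from $OPT(I)\leq k-1$ in the maximization setting, and the case $OPT(I)\leq k$ from $OPT(I)\geq k+1$ in the minimization setting.

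Concretely, I would set $\epsilon=\nicefrac{1}{2k}$, which satisfies $0<\epsilon<1$ for every $k\geq 1$, so the EPTAS is applicable. Running it produces, in time $\bigo(|I|^c\cdot f(\nicefrac{1}{\epsilon}))=\bigo(|I|^c\cdot f(2k))$, a feasible solution of value $A(I)$ obeying $\max\{\nicefrac{A(I)}{OPT(I)},\nicefrac{OPT(I)}{A(I)}\}\leq 1+\epsilon$. The fpt-algorithm for $k$-$\Pi$ then answers \emph{yes} if and only if $A(I)\geq k$ in the maximization case, respectively $A(I)\leq k$ in the minimization case. This already gives exactly the claimed running time, since the only dependence on $k$ enters through the substitution $\nicefrac{1}{\epsilon}=2k$ inside $f$.

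For correctness in the maximization case I would argue as follows: since $A(I)\leq OPT(I)$, an optimum $OPT(I)\leq k-1$ forces $A(I)\leq k-1<k$, so the algorithm correctly rejects; conversely, if $OPT(I)\geq k$, then $A(I)\geq \nicefrac{OPT(I)}{1+\epsilon}\geq k/(1+\nicefrac{1}{2k})=\nicefrac{2k^2}{(2k+1)}$, and a one-line calculation shows this exceeds $k-1$, so by integrality $A(I)\geq k$ and the algorithm accepts. The minimization case is symmetric: here $A(I)\geq OPT(I)$, so $OPT(I)\geq k+1$ yields $A(I)\geq k+1$, whereas $OPT(I)\leq k$ gives $A(I)\leq k(1+\epsilon)=k+\nicefrac{1}{2}<k+1$, hence $A(I)\leq k$ after rounding.

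I expect the only delicate point to be the separating inequality itself together with the correct use of integrality: the choice $\epsilon=\nicefrac{1}{2k}$ is the natural one precisely because it keeps the multiplicative slack strictly below one unit on both sides while preserving $\nicefrac{1}{\epsilon}=2k$, matching the argument of $f$ demanded by the statement. A minor technicality worth checking is the degenerate situation $OPT(I)=0$, where the quotients in the performance guarantee are vacuous; such instances are decided trivially and do not interfere with the estimates above.
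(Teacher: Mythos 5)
Your proposal is correct and follows exactly the route the paper indicates: it invokes the given approximation scheme once with $\epsilon=\nicefrac{1}{2k}$ and decides $k$-$\Pi$ by comparing the returned value with the threshold, which is precisely the "main idea" the paper attributes to the proofs of Theorems \ref{theorem-fg} and \ref{theorem-cc97}. The separating inequalities and the integrality argument you supply are the standard details behind that idea, so there is nothing further to add.
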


\begin{theorem}[\cite{CC97}]\label{theorem-cc97}
If some optimization problem $\Pi$ has an  FPTAS with
running time $\bigo(|I|^c \cdot (\nicefrac{1}{\epsilon})^{c'})$, then 
there is a polynomial fpt-algorithm that solves 
the standard parameterization  $k$-$\Pi$ of the corresponding decision problem 
in time $\bigo(|I|^c \cdot (2k)^{c'})$.  
\end{theorem}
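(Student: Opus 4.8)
The plan is to mirror the structure of Theorem \ref{theorem-fg}, adapting its argument to the stronger FPTAS hypothesis in order to obtain a \emph{polynomial} fpt-algorithm rather than merely an fpt-algorithm. The key observation is that an FPTAS, being a special case of an EPTAS with $f(\nicefrac{1}{\epsilon})=(\nicefrac{1}{\epsilon})^{c'}$, already delivers the fpt-algorithm of Theorem \ref{theorem-fg}; the only thing that remains is to check that the resulting running time is polynomial in the parameter $k$, which is exactly what distinguishes $\pfpt$ from $\fpt$.

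First I would treat the maximization case (the minimization case being symmetric). Given an instance $(I,k)$ of the decision problem $k$-$\Pi$, I would run the FPTAS $A$ on $I$ with the error parameter set to $\epsilon = \nicefrac{1}{2k}$. Since $A$ is an FPTAS, this call takes time $\bigo(|I|^c \cdot (\nicefrac{1}{\epsilon})^{c'}) = \bigo(|I|^c \cdot (2k)^{c'})$, and it returns a feasible solution of value $A(I)$ satisfying the relative performance guarantee $\max\{\nicefrac{A(I)}{OPT(I)}, \nicefrac{OPT(I)}{A(I)}\} \leq 1+\epsilon$. The decision rule is then to accept if and only if $A(I) > k - 1$, i.e. $A(I) \geq k$ after rounding to integers.

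The main step is to verify that this rule is correct. For a maximization problem with integer-valued solutions, the guarantee gives $A(I) \geq \nicefrac{OPT(I)}{(1+\epsilon)}$. If $OPT(I) \geq k$, then one computes $A(I) \geq \nicefrac{OPT(I)}{(1+\nicefrac{1}{2k})} = \nicefrac{OPT(I)\cdot 2k}{(2k+1)}$, and since $OPT(I)\geq k$ this yields $A(I) > k-1$, so the algorithm accepts. Conversely, if $OPT(I) < k$, then $A(I) \leq OPT(I) < k$ by feasibility, so the algorithm rejects. Thus the decision is exactly correct, and I expect this numerical threshold estimate -- ensuring the $1+\nicefrac{1}{2k}$ factor cannot push an optimum of value $k$ below the integer boundary $k-1$ -- to be the only delicate point, as it is the place where the precise choice $\epsilon=\nicefrac{1}{2k}$ is forced.

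Finally I would collect the running time: a single FPTAS call with $\epsilon=\nicefrac{1}{2k}$ costs $\bigo(|I|^c \cdot (2k)^{c'})$, and the comparison against the threshold is negligible, so the total running time is $\bigo(|I|^c \cdot (2k)^{c'})$. Since $(2k)^{c'}$ is a polynomial in the parameter $k$, this is precisely a polynomial fpt-algorithm with respect to $k$, witnessing that $k$-$\Pi \in \pfpt$. The anticipated obstacle is purely in the threshold analysis of the maximization case; everything else is a direct substitution into the FPTAS running-time bound.
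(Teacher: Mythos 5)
Your proposal is correct and follows exactly the route the paper indicates for this cited result: it only sketches the idea (run the approximation scheme with $\epsilon=\nicefrac{1}{2k}$ and observe that the running time $\bigo(|I|^c\cdot(2k)^{c'})$ is polynomial in $k$), deferring the details to \cite{CC97}. Your threshold analysis showing $A(I)\geq OPT(I)\cdot\frac{2k}{2k+1}>k-1$ whenever $OPT(I)\geq k$ is precisely the standard integrality argument that makes the sketch rigorous.
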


The main idea in the proofs of Theorem \ref{theorem-fg} and Theorem \ref{theorem-cc97}
is that the given approximation scheme for $\epsilon=\nicefrac{1}{2k}$ for optimization problem $\Pi$
leads to an fpt-algorithm that solves 
the standard parameterization of the corresponding decision problem $k$-$\Pi$
with the given running time. 
For so-called scalable optimization problems  (cf.~\cite{CHKX07}) the reverse
direction of Theorem \ref{theorem-cc97} also holds true.

By the definition, the existence of an approximation scheme for some 
optimization problem $\Pi$ applies the fundamental parameter 
$\kappa(I)=\nicefrac{1}{\epsilon}$  measuring the goodness of approximation.
Every PTAS provides for a fixed error  $\epsilon$ a polynomial time 
algorithm. Since these algorithms are not very practical, the question
arises whether $\nicefrac{1}{\epsilon}$ can be taken out of the exponent
of the input size. This is the case if the PTAS is even an EPTAS.
A formal method to combine the error bound $\epsilon$ and decision problems 
is the so-called gap version of an optimization problem,
which was introduced by Marx in \cite{Mar08}.

\subsection{Pseudo-polynomial Algorithms}

Let $\Pi$ be some optimization or decision problem and ${\mathcal I}$ 
the set of all instances of $\Pi$. For some $I\in {\mathcal I}$ we 
denote by $\max(I)$
the value of the largest number occurring in $I$.
An algorithm $A$ is  {\em pseudo-polynomial}, if
there is a polynomial $p: \IN \times \IN \to \IN$ 
such that for every instance $I$ the running time
of $A$ on $I$ is at most $p(|I|,\max(I))$, see \cite{GJ79}.
A problem $\Pi$ is {\em pseudo-polynomial} if it can be solved 
by a  pseudo-polynomial algorithm.

\begin{definition}[\cite{GJ79}]\label{def-str-np-h}
A problem is {\em strongly NP-hard}, if it remains NP-hard, if 
all of its numbers are bounded by a polynomial in the length 
of the input.
\end{definition}

\begin{theorem}[\cite{ACGKMP99}]\label{th-strongnp-pseudo}
If some problem $\Pi$  is strongly NP-hard, then $\Pi$  is not 
pseudo-polynomial.
\end{theorem}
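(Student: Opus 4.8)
The plan is to prove the contrapositive. Suppose $\Pi$ is pseudo-polynomial; I will show that $\Pi$ cannot be strongly NP-hard unless $\p=\np$. Since $\Pi$ is pseudo-polynomial, there is a polynomial $p$ and an algorithm $A$ whose running time on any instance $I$ is bounded by $p(|I|,\max(I))$, where $\max(I)$ denotes the largest number occurring in $I$. The key idea is to restrict attention to the subclass of instances singled out in Definition~\ref{def-str-np-h}, namely those instances in which all numbers are bounded by a polynomial in the input length.

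First I would fix the polynomial $q$ from the definition of strong NP-hardness, so that the restricted problem $\Pi_q$ consists of exactly those instances $I$ with $\max(I)\leq q(|I|)$. By Definition~\ref{def-str-np-h}, if $\Pi$ is strongly NP-hard then $\Pi_q$ is already NP-hard. Now I would run the pseudo-polynomial algorithm $A$ on instances of $\Pi_q$. For such an instance the running time is at most $p(|I|,\max(I))\leq p(|I|,q(|I|))$. Since $p$ is a polynomial in two arguments and $q$ is a polynomial in one, the composition $p(|I|,q(|I|))$ is a polynomial in $|I|$ alone. Hence $A$ solves $\Pi_q$ in polynomial time, giving a polynomial-time algorithm for an NP-hard problem, which forces $\p=\np$.

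The argument therefore shows: if $\Pi$ is both strongly NP-hard and pseudo-polynomial, then $\p=\np$. Stated contrapositively under the standard assumption $\p\neq\np$, a strongly NP-hard problem is not pseudo-polynomial, which is the claim. I expect the only delicate point to be the composition step: one must be careful that $\max(I)$ really is bounded by a polynomial in $|I|$ on the restricted class, so that substituting $q(|I|)$ into the second argument of $p$ genuinely collapses the bound to a single-variable polynomial in $|I|$. This is exactly where the definition of strong NP-hardness is used, and it is the crux of the whole proof; everything else is routine. (Of course, since the result is cited from \cite{ACGKMP99}, the author's proof may simply invoke it, but the self-contained argument above is the natural one.)
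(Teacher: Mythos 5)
Your argument is correct and is the standard one; note, however, that the paper does not prove this statement at all --- it is cited verbatim from \cite{ACGKMP99}, so there is no in-paper proof to compare against. Your self-contained derivation (fix the polynomial $q$ witnessing strong NP-hardness, restrict to the subclass $\Pi_q$ of instances with $\max(I)\leq q(|I|)$, observe that the pseudo-polynomial bound $p(|I|,\max(I))\leq p(|I|,q(|I|))$ collapses to a polynomial in $|I|$ alone, and conclude that an NP-hard problem would lie in $\p$) is exactly the textbook proof of this fact. The one substantive point you raise is worth emphasizing: the implication only holds under the assumption $\p\neq\np$, a caveat that the theorem as stated in the paper suppresses but that is present in the standard formulation (and implicit in the paper's use of the result, e.g.\ in Theorems \ref{th-multi-pseudo} and \ref{th-m-pseudo}). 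Your identification of the composition step --- that $\max(I)$ must genuinely be bounded by $q(|I|)$ on the restricted class before substituting into the second argument of $p$ --- as the crux is also accurate; that is precisely where the definition of strong NP-hardness does its work.
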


The notation of pseudo-polynomial algorithms can be
carried over to parameterized algorithms \cite{Hro04a}. For an instance
$I$ of some problem $\Pi$ the function $\val(I)$ is defined by the
maximum length of the binary encoding of all numbers in $I$. 
Since the binary coding of an integer $w$ has length $1+ \lfloor\log_2(w)\rfloor$, 
the relation $\max(I)\leq 2^{\val(I)}$ holds and function $\val$
is a parameter for $\Pi$. Further, if there is a pseudo-polynomial
algorithm $A$ for $\Pi$ then there is a polynomial $p$, such that for every
instance $I$ of $\Pi$ the running time of $A$ can be bounded by
$$\bigo(p(|I|,\max(I)))\subseteq \bigo(p(|I|,2^{\val(I)})).$$
This implies that there are constants $c_1$ and $c_2$ such that 
the running time of $A$ can be bounded by $2^{c_1\cdot \val(I)}\cdot |I|^{c_2}$
and thus $A$ is an fpt-algorithm with respect to parameter $\val(I)$.

\begin{theorem}\label{th-val-para}
For every pseudo-polynomial problem $\Pi$
there is an fpt-algo\-ri\-thm that solves  $\val$-$\Pi$
in time $\bigo(2^{c_1\cdot \val(I)}\cdot |I|^{c_2})$.
\end{theorem}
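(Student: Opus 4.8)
The plan is to reinterpret the running-time bound of the given pseudo-polynomial algorithm directly as an fpt running time with respect to the parameter $\val$. Since $\Pi$ is pseudo-polynomial, by definition there is an algorithm $A$ and a polynomial $p$ such that on every instance $I$ the running time of $A$ is at most $p(|I|,\max(I))$. The key move, already indicated in the paragraph preceding the statement, is to eliminate the dependence on the \emph{value} $\max(I)$ in favour of the \emph{bit-length} parameter $\val(I)$ via the inequality $\max(I)\leq 2^{\val(I)}$, which holds because the binary encoding of an integer $w$ has length $1+\lfloor\log_2(w)\rfloor$.

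First I would substitute this bound into the running time to obtain $p(|I|,\max(I))\leq p(|I|,2^{\val(I)})$, using that after replacing $p$ by an upper bound with nonnegative coefficients (which only enlarges the running-time estimate) $p$ is monotone in its second argument. Next I would collapse the polynomial into the desired shape: writing $d_1$ for the degree of $p$ in its first argument and $d_2$ for the degree in its second, every monomial $|I|^i(2^{\val(I)})^j$ with $i\leq d_1$ and $j\leq d_2$ is dominated by $|I|^{d_1}\cdot 2^{d_2\cdot \val(I)}$ whenever $|I|\geq 1$ and $\val(I)\geq 1$. Hence $p(|I|,2^{\val(I)})\leq C\cdot |I|^{d_1}\cdot 2^{d_2\cdot\val(I)}$, where $C$ is the sum of the finitely many coefficients of $p$. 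Setting $c_1=d_2$ and $c_2=d_1$ then yields the claimed bound $\bigo(2^{c_1\cdot\val(I)}\cdot |I|^{c_2})$.

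Finally I would observe that this is exactly the form $f(\val(I))\cdot|I|^{c_2}$ required by the definition of an fpt-algorithm, with the computable function $f(\val(I))=C\cdot 2^{c_1\cdot\val(I)}$; since $\val$ is a polynomial-time computable parameterization for $\Pi$ as noted above, $A$ witnesses that $\val$-$\Pi$ belongs to $\fpt$ with the stated running time. I do not expect a genuine obstacle here: the only points needing care are the trivial edge cases ($|I|\geq 1$ and $\val(I)\geq 1$, with instances having no numbers or only the number $0$ treated directly) that make the monomial-domination step valid. The conceptual heart of the argument is simply that being polynomial in the largest \emph{value} is the same as being exponential in the largest \emph{bit-length}, which is precisely why pseudo-polynomiality converts into fixed-parameter tractability with respect to $\val$ -- and, consistent with the earlier discussion, why this need not give a polynomial-time algorithm.
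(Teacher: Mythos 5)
Your argument is correct and is essentially the paper's own: the paper proves this theorem in the paragraph immediately preceding it, by substituting $\max(I)\leq 2^{\val(I)}$ into the pseudo-polynomial bound $p(|I|,\max(I))$ and collapsing the resulting polynomial in $|I|$ and $2^{\val(I)}$ into the form $2^{c_1\cdot\val(I)}\cdot|I|^{c_2}$. Your additional care about monotone coefficients and edge cases only makes explicit what the paper leaves implicit.
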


\section{Knapsack Problem}\label{sec-kp}

The simplest of all knapsack problems is defined as follows.

\begin{desctight}
\item[Name] {\sc Max Knapsack} ({\sc Max KP})

\item[Instance] A set $A=\{a_1,\ldots,a_n\}$ of $n$ items, 
for every item $a_j$, there 
is a size of $s_j$ and a profit of $p_j$. 
Further there is a capacity $c$ 
for the knapsack.

\item[Task] Find a subset $A'\subseteq A$ such that 
the total profit of $A'$  is maximized 
and 
the total size of $A'$ is at most $c$.
\end{desctight}

The {\sc Max Knapsack}  problem can be approximated very good, since it allows
an FPTAS \cite{IK75} and thus can be
regarded as one of the easiest hard problems.

In this paper, the parameters $n$, $p_j$, $s_j$, and $c$ are assumed to be positive integers, i.e.
they belong to the set $\{1,2,3,\ldots\}$.
Let $s_{\max}=\max_{1\leq j \leq n}s_j$ and $p_{\max}=\max_{1\leq j \leq n}p_j$. 
The same notations are also used for $\min$ instead of $\max$.
In order to avoid trivial solutions we assume that $s_{\max}\leq c$  and that
$\sum_{j=1}^{n}s_j>c$.

For some instance $I$ its size $|I|$ can be bounded by 
the number of items and the binary encoding of all numbers in $I$ (cf.~\cite{GJ79}).
$$
\begin{array}{lcl}
|I|& =    & n + \sum_{j=1}^{n}(1+\lfloor\log_2(s_j)\rfloor) +  \sum_{j=1}^{n}(1+\lfloor\log_2(p_j)\rfloor)+ 1+\lfloor\log_2(c)\rfloor\\
   &  \in & \bigo(n+ \sum_{j=1}^{n}\log_2(s_j) +  \sum_{j=1}^{n}\log_2(p_j)+ \log_2(c)) \\
   & =    & \bigo(n+ n\cdot \log_2(s_{\max}) +  n\cdot \log_2(p_{\max})+ \log_2(c)) \\
\end{array}
$$
The size of the input is important for the analysis of running times.

In order to show fpt-algorithms and kernelizations we frequently will 
use bounds on the size of the input and on the size of a
solution of our problems. 
Let $I_1$ be a knapsack instance on
item set $A_1$. Instance $I_2$ on item set $A_2$ is a {\em reduced 
instance} for $I_1$ if 
\begin{inparaenum}[(1.)]
\item
$A_2\subseteq A_1$ and 
\item 
$OPT(I_1)=OPT(I_2)$.
\end{inparaenum}
For {\sc Max KP} we can bound the number and sizes of the items of an instance as
follows.

\begin{lemma}[Reduced Instance]\label{le-blb}
Every instance of {\sc Max KP} 
can be transformed into a reduced instance, such that  
$n\in \bigo(c\cdot \log(c))$.
\end{lemma}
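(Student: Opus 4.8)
The plan is to exploit that the sizes are positive integers bounded by $c$, so that there are at most $c$ distinct size values, and that for each fixed size value only a bounded number of items can ever occur in a feasible solution. First I would partition the item set $A_1$ according to size. Since each $s_j$ is a positive integer with $s_j\le s_{\max}\le c$, every size lies in $\{1,\dots,c\}$, so there are at most $c$ nonempty size classes. Fix a size $s\in\{1,\dots,c\}$ and consider the items of size exactly $s$. Any feasible solution has total size at most $c$ and therefore contains at most $\lfloor c/s\rfloor$ items of size $s$. The reduction rule is then: among all items of size $s$, keep only the $\lfloor c/s\rfloor$ items of largest profit (breaking ties arbitrarily) and discard the rest. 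Let $A_2$ be the union over all sizes of the kept items; clearly $A_2\subseteq A_1$, and the capacity $c$ is unchanged.

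Next I would verify that $OPT(I_1)=OPT(I_2)$. Since $A_2\subseteq A_1$, every feasible solution of $I_2$ is feasible for $I_1$, so $OPT(I_2)\le OPT(I_1)$. For the reverse inequality I would take an optimal solution $A'$ of $I_1$ and repair it to use only kept items: if $A'$ contains a discarded item $a$ of size $s$, then, because $A'$ uses at most $\lfloor c/s\rfloor$ items of size $s$ while we retained the $\lfloor c/s\rfloor$ most profitable ones, there is a kept item $a'$ of size $s$ with $a'\notin A'$ whose profit is at least that of $a$. Replacing $a$ by $a'$ leaves the total size unchanged, hence preserves feasibility, and does not decrease the profit. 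Iterating this exchange over all discarded items used by $A'$ yields a feasible solution of $I_2$ with profit at least $OPT(I_1)$, so $OPT(I_2)\ge OPT(I_1)$. Thus $I_2$ is a reduced instance for $I_1$.

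The size bound then follows by a direct count, using that the $c$-th harmonic number $H_c$ satisfies $H_c\in\bigo(\log c)$:
$$
n' \;=\; |A_2| \;\le\; \sum_{s=1}^{c}\left\lfloor\frac{c}{s}\right\rfloor \;\le\; c\sum_{s=1}^{c}\frac{1}{s} \;=\; c\,H_c \;\in\; \bigo(c\cdot\log(c)).
$$
The transformation itself only groups the items by size and selects, within each group, the items of largest profit, which is easily carried out in polynomial time, so the reduction is of the required form.

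I expect the only genuinely delicate step to be the exchange argument establishing $OPT(I_1)\le OPT(I_2)$: one must ensure that whenever a discarded item is removed, a suitable replacement item of the \emph{same} size is still available among the kept items. This is exactly guaranteed by keeping precisely $\lfloor c/s\rfloor$ items per size class together with the observation that no feasible solution uses more than $\lfloor c/s\rfloor$ items of size $s$; everything else is routine counting.
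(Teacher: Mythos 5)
Your proof is correct and follows essentially the same route as the paper: keep only the $\lfloor c/s\rfloor$ most profitable items of each size $s$ and bound the total via the harmonic series. You additionally spell out the exchange argument showing $OPT(I_1)=OPT(I_2)$, which the paper leaves implicit; this is a welcome bit of extra rigor but not a different approach.
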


\begin{proof}
In order to avoid trivial solutions we assume, that there is no item
in $A$, whose size is larger than the capacity $c$, i.e.\ 
$s_j\leq c$ for every $1\leq j \leq n$.
Further for $1 \leq s\leq c$
we can assume that there are at most $n_s:=\lfloor\frac{c}{s}\rfloor$  
items of size $s$ in $A$.

By the harmonic series we always can bound the number $n$ of items in $A$ by
$$n\leq \sum_{s=1}^{c} n_s= \sum_{s=1}^{c}  \bigg\lfloor\frac{c}{s}\bigg\rfloor \leq \sum_{s=1}^{c} \frac{c}{s} =  c\cdot \sum_{s=1}^{c} \frac{1}{s}  <  c\cdot (\ln(c)+1) \in \bigo(c\cdot \log (c)).$$

If we have given an instance $I_1$ for {\sc Max KP}  with more than 
the mentioned number $n_s$ of items of size $s$ for some  $1 \leq s\leq c$, we remove
all of them except the $n_s$ items of the highest profit. 
The new instance $I_2$ satisfies $n\in \bigo(c\cdot \log(c))$ and 
is a reduced instance of $I_1$. 
\end{proof}

The latter result is useful in Remark \ref{rem-c-kern}.
We have shown an alternative proof for Lemma \ref{le-blb} in \cite{GRY16}.

Since the capacity $c$ and the sizes of our items are positive integers,
every solution $A'$ of some instance of  {\sc Max KP}
even contains at most $c$ items. But this observation does not
lead to a reduced instance. In order so solve the problem using this bound 
one has to consider $\binom{n}{c}\in \bigo(n^c)$ many possible
subsets of $A$ which is much more inefficient than the dynamic programming approach
mentioned in Theorem \ref{t-dy2}.

\subsection{Binary Integer Programming}

Integer programming is a powerful tool, studied for over 50 years, that can
be used to define a lot of very important optimization problems \cite{JLN10}. 
{\sc Max KP} can be formulated using a boolean variable $x_j$ for every item 
$a_j\in A$, indicating whether or not $a_j$ is chosen into the solution $A'$,
by a so-called binary integer program (BIP).
\begin{eqnarray}
&\text{max } & \sum_{j=1}^{n} p_{j}  x_{j}\label{lp-kp1}   \\
&\text{s.t. }& \sum_{j=1}^{n} s_j x_j  \leq c \label{lp-kp2} \\
&\text{and  }& x_j\in\{0,1\}  \text{ for  } j\in [n]\label{lp-kp3} 
\end{eqnarray}

For some positive integer $n$, let
$[n]=\{1,\ldots,n\}$ be 
the set of all positive integers between $1$ and $n$.
We apply BIP versions for our knapsack problems to obtain 
parameterized algorithms (Theorem \ref{maintheorem}).

\subsection{Dynamic Programming Algorithms}

Dynamic programming  solutions for {\sc Max KP} are well known. 
The following two results can be found in
the textbook \cite{KPP10}.

\begin{theorem}[Lemma 2.3.1 of  \cite{KPP10}]\label{t-dy2}
{\sc Max KP} can be solved in time $\bigo(n\cdot c)$.
\end{theorem}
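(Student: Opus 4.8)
The plan is to give a standard dynamic programming algorithm that fills a table indexed by items and by achievable total size, where each cell stores the maximum profit attainable. First I would define, for every $i\in\{0,1,\ldots,n\}$ and every $t\in\{0,1,\ldots,c\}$, the quantity $P(i,t)$ as the maximum total profit of a subset of the first $i$ items $\{a_1,\ldots,a_i\}$ whose total size is \emph{exactly} $t$ (or, alternatively, \emph{at most} $t$; I would pick the ``exactly'' variant and read off the answer as a final maximum over all $t\leq c$, setting $P(i,t)=-\infty$ for infeasible combinations). The base case is $P(0,0)=0$ and $P(0,t)=-\infty$ for $t>0$, encoding that the empty subset has size $0$ and profit $0$.

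The key step is the recurrence. For $i\geq 1$ I would argue by a case distinction on whether item $a_i$ is included: if it is not, the best profit is $P(i-1,t)$; if it is, then $a_i$ contributes size $s_i$ and profit $p_i$, so the remaining items must fill size $t-s_i$, giving $P(i-1,t-s_i)+p_i$ (valid only when $t\geq s_i$). Hence
\begin{equation}
P(i,t)=\max\bigl\{\,P(i-1,t),\ P(i-1,t-s_i)+p_i\,\bigr\},
\end{equation}
where the second term is dropped if $t<s_i$. The correctness of this recurrence follows because any optimal subset of the first $i$ items either uses $a_i$ or does not, and in each case its restriction to the first $i-1$ items is itself optimal for the corresponding subproblem, which is the usual optimal-substructure argument.

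For the running time I would note that the table has $(n+1)(c+1)\in\bigo(n\cdot c)$ entries, and each entry is computed from at most two previously computed entries with a constant number of additions and comparisons on integers. Since by assumption $s_{\max}\leq c$, every size index stays within the range $\{0,\ldots,c\}$, so no cell is wasted. Thus the whole table is filled in $\bigo(n\cdot c)$ time, and the optimal value $OPT(I)=\max_{0\leq t\leq c}P(n,t)$ is obtained with one additional pass over the last row in $\bigo(c)$ time; the chosen subset $A'$ can be recovered by backtracking through the table. The main obstacle, such as it is, is not conceptual but bookkeeping: one must be careful that arithmetic on the profit entries is treated as constant-time (the standard word-RAM convention), since otherwise the bit-length of the profits would enter the bound; under the usual cost model assumed in \cite{KPP10} this is immediate, so the claimed $\bigo(n\cdot c)$ bound holds.
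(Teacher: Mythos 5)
Your proof is correct and is exactly the standard Bellman dynamic-programming argument that the paper invokes by citing Lemma 2.3.1 of \cite{KPP10} without reproducing it. The recurrence, the table size analysis, and the unit-cost caveat all match the textbook proof, so there is nothing to add.
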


\begin{theorem}[Lemma 2.3.2 of  \cite{KPP10}]\label{t-dy1}
{\sc Max KP}  can be solved in time $\bigo(n\cdot U)\subseteq
\bigo(n \cdot \sum_{j=1}^{n}p_{j})\subseteq\bigo(n^2 \cdot p_{\max})$,
where $U$ is an upper bound on the value of an optimal solution.
\end{theorem}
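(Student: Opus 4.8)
The plan is to set up a dynamic program that is indexed by achievable profit rather than by occupied capacity, which is the natural dual of the $\bigo(n\cdot c)$ table behind Theorem \ref{t-dy2}. Concretely, for $0\le j\le n$ and $0\le q\le U$ I would define $W(j,q)$ to be the minimum total size of a subset of $\{a_1,\dots,a_j\}$ whose total profit equals exactly $q$, setting $W(j,q)=+\infty$ when no such subset exists. The base case is $W(0,0)=0$ and $W(0,q)=+\infty$ for $q\ge 1$, since the empty item set can only realize profit $0$.

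For the recurrence I would distinguish whether item $a_j$ is selected. If it is left out, the best size is $W(j-1,q)$; if it is taken, the remaining first $j-1$ items must realize profit $q-p_j$ (possible only when $q\ge p_j$) at size $W(j-1,q-p_j)+s_j$. Hence for $j\ge 1$,
$$W(j,q)=\min\{\,W(j-1,q),\ W(j-1,q-p_j)+s_j\,\},$$
where the second argument is omitted when $q<p_j$. Correctness is then a routine induction on $j$: any profit-$q$ subset of the first $j$ items either avoids $a_j$, in which case it is a profit-$q$ subset of the first $j-1$ items, or contains $a_j$, in which case removing $a_j$ leaves a profit-$(q-p_j)$ subset of the first $j-1$ items; minimizing size over both cases matches the recurrence.

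Filling the table is the main computational step. There are $(n+1)(U+1)$ entries and each is evaluated from at most two previously computed entries in constant time, so the whole table is built in $\bigo(n\cdot U)$ time. The value of an optimal solution is recovered as $\max\{q : 0\le q\le U,\ W(n,q)\le c\}$, i.e.\ the largest profit whose cheapest realization still fits the capacity; an optimal item set is obtained by the standard back-tracing of the $\min$ choices, within the same time bound.

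Finally I would justify the displayed chain of inclusions by bounding the parameter $U$. Since every feasible solution is a subset of $A$, the total profit $\sum_{j=1}^{n}p_j$ of all items is always a valid choice of upper bound, giving $\bigo(n\cdot U)\subseteq\bigo(n\cdot\sum_{j=1}^{n}p_j)$; and $\sum_{j=1}^{n}p_j\le n\cdot p_{\max}$ yields $\bigo(n\cdot\sum_{j=1}^{n}p_j)\subseteq\bigo(n^2\cdot p_{\max})$. There is no genuine obstacle here — the only point requiring care is that the table is indexed by profit and stores minimum size (the reverse of the capacity-indexed program), so that a good upper bound $U$ on the optimum, rather than the capacity $c$, controls the running time.
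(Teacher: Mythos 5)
Your proposal is correct and is exactly the standard profit-indexed dynamic program (minimum size to achieve each profit value $q\le U$) that underlies Lemma~2.3.2 of \cite{KPP10}, which the paper cites without reproducing a proof. The recurrence, the $\bigo(n\cdot U)$ table-filling bound, and the justification of the inclusions via $U\le\sum_{j=1}^{n}p_j\le n\cdot p_{\max}$ all match the intended argument.
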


Since for unary numbers the value of the number is equal to the
length of the number the running times of the two cited
dynamic programming solutions is even polynomial.
Thus {\sc Max KP} can be solved in 
polynomial time if all numbers are given in unary.
In this paper we assume that all numbers are encoded 
in binary.

\subsection{Pseudo-polynomial Algorithms}

Although {\sc Max KP} is a well known example for a pseudo-polynomial problem
we want to give this result for the sake of completeness.

\begin{theorem}\label{th-si-bug-pseudo}
{\sc Max KP} is pseudo-polynomial.
\end{theorem}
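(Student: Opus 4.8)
The statement to prove is that {\sc Max KP} is pseudo-polynomial, i.e., that it can be solved by an algorithm whose running time is bounded by a polynomial in $|I|$ and $\max(I)$.

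The plan is to invoke one of the dynamic programming algorithms already cited in the excerpt and verify that its running time fits the definition of a pseudo-polynomial algorithm given in the Pseudo-polynomial Algorithms subsection. By Theorem \ref{t-dy2}, {\sc Max KP} can be solved in time $\bigo(n\cdot c)$. First I would recall from the definition that an algorithm is pseudo-polynomial if its running time is bounded by $p(|I|,\max(I))$ for some polynomial $p$. So it suffices to exhibit a polynomial $p$ with $n\cdot c\leq p(|I|,\max(I))$.

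The key steps, in order, are as follows. First, observe that $n\leq |I|$, since the number of items is one of the summands in the expression for $|I|$ derived in the excerpt (indeed $|I|$ contains an explicit additive term $n$). Second, observe that $c\leq \max(I)$, since $\max(I)$ is by definition the value of the largest number occurring in the instance $I$ and $c$ is one of these numbers. Combining these two bounds gives $n\cdot c\leq |I|\cdot \max(I)$, so the running time $\bigo(n\cdot c)$ is bounded by a polynomial in $|I|$ and $\max(I)$, namely by the product $p(|I|,\max(I))=|I|\cdot\max(I)$. Hence the dynamic programming algorithm of Theorem \ref{t-dy2} is a pseudo-polynomial algorithm, and {\sc Max KP} is pseudo-polynomial.

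There is essentially no hard part here: the result is a direct consequence of Theorem \ref{t-dy2} together with the definition of pseudo-polynomiality, and the only thing to be careful about is correctly matching the two factors of the $\bigo(n\cdot c)$ running time to the two arguments $|I|$ and $\max(I)$ of the bounding polynomial. Alternatively, one could invoke Theorem \ref{t-dy1} and bound $c$ by $p_{\max}\leq\max(I)$ in the $\bigo(n^2\cdot p_{\max})$ running time, again yielding a polynomial in $|I|$ and $\max(I)$; either cited dynamic programming result closes the proof.
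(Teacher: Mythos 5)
Your proof is correct and takes essentially the same approach as the paper: the paper's own proof verifies pseudo-polynomiality via the $\bigo(n^2\cdot p_{\max})$ algorithm of Theorem~\ref{t-dy1} and explicitly notes the $\bigo(n\cdot c)$ algorithm of Theorem~\ref{t-dy2} as an alternative, which is exactly the route you take (with the roles of primary and alternative swapped). Your explicit matching of $n\leq|I|$ and $c\leq\max(I)$ to the two arguments of the bounding polynomial is a slightly more careful write-up of the same observation.
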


\begin{proof}
We consider the running time of the algorithm which proves Theorem \ref{t-dy1}.
In the running time 
$\bigo(n \cdot \sum_{j=1}^{n}p_{j})\subseteq\bigo(n^2 \cdot p_{\max})$ part
$n^2$ is polynomial in the input size and part $p_{\max}$ is polynomial
in the value of the largest occurring number in every input. (Alternatively
we can apply the running time of the algorithm cited in  Theorem \ref{t-dy2}.)
\end{proof}

\subsection{Parameterized Algorithms}\label{sec-para-kp}

Since {\sc Max KP} is an integer-valued problem 
defined on inputs of various informations, it makes
sense to consider parameterized versions of the problem. By 
adding a threshold value $k$ for  the profit to the instance
and choosing a parameter $\kappa(I)$ from this instance
$I$, we define the following parameterized problem.

\begin{desctight}
\item[Name] $\kappa$-{\sc Knapsack} ($\kappa$-{\sc KP})

\item[Instance]  A set $A=\{a_1,\ldots,a_n\}$ of $n$ items, 
for every item $a_j$, there 
is a size of $s_j$ and a profit of $p_j$. 
Further there is a capacity $c$ 
for the knapsack
and a positive integer $k$.

\item[Parameter] $\kappa(I)$

\item[Question] Is there a subset $A'\subseteq A$ such that 
the total profit of $A'$  is  at least $k$
and 
the total size of $A'$ is at most $c$.
\end{desctight}

For some instance $I$ of {\sc $\kappa$-KP} its size $|I|$ can be bounded by
$$ |I| \in \bigo(n+ n\cdot \log_2(s_{\max}) +  n\cdot \log_2(p_{\max})+ \log_2(c) + \log_2(k)) .$$

Next we give  parameterized algorithms for the knapsack problem.
The parameter $\sizevar(I)=|\{s_1,\ldots,s_n\}|$ counts the number of distinct item sizes
within knapsack instance $I$.

\begin{table}[ht]
\begin{center}
\begin{tabular}{|l||l|lcl|l|}
\hline
Parameter      &  class   & time  &&\\
\hline\hline
$k$            & $\in\pfpt$  & $\bigo(n^2 \cdot k)$ & $\subseteq$ &   $\bigo(|I|^2\cdot k)$  \\
\hline
$c$            & $\in\pfpt$  & $\bigo(n\cdot c)$ &   $\subseteq$ &   $\bigo(|I|\cdot c)$    \\
\hline
$p_{\max}$     & $\in\pfpt$  & $\bigo(n^2\cdot p_{\max})$ &  $\subseteq$ &   $\bigo(|I|^2\cdot p_{\max})$    \\
\hline
$s_{\max}$     & $\in\pfpt$  &  $\bigo(n^2 \cdot s_{\max})$ & $\subseteq$ &   $\bigo(|I|^2 \cdot s_{\max})$           \\
\hline
$n$            & $\not\in\pfpt$,  $\in\fpt$   & $\bigo(n\cdot 2^n)$ &  $\subseteq$ &   $\bigo(|I|\cdot 2^n)$ \\
\hline
$\val$         & $\not\in\pfpt$, $\in\fpt$    & $\bigo(n\cdot 2^{\val(I)})$ &  $\subseteq$ &   $\bigo(|I|\cdot 2^{\val(I)})$ \\
\hline
$\sizevar$ $s$    & $\not\in\pfpt$, $\in\fpt$    & &&  $\bigo(|I|^{\bigo(1)}\cdot s^{2.5s\cdot o(s)})$   \\
\hline
\end{tabular}
\end{center}
\caption{Overview of parameterized algorithms for KP}\label{tab-kp-survey}
\end{table}

\begin{theorem}\label{maintheorem}
There exist parameterized algorithms for the knapsack problem 
such that the running times of Table \ref{tab-kp-survey} 
hold true and the problems belong to the specified parameterized complexity classes.
\end{theorem}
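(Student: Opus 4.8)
The plan is to verify the seven rows of Table~\ref{tab-kp-survey} in two groups: the four parameters for which membership in $\pfpt$ is claimed (namely $k$, $c$, $p_{\max}$, $s_{\max}$), and the three parameters for which membership in $\fpt$ is claimed together with non-membership in $\pfpt$ (namely $n$, $\val$, $\sizevar$). For the $\pfpt$ rows I would derive each running time from the dynamic programming results of Theorems~\ref{t-dy2} and~\ref{t-dy1} after light preprocessing, and then observe that each bound is polynomial in both $|I|$ and the parameter, so that the algorithm is a polynomial fpt-algorithm. For parameter $c$ the bound $\bigo(n\cdot c)$ is exactly Theorem~\ref{t-dy2}, and for $p_{\max}$ the bound $\bigo(n^2\cdot p_{\max})$ is exactly Theorem~\ref{t-dy1}. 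For parameter $k$ I would cap every profit by replacing $p_j$ with $\min\{p_j,k\}$, which preserves the existence of a subset of total profit at least $k$, so that afterwards $p_{\max}\le k$ and Theorem~\ref{t-dy1} gives $\bigo(n^2\cdot p_{\max})\subseteq\bigo(n^2\cdot k)$. For parameter $s_{\max}$ I would exploit the standing assumption $\sum_{j=1}^{n}s_j>c$ together with $s_j\le s_{\max}$ to obtain $c<\sum_{j=1}^{n}s_j\le n\cdot s_{\max}$, hence $c\in\bigo(n\cdot s_{\max})$, and substitute this into Theorem~\ref{t-dy2} to get $\bigo(n\cdot c)\subseteq\bigo(n^2\cdot s_{\max})$.

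For the $\fpt$ rows, membership in $\fpt$ comes from explicit algorithms and the conditional non-membership in $\pfpt$ comes from Corollary~\ref{cor-pfpt-p}. For parameter $n$ I would enumerate all $2^n$ subsets of $A$, testing feasibility and computing the profit of each in time $\bigo(n)$, for a total of $\bigo(n\cdot 2^n)$, which is an fpt-algorithm. For parameter $\val$, {\sc Max KP} is pseudo-polynomial by Theorem~\ref{th-si-bug-pseudo}, so Theorem~\ref{th-val-para} already provides an fpt-algorithm; using $c\le\max(I)\le 2^{\val(I)}$ inside Theorem~\ref{t-dy2} sharpens the bound to $\bigo(n\cdot 2^{\val(I)})$ as required. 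In both cases the parameter satisfies $\kappa(I)\in\bigo(|I|)$ (indeed $n\le|I|$ and $\val(I)\le|I|$), while {\sc Max KP} is NP-hard, so Corollary~\ref{cor-pfpt-p} rules out a polynomial fpt-algorithm unless $\p=\np$, yielding $\not\in\pfpt$.

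The main obstacle is the last row, parameter $\sizevar=s$. Here I would group the items by their at most $s$ distinct size values and model the selection by an integer program with one integer count variable per size class, bounded above by the number of available items of that class, and solve it by an algorithm for integer programming in fixed dimension, whose running time yields the bound stated in the last row of Table~\ref{tab-kp-survey} when the number of variables is $s$. The delicate point is that items sharing a size may have different profits, so the objective is not linear in the count variables; I would resolve this by pre-sorting each size class in order of decreasing profit, so that taking a given number of items from a class always means taking its most profitable ones, turning the objective into a separable concave function that can either be linearised or be handled directly within the fixed-dimension solver. Membership in $\fpt$ then follows since the running time has the form $f(s)\cdot|I|^{\bigo(1)}$, and $\not\in\pfpt$ follows once more from Corollary~\ref{cor-pfpt-p}, since $s\le n\le|I|$ and {\sc Max KP} is NP-hard.
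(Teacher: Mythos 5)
Your proposal is correct and, for five of the seven rows ($c$, $p_{\max}$, $s_{\max}$, $n$, $\val$, plus the lower bounds via Corollary~\ref{cor-pfpt-p}), it coincides with the paper's argument: the same dynamic programs from Theorems~\ref{t-dy2} and~\ref{t-dy1}, the same bound $c<\sum_j s_j\le n\cdot s_{\max}$ for the $s_{\max}$ row, the same $2^n$ enumeration for $n$, and the same substitution $c\le 2^{\val(I)}$ for $\val$. You genuinely diverge in two places. For the standard parameter $k$ the paper invokes the FPTAS of running time $\bigo(n^2\cdot\nicefrac{1}{\epsilon})$ with $\epsilon=\nicefrac{1}{2k}$ via Theorem~\ref{theorem-cc97}; you instead cap each profit at $\min\{p_j,k\}$ and run the profit-indexed dynamic program of Theorem~\ref{t-dy1}. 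Your capping argument is sound (a yes-instance either contains an item of profit $\ge k$, which alone certifies the capped instance since $s_{\max}\le c$, or is unaffected by capping; the converse is immediate since capping only decreases profits), it yields the same $\bigo(n^2\cdot k)$ bound, and it is more elementary and self-contained than the approximation-scheme detour. For the $\sizevar$ row the paper simply cites \cite{EKMR15} as a black box, whereas you sketch the underlying mechanism: an integer program with one count variable per size class, solved by fixed-dimension integer programming, with items in each class pre-sorted by profit so that the objective becomes separable concave. That sketch is essentially the right idea, but it is the one place where details remain to be filled in: a feasibility-only Lenstra/Kannan solver does not directly optimize a concave separable objective, so you must either linearize (introducing auxiliary variables, which changes the dimension from $s$ to $\bigo(s)$ and hence perturbs the exact exponent in the stated bound) or invoke the known extension to separable concave objectives together with a binary search on the threshold; the paper avoids this by citing the reference. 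Net effect: your version buys self-containedness for $k$ and transparency for $\sizevar$, at the cost of some unproved details in the latter, while the paper's version buys brevity by leaning on Theorem~\ref{theorem-cc97} and on \cite{EKMR15}.
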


\begin{proof}
For the standard parameterization $\kappa(I)=k$ we use the fact that
the {\sc Max KP} problem allows an FPTAS of running time 
$\bigo(n^2 \cdot \frac{1}{\epsilon})$, see  \cite{KPP10} for a survey. 
By Theorem \ref{theorem-cc97} we can use
this FPTAS for $\epsilon=\nicefrac{1}{2k}$ in order to obtain a polynomial 
fpt-algorithm that solves the standard parameterization of the corresponding 
decision problem in time $\bigo(n^2 \cdot 2k)=\bigo(n^2\cdot k)$.

For parameter $\kappa(I)=c$ we consider the dynamic programming algorithm shown in the proof 
of Theorem \ref{t-dy2} which has running time $\bigo(n\cdot c)$. 
From a parameterized point of view a polynomial fpt-algorithm that solves
$c$-{\sc KP}  follows.

For $\kappa(I)=p_{\max}$ we consider the algorithm shown in the proof of Theorem \ref{t-dy1}.
Since its running time is  
$\bigo(n \cdot \sum_{i=1}^{n}p_{i})\subseteq\bigo(n^2 \cdot p_{\max})$, from a 
parameterized point of view a polynomial fpt-algorithm follows that solves
$p_{\max}$-{\sc KP}.

For parameter $\kappa(I)=s_{\max}$ we distinguish the following two cases.
If for some $s_{\max}$-{\sc KP} instance it holds 
$\sum_{i=1}^{n}s_i \leq c$, then all items fit into the knapsack and 
we can choose $A'=A$ and verify whether $\sum_{i=1}^{n}p_i \geq k$. 
Otherwise we know that $c< \sum_{i=1}^{n}s_i\leq n\cdot s_{\max}$ and by the
algorithm shown in the proof of Theorem \ref{t-dy2} we
can solve the problem in time $\bigo(n \cdot c)\subseteq \bigo(n^2 \cdot s_{\max})$.

For parameter $\kappa(I)=n$ we can use
a brute force solution by checking for all $2^n$ possible subsets of $A$
the constraint (\ref{lp-kp2}) of the given BIP, which
leads to an algorithm of time complexity $\bigo(n\cdot 2^n)$.
Alternatively one can use  the result of \cite{Len83}, or its improved
version in \cite{Kan87}, which implies
that integer linear programming is fixed-parameter tractable for the 
parameter ''number of variables''. 
%
%
Thus by BIP (\ref{lp-kp1})-(\ref{lp-kp3}) the $n$-{\sc KP} problem is fixed-parameter 
tractable in time $\bigo(|I|\cdot n^{\bigo(n)})$.

For parameter $\kappa(I)=\val(I)$, i.e.\
the maximum length of the binary encoding of all numbers within
the instance $I$ we know by Theorem \ref{t-dy2} that problem $\kappa(I)$-{\sc KP}  
can be solved in $\bigo(n\cdot c) \subseteq \bigo(n \cdot 2^{\val(I)})$ and thus
is fixed-parameter trac\-table with respect to parameter $\kappa(I)=\val(I)$.

For parameter $\kappa(I)=\sizevar(I)$ we know from \cite{EKMR15} that
there is an fpt-algorithm that solves $\sizevar(I)$-{\sc KP}
in time $\bigo(s^{2,5s\cdot o(s)}\cdot |I|^{\bigo(1)})$, where $s=\sizevar(I)$.

\medskip
For parameters $\kappa(I)\in\{n,\val(I),\sizevar(I)\}$ for every
instance $I$ it holds $\kappa(I)\leq |I|$ and by Corollary \ref{cor-pfpt-p}
there is no  polynomial fpt-algorithm for $\kappa(I)$-{\sc KP}.
\end{proof}

There is even no algorithm of running time $2^{o(n)}$ for $n$-{\sc KP}, assuming the Exponential Time
Hypothesis.\footnote{The Exponential Time Hypothesis \cite{IPZ01} states that there does not exist 
an algorithm of running time  $2^{o(n)}$ for 3-SAT, where $n$ denotes the number of variables.
} Since such an algorithm would also imply an  algorithm of running time $2^{o(n)}$
for $n$-{\sc Subset Sum}, which was disproven in \cite{EKMR15}.

\subsection{Kernelizations}

Next we give kernelization bounds for the knapsack problem.

\begin{table}[ht]
\begin{center}
\begin{tabular}{|l||l|l|l|}
\hline
Parameter      & lower bound  & upper bound  \\
\hline\hline
$k$, $c$, $p_{\max}$, $s_{\max}$              & $\Theta(1)$  & $\Theta(1)$  \\
\hline
$n$            & $\omega(1)$  & $\bigo(n^4)$  \\
\hline
$\val$         & $\omega(1)$  & $\bigo(2^{\val(I)})$   \\
\hline
$\sizevar$     & $\omega(1)$  & $\bigo(s^{2,5s\cdot o(s)})$\\
\hline
\end{tabular}
\end{center}
\caption{Overview for kernel sizes of parameterized KP}\label{tab-kp-ker-survey}
\end{table}

\begin{theorem}\label{maintheorem2}
There exist kernelizations for the parameterized knapsack problem such that the upper bounds for the sizes
of a possible kernel in Table \ref{tab-kp-ker-survey} 
hold true.
\end{theorem}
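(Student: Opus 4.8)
The plan is to verify the four rows of Table~\ref{tab-kp-ker-survey} by three separate mechanisms, all established earlier in the paper, and to observe that only the $n$-row requires genuine work beyond invoking a general theorem. First, for the parameters $k$, $c$, $p_{\max}$, and $s_{\max}$: by Theorem~\ref{maintheorem} each of the parameterizations $k$-{\sc KP}, $c$-{\sc KP}, $p_{\max}$-{\sc KP}, and $s_{\max}$-{\sc KP} lies in $\pfpt$, and {\sc Knapsack} is decidable. Hence Theorem~\ref{th-me} directly supplies a kernel of $\bigo(1)$ size in each case, which is exactly the claimed $\Theta(1)$ upper bound.

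For the rows of $\val$ and $\sizevar$ I would feed the fpt running times of Theorem~\ref{maintheorem} into Theorem~\ref{th-ker}, which turns a running time $\bigo(f(\kappa(I)) \cdot |I|^c)$ into a kernel of size $\bigo(f(\kappa(I)))$. For $\val$-{\sc KP} the running time is $\bigo(n \cdot 2^{\val(I)})$, so $f(\val(I)) = 2^{\val(I)}$ and the kernel has size $\bigo(2^{\val(I)})$. For $\sizevar$-{\sc KP} the running time is $\bigo(s^{2.5 s \cdot o(s)} \cdot |I|^{\bigo(1)})$, so the kernel has size $\bigo(s^{2.5 s \cdot o(s)})$. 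Both match the table.

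The real content is the $\bigo(n^4)$ bound for $n$-{\sc KP}, since the generic argument via Theorem~\ref{th-ker} applied to the $\bigo(n \cdot 2^n)$ algorithm of Theorem~\ref{maintheorem} would only yield a kernel of size $\bigo(2^n)$, which is far too large. Instead I would invoke the Frank--Tardos compression of Theorem~\ref{th-ft87}, applied once to the size vector $(s_1,\dots,s_n,c)$ and once to the profit vector $(p_1,\dots,p_n,k)$, in both cases with $\ell = n$. A candidate solution corresponds to a vector $(x_1,\dots,x_n,-1)$ with $x_j \in \{0,1\}$, for which $\sum_{j=1}^{n+1} |x_j| \leq n+1 = \ell+1$, so the sign-preservation clause of Theorem~\ref{th-ft87} applies; it guarantees that $\sum_j s_j x_j \le c$ holds exactly when $\sum_j \tilde{s}_j x_j \le \tilde{c}$, and likewise $\sum_j p_j x_j \ge k$ exactly when $\sum_j \tilde{p}_j x_j \ge \tilde{k}$. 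Consequently a subset $A'$ is feasible and of profit at least $k$ for the original instance iff it is for the compressed one, so yes- and no-instances are preserved; the number of items, and hence the parameter, is unchanged, so $\kappa(I') = \kappa(I)$.

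It then remains to bound the size of the compressed instance. With $\ell = n$ the magnitude bound of Theorem~\ref{th-ft87} is $2^{4(n+1)^3}(n+2)^{(n+1)(n+3)}$, whose binary length is $4(n+1)^3 + (n+1)(n+3)\log_2(n+2) \in \bigo(n^3)$. Summing the encodings of the $n$ compressed sizes, the $n$ compressed profits, and the compressed capacity and threshold gives a total instance size of $\bigo(n \cdot n^3) = \bigo(n^4)$, as claimed. The step needing the most care is precisely this bit-length accounting together with the verification that $\ell = n$ is large enough to capture all $\{0,1\}$-solutions through the sign-preservation guarantee; that is the heart of the argument, whereas the other three rows reduce to a mechanical application of Theorem~\ref{th-me} and Theorem~\ref{th-ker}.
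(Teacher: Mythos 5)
Your proposal is correct and follows essentially the same route as the paper: constant kernels for $k$, $c$, $p_{\max}$, $s_{\max}$ via Theorem~\ref{maintheorem} and Theorem~\ref{th-me}, the generic kernels for $\val$ and $\sizevar$ via Theorem~\ref{th-ker}, and the Frank--Tardos compression of Theorem~\ref{th-ft87} with $\ell\leq n$ for the $\bigo(n^4)$ bound on $n$-{\sc KP}. Your explicit justification that $\ell=n$ captures all $\{0,1\}$-solutions through the sign-preservation clause is in fact slightly more careful than the paper's treatment, but the argument and the bit-length accounting are the same.
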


\begin{proof}
For parameters $\kappa(I)\in\{k,c,p_{\max},s_{\max}\}$ we obtain by 
Theorem \ref{maintheorem} and Theorem \ref{th-me}  a kernel of constant
size for $\kappa(I)$-{\sc KP}.

For parameter  $\kappa(I)=n$ let $I$ be an instance of $n$-{\sc KP}.
We apply Theorem \ref{th-ft87} in order to 
equivalently replace the inequality
\begin{equation}
s_1 x_1+s_2x_2+\dots+s_n x_n - c \leq 0 \text{~~ by ~~}
 \tilde{s}_1 x_1+\tilde{s}_2x_2+\dots+\tilde{s}_n x_n - \tilde{c} \leq
0 \label{eq-ker-n-1}
\end{equation}
such that $\tilde{s}_1,\ldots,\tilde{s}_n,\tilde{c}$ are positive integers and
$$\max\{|\tilde{c}|, |\tilde{s}_j| : 1\leq
j\leq n\}\leq 2^{4(n+1)^3} (\ell+2)^{(n+1)(n+3)}.$$
In the same way we can replace the inequality
\begin{equation}
p_1x_2 +p_2 x_2 + \ldots + p_nx_n  - k \geq 0\text{~~ by ~~}
\tilde{p}_1x_2 +\tilde{p}_2 x_2 + \ldots + \tilde{p}_nx_n - \tilde{k} \geq 0 \label{eq-ker-n-2}
\end{equation}
such that $\tilde{p}_1,\ldots,\tilde{p}_n,\tilde{k}$ are positive integers and
$$\max\{|\tilde{k}|, |\tilde{p}_j| : 1\leq
j\leq n\}\leq 2^{4(n+1)^3} (\ell+2)^{(n+1)(n+3)}.$$
For the obtained instance $I'$ we can
bound $|I'|$ by the number of items $n$ and $2n+2$ numbers
of value at most $2^{4(n+1)^3} (\ell+2)^{(n+1)(n+3)}$.
Since we can assume $\ell\leq n$ for {\sc
$n$-KP}, this problem has a polynomial kernel of size 
$$\bigo\left(n+ (2n+2) \log_2\left( 2^{4(n+1)^3} (\ell+2)^{(n+1)(n+3)} \right) \right)
\subseteq \bigo(n^4 + \log_2(\ell+2)n^3) \subseteq \bigo(n^4).
$$

For parameter $\kappa(I)\in\{\val(I),\sizevar(I)\}$ the upper bounds follow by
Theorem \ref{maintheorem} and Theorem \ref{th-ker}.

\medskip
The lower bounds for the kernel sizes 
for parameters $\kappa(I)\in\{n,\val(I),\sizevar(I)\}$ hold, since for every
instance $I$ it holds $\kappa(I)\leq |I|$ and by Corollary \ref{cor-pfpt-p2}
there is no  kernel of constant size for $\kappa(I)$-{\sc KP}.
\end{proof}

Next we give some further ideas how to show kernels for  $c$-{\sc KP}.
Although the sizes are  non-constant, the ideas might be interesting on its own.

\begin{remark}\label{rem-c-kern}
\begin{enumerate}
\item Let $I$ be some instance of $c$-{\sc KP}. As in the proof of Theorem
\ref{maintheorem2} we apply Theorem \ref{th-ft87} in order to obtain
a polynomial kernel $I'$ of size $\bigo(n^4)$. By the proof of Lemma \ref{le-blb} 
we can transform $I'$ into a reduced
instance, such that $n \in \bigo(c \cdot \log_2(c))$ thus we obtain a kernel
of size $\bigo(c^4 \cdot \log^4_2(c))\subseteq \bigo(c^5)$.
Thus there is a kernel of size $\bigo(c^5)$ for $c$-{\sc KP}. 

\item Next we restrict to the case where $p_{\min}=1$.
Let $I$ be some instance of $c$-{\sc KP}. Its size can be bounded by
$|I|\in\bigo(n+ n\cdot \log_2(s_{\max}) +  n\cdot \log_2(p_{\max})+ \log_2(c))$.
By the proof of Lemma \ref{le-blb} we can transform $I$ into a reduced
instance, such that $n \in \bigo(c \cdot \log_2(c))$ and $s_{\max}\leq c$.
It remains to show that we can bound $p_{\max}$ by some function $f(c)$.

Therefor we observe that if $p_{\max}$ is greater than the sum
of the profits of the other items, which implies that there is only one item $a_j$ of
size $p_{\max}$, then $a_j$ must be included in any optimal solution $A'$. This allows
us to proceed with item set $A-\{a_j\}$ and capacity $c-s_{j}$.\footnote{Instances
with so called superincreasing profits $p_j> \sum_{j'=1}^{j-1} p_{j'}$ can be solved 
in polynomial time.} 

Thus if we sort the items  in ascending order w.r.t. the 
profits $1=p_1 \leq p_2 \leq \ldots \leq p_{n}$
we know that $p_i\leq \sum_{j=1}^{i-1}p_j$ for $2\leq i \leq n$. 
This implies $p_i\leq 2^{i-2} \cdot p_1 = 2^{i-2}$ for $2\leq i \leq n$ and for
$i=n$ we obtain $p_{\max} =  p_{n}\leq 2^{n-2}\in \bigo(2^{c\cdot \log_2(c)-2})$.
Thus we obtain
$$\begin{array}{lcl}
|I| &\in & \bigo(c\cdot \log_2(c) + c\cdot \log_2(c) \cdot  \log_2(c) + c\cdot \log_2(c) \cdot \log_2(2^{c\cdot \log_2(c)-2}) + \log_2(c)) \\
 &\subseteq & \bigo(c\cdot \log_2(c) + c\cdot \log_2(c)  \cdot  \log_2(c) + c\cdot \log_2(c) \cdot (c\cdot \log_2(c) -2) + \log_2(c)) \\
&\subseteq & \bigo(c^3)
\end{array}
$$
Thus there is a kernel of size $\bigo(c^3)$ for $c$-{\sc KP} if $p_{\min}=1$.

It remains open whether there is 
a (feasible) kernelization  which leads to $p_{\min}=1$.
We tried to divide all profits by $p_{\min}$ and round
the obtained values up or down or subtract $p_{\min}-1$
from all profits.
\end{enumerate}
\end{remark}

The existence of a kernel for $c$-{\sc KP} was also stated 
in Theorem 4.11 of \cite{Fer05} without 
giving a bound on the size of the kernel.
A randomized Turing kernel for knapsack w.r.t. parameter $n$ was shown  in \cite{NLZ12}. 

\section{Multidimensional Knapsack Problem}\label{sec-mkp}

Next we consider the knapsack problem for multiple dimensions.

\begin{desctight}
\item[Name] {\sc Max d-dimensional Knapsack} ({\sc Max d-KP})

\item[Instance] A set $A=\{a_1,\ldots,a_n\}$ of $n$ items and a number $d$ of dimensions.
Every item $a_j$ has a profit $p_j$ and for dimension $i$ the size $s_{i,j}$. 
Further for every dimension $i$ there is a capacity $c_i$.

\item[Task]  Find a subset $A'\subseteq A$ such that 
the total profit of $A'$  is maximized 
and 
for every dimension $i$ the total size of $A'$ is at most the capacity $c_i$.
\end{desctight}

In the case of $d=1$ the {\sc Max d-KP} problem corresponds to the {\sc Max KP} problem considered in
Section  \ref{sec-kp}.
For  {\sc Max d-KP}  there is no PTAS in general, since
by the proof of Theorem \ref{th-multi-pseudo} there is a 
PTAS reduction (cf.~Definition 8.4.1 in \cite{Weg05}) from
{\sc Max Independent Set}, which does not allow a PTAS, see  \cite{ACGKMP99}.
For every fixed dimension $d$ the problem  {\sc Max d-KP}
has a PTAS with running time  $\bigo(n^{\lceil\frac{d}{\epsilon} \rceil-d})$ by 
\cite{CKPP00}. In \cite{KS10} it has been shown, that 
there is no EPTAS for {\sc Max d-KP}, even for fixed 
$d=2$, unless $\fpt=\w[1]$. 
A recent survey for d-dimensional knapsack problem was given in 
\cite{Fre04}. A survey on  different types of non-para\-me\-terized algorithms for 
the d-dimensional knapsack problem can be found in \cite{Var12}.

Parameters $n$, $d$, $p_j$, $s_{i,j}$, and $c_i$ are assumed to be positive integers. 
As usually (cf.~\cite{KPP10}) we allow that  $s_{i,j}=0$ for some $1\leq i \leq d$, $1\leq j\leq n$  
if $\sum_{i=1}^{d}s_{i,j}\geq 1$ for every item $a_j$.\footnote{The possibility of $s_{i,j}=0$ is also needed
in the proof of  Theorem \ref{th-multi-pseudo}.}
Let $p_{\max}=\max_{1\leq j \leq n}p_j$, 
$s_{\max}=\max_{1\leq i \leq d, 1\leq j \leq n}s_{i,j}$, and
$c_{\max}=\max_{1\leq i \leq d}c_i$.
The same notations are also used for $\min$ instead of $\max$.
In order to avoid trivial solutions (cf. \cite{KPP10})
we assume that $s_{i,j}\leq c_i$ for all $j\in[n]$, $i\in [d]$ and
$\sum_{j=1}^n s_{i,j}\geq c_i$ for all $i\in [d]$.

For some instance $I$ of {\sc Max d-KP} its size $|I|$ can be bounded  as follows (cf.~\cite{GJ79}).
$$
\begin{array}{lcl}
|I|& = & n+ \sum_{j=1}^{n}(1+\lfloor\log_2(p_j)\rfloor) +  \sum_{i=1}^{d}\sum_{j=1}^{n}(1+\lfloor\log_2(s_{i,j})\rfloor)+ \sum_{i=1}^{d}(1+\lfloor\log_2(c_i)\rfloor)\\
   &  \in & \bigo(n\cdot d+ \sum_{i=1}^{n}\log_2(p_j) +  \sum_{i=1}^{d}\sum_{j=1}^{n}\log_2(s_{i,j})+ \sum_{i=1}^{d}\log_2(c_i)) \\
   &  = & \bigo(n \cdot d + n\cdot \log_2(p_{\max}) +  n\cdot d \cdot \log_2(s_{\max})+ d\cdot \log_2(c_{\max})) \\
\end{array}
$$



Next we give a bound on the number of items
in a reduced instance w.r.t.\ the capacities. The main idea is to 
identify the items with d-dimensional vectors of sizes, whose number
can be bounded.

\begin{lemma}[Reduced Instance]\label{le-red-port-allo1}
Every instance of {\sc Max d-KP} can be
trans\-formed into a reduced instance, such that  
$n\leq c_{\min} \cdot (\Pi_{i=1}^d (c_i+1)-1)\in \bigo((c_{\max}+1)^{d+1})$.
\end{lemma}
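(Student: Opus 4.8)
The plan is to mimic the one-dimensional argument of Lemma \ref{le-blb}, replacing the scalar size by the vector of sizes, as announced before the statement. The guiding idea is that each item $a_j$ is completely described, for the purpose of feasibility, by its size vector $(s_{1,j},\ldots,s_{d,j})$; two items with the same size vector differ only in profit, so the reduction will, for every size vector, retain only the most profitable copies and delete the rest. This immediately guarantees $A_2\subseteq A_1$, and preserving $OPT$ reduces to showing that the deleted copies are never needed.

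First I would bound the number of distinct size vectors. By the standing assumption $s_{i,j}\leq c_i$, each coordinate $s_{i,j}$ lies in $\{0,1,\ldots,c_i\}$, and by the convention $\sum_{i=1}^d s_{i,j}\geq 1$ the all-zero vector cannot occur. Hence there are at most $\prod_{i=1}^d(c_i+1)-1$ distinct size vectors among the items. Second, for a fixed nonzero size vector $v=(s_1,\ldots,s_d)$ I would bound the number $t$ of copies of $v$ that can appear together in a feasible solution: since $v\neq 0$ there is a dimension $i$ with $s_i\geq 1$, and the capacity constraint forces $t\cdot s_i\leq c_i$, hence $t\leq\lfloor c_i/s_i\rfloor\leq c_i$. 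Keeping exactly this many highest-profit copies of each size vector and discarding the rest therefore leaves $OPT$ unchanged: any optimal selection of the original instance uses at most that many copies of each vector, and it may be assumed to use the highest-profit ones, all of which survive. Multiplying the per-vector copy bound by the number of distinct vectors yields the claimed $n\leq c_{\min}\cdot(\prod_{i=1}^d(c_i+1)-1)$, and the estimate $\bigo((c_{\max}+1)^{d+1})$ follows by replacing every $c_i$ (and $c_{\min}$) by $c_{\max}$, giving $c_{\max}\cdot((c_{\max}+1)^d-1)\leq (c_{\max}+1)^{d+1}$.

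The main obstacle is the second step, namely pinning down the per-size-vector copy bound so that it matches the factor $c_{\min}$ in the statement and confirming rigorously that discarding the surplus copies preserves the optimum. The delicate point is that the tight per-vector count is $\min_{i:\,s_i\geq 1}\lfloor c_i/s_i\rfloor$, which depends on which dimension is ``active'' for $v$ and need not be governed by the minimum-capacity dimension when $v$ vanishes there; the safe uniform bound obtained from the active dimension is $c_i\leq c_{\max}$ per vector, so the cleanest version of the argument first establishes the bound with $c_{\max}$ and then tightens the accounting. Once the copy bound is fixed, the preservation of $OPT$ is a routine exchange argument: replace each copy of a given size vector used by an optimal solution with a retained copy of the same vector and at least equal profit, which keeps the solution feasible and does not decrease its value, so that a retained optimal solution of the same value always exists in the reduced instance.
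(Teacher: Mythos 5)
Your proposal follows the same route as the paper's proof: classify items by their size vectors, bound the number of distinct vectors by $\prod_{i=1}^d(c_i+1)-1$, bound the number of useful copies of a vector $s$ by $n_s=\min_{i\in[d],\,s_i\neq 0}\lfloor c_i/s_i\rfloor$, and keep only the $n_s$ most profitable copies of each vector; your exchange argument for preservation of $OPT$ is the same as well. The one place where you stop short --- declining to replace the per-vector count by $c_{\min}$ and settling for $c_{\max}$ instead --- is not a gap in your argument but a justified objection to the paper's: the proof in the paper passes from $\min_{i\in[d],s_i\neq 0} c_i/s_i$ to $\min_{i\in[d]} c_i$, which is an inequality in the wrong direction exactly when the size vector vanishes in the dimension of smallest capacity. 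For instance, with $d=2$, $c_1=1$, $c_2=100$ the vector $s=(0,1)$ has $n_s=100$ while $c_{\min}=1$, and already $\sum_{s_2=1}^{100} n_{(0,s_2)}=\sum_{s_2=1}^{100}\lfloor 100/s_2\rfloor$ exceeds $c_{\min}\cdot(\prod_{i=1}^2(c_i+1)-1)=201$. So the exact bound $n\leq c_{\min}\cdot(\prod_{i=1}^d(c_i+1)-1)$ is only justified when all sizes are strictly positive, whereas your uniform bound $n_s\leq c_{\max}$ yields $n\leq c_{\max}\cdot(\prod_{i=1}^d(c_i+1)-1)\in\bigo((c_{\max}+1)^{d+1})$, which is the part of the statement actually used later. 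Your version of the accounting is the correct one.
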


\begin{proof}
Let $I$ be an instance of {\sc Max d-KP} on item set $A$. For every item $a_j\in A$ we 
denote by $(s_{1,j},s_{2,j},\ldots,s_{d,j})$ its sizes in all $d$ dimensions.
By our capacities we can assume that every such $d$-tuple 
$(s_{1,j},s_{2,j},\ldots,s_{d,j})$ is contained in set  $S=\{(s_1,\ldots,s_d)\in \IN_0^d ~|~ 0\leq s_{i}\leq c_i, \sum_{i=1}^d s_i\geq 1\}$. Set $S$ contains
at most $\Pi_{i=1}^d (c_i+1)-1$ elements.

For every size vector $s\in S$ there can be at most $n_s:=\min_{i\in[d],s_i\neq 0}\lfloor \frac{c_i}{s_i}\rfloor$ items in $A$ of vector $s$ since when choosing an item $a_j$ into a solution $A'$
it contributes to every dimension and there is always one dimension $i$ such that the size
of $a_j$ in $i$ is positive.

Thus we always can bound the number $n$ of items in $A$ by
$$\begin{array}{lcl}
n &\leq &\sum_{s\in S} n_s \\
 & = & \sum_{s\in S} \min_{i\in[d],s_i\neq 0}\lfloor \frac{c_i}{s_i}\rfloor\\
 & \leq & \sum_{s\in S} \min_{i\in[d],s_i\neq 0} \frac{c_i}{s_i}\\
 & \leq & \sum_{s\in S} \min_{i\in[d]}c_i\\
  & \leq &  \min_{i\in[d]} c_i \cdot |S|\\
   & \leq &   c_{\min} \cdot (\Pi_{i=1}^d (c_i+1)-1)\\
\end{array}
$$

If we have given an instance $I_1$ for {\sc Max d-KP}  with more than 
the mentioned number $n_s$ of items of size vector $s$ for some $s\in S$, we remove
all of them except the $n_s$ items of the highest profit. 
The new instance $I_2$ satisfies $n\leq  c_{\min} \cdot (\Pi_{i=1}^d (c_i+1)-1)$ and 
is a reduced instance of $I_1$. 
\end{proof}

Since the capacities $c_i$, $i\in[d]$, and the sizes of our items are non-negative integers,
every solution $A'$ of some instance of  {\sc Max d-KP}
even contains at most $\sum_{i=1}^d c_i$ items. For the special case that all
sizes are positive every solution $A'$ of some instance of  {\sc Max d-KP}
even contains at most $c_{\min}$ items.
But these observations do not
lead  to reduced instances.

The {\em size} of a set is the number of its elements and the size of 
a number of sets is the size of its union.

\begin{lemma}[Bounding the size of a solution]\label{le-bd-sol-size-dkp} 
For every instance of {\sc Max d-KP}
there is a feasible solution $A'$ of profit at least $k$ 
if and only if
there is a feasible solution $A''$ of profit at least $k$, 
which has size at most $k$.
\end{lemma}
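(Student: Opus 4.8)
The plan is to prove the two directions separately, with the reverse direction being immediate and the forward direction following from a simple greedy truncation argument that exploits the fact that all profits are positive integers.

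For the direction $(\Leftarrow)$: if there is a feasible solution $A''$ of profit at least $k$ having size at most $k$, then $A''$ itself is already a feasible solution of profit at least $k$, so we may take $A'=A''$ and nothing needs to be shown.

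For the direction $(\Rightarrow)$: suppose $A'$ is feasible and has profit at least $k$. First I would order the items of $A'$ arbitrarily, say $a_{i_1},a_{i_2},\ldots$, and add them one at a time to an initially empty set until the accumulated profit reaches $k$ for the first time; call the resulting set $A''$. Since every profit is a positive integer, i.e.\ $p_j\geq 1$, the accumulated profit after adding $t$ items is at least $t$. More precisely, just before adding the final item the accumulated profit was at most $k-1$ (it had not yet reached $k$, and all profits are integers), so the first $t-1$ chosen items contribute at least $t-1$ but at most $k-1$ to the profit, which forces $t\leq k$. Hence $|A''|\leq k$, while the profit of $A''$ is at least $k$ by construction.

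It remains to check that $A''$ is feasible, and here the key point is that feasibility for {\sc Max d-KP} is preserved under taking subsets because all sizes are non-negative. Indeed, for every dimension $i$ we have $\sum_{a_j\in A''}s_{i,j}\leq\sum_{a_j\in A'}s_{i,j}\leq c_i$, where the first inequality uses $s_{i,j}\geq 0$ together with $A''\subseteq A'$, and the second uses feasibility of $A'$. Thus $A''$ respects every capacity constraint and is a feasible solution of profit at least $k$ and size at most $k$, as required. The only subtlety, and the main thing to get right, is the counting step establishing $|A''|\leq k$: it relies crucially on the profits being integers, so that "profit not yet at least $k$" means "profit at most $k-1$", rather than on them being arbitrary positive reals.
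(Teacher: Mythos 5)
Your proof is correct. It differs from the paper's argument in mechanics, though both hinge on the same fact that profits are positive integers. You build the solution up: order the items of $A'$ and take the shortest prefix whose accumulated profit first reaches $k$; since each item contributes at least $1$ and the profit just before the last addition is an integer at most $k-1$, the prefix has at most $k$ items. The paper instead works top-down: as long as $A'$ has at least $k+1$ items, it removes an item of smallest profit $p'$ and observes that the remaining profit is at least $p'\cdot(k+1)-p'=p'\cdot k\geq k$, iterating until at most $k$ items remain. Your prefix argument is slightly more direct in that it only uses $p_j\geq 1$ and a single counting step, whereas the paper's removal argument needs the multiplicative bound via the minimum profit and an implicit induction over repeated removals; on the other hand, the paper's version preserves more of the original solution's profit at each step. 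Your explicit check that feasibility is inherited by subsets (because all sizes $s_{i,j}$ are non-negative) is a point the paper glosses over but is worth stating, exactly as you do.
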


\begin{proof}
Let $I$ be an instance of $k$-{\sc d-KP}
and $A'$ be a solution which complies the capacities of every dimension
and the profit of $A'$ is at least $k$. 
Whenever there are at least $k+1$ items in $A'$ we can remove one
of the items of smallest profit $p'$ and obtain a solution $A''$
which still complies the capacities of every dimension. Further since
all profits are  positive integers, the
profit of $A''$ is at least $p'\cdot(k+1)-p'=p'\cdot k\geq k$.
\end{proof}

\subsection{Binary Integer Programming}

Using a boolean variable $x_j$ for every item $a_j\in A$,
indicating whether or not the item $a_j$ will be chosen into $A'$,
a binary integer programming (BIP) version of {\sc Max d-KP} is 
as follows.
\begin{eqnarray}
&\text{max } & \sum_{j=1}^{n} p_{j}   x_j                                   \label{lp-dkp1} \\
&\text{s.t. }& \sum_{j=1}^{n} s_{i,j} x_j  \leq c_i  \text{ for } i\in [d]  \label{lp-dkp2} \\
&\text{and  }& x_j\in\{0,1\}                         \text{ for } j\in [n]  \label{lp-dkp3} 
\end{eqnarray}

\subsection{Dynamic Programming Algorithms}

Dynamic programming  solutions for {\sc Max d-KP}
can be found in \cite{WN67} and \cite{KPP10}. 
The following result holds by the textbook \cite{KPP10}.

\begin{theorem}[Section 9.3.2 of  \cite{KPP10}]\label{t-dy3}
{\sc Max d-KP} can be solved in time 
$\bigo(n\cdot d \cdot \Pi_{i=1}^d c_i)\subseteq \bigo(n\cdot d \cdot (c_{\max})^d)$.
\end{theorem}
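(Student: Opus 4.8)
The plan is to prove Theorem~\ref{t-dy3} by a standard multidimensional extension of the classical one-dimensional knapsack dynamic program cited in Theorem~\ref{t-dy2}. The key idea is to index subproblems not by a single remaining capacity but by a $d$-tuple of remaining capacities, one per dimension. Concretely, I would define $P(j, b_1, \ldots, b_d)$ to be the maximum achievable profit when using only the first $j$ items $a_1, \ldots, a_j$ subject to the constraint that the total size in each dimension $i$ is at most $b_i$. The answer we seek is then $P(n, c_1, \ldots, c_d)$, which by BIP~(\ref{lp-dkp1})-(\ref{lp-dkp3}) equals $OPT(I)$.

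The recurrence follows the usual include/exclude decision for item $a_j$. First I would set the base case $P(0, b_1, \ldots, b_d) = 0$ for all capacity tuples. For $j \geq 1$ I would write
\begin{equation*}
P(j, b_1, \ldots, b_d) = \max\bigl\{\, P(j-1, b_1, \ldots, b_d),\ p_j + P(j-1, b_1 - s_{1,j}, \ldots, b_d - s_{d,j}) \,\bigr\},
\end{equation*}
where the second term is only admissible when $s_{i,j} \leq b_i$ holds for every dimension $i \in [d]$; otherwise item $a_j$ cannot be added and only the first term is taken. Correctness follows by a straightforward induction on $j$: any optimal selection among the first $j$ items either omits $a_j$, reducing to the first case, or includes it, in which case the remaining items must fit into the residual capacities $b_i - s_{i,j}$, reducing to the second case.

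For the running time, I would count the number of table entries and the cost per entry. Since each capacity $b_i$ ranges over the integer values $0, 1, \ldots, c_i$, there are $\Pi_{i=1}^d (c_i + 1)$ distinct capacity tuples, and $j$ ranges over $n+1$ values, giving $\bigo(n \cdot \Pi_{i=1}^d c_i)$ entries (absorbing the $+1$ terms into the product). Evaluating each entry via the recurrence requires checking the feasibility condition $s_{i,j} \leq b_i$ across all $d$ dimensions and performing the subtraction in each coordinate, which costs $\bigo(d)$ time. Multiplying gives the claimed bound $\bigo(n \cdot d \cdot \Pi_{i=1}^d c_i)$, and using $c_i \leq c_{\max}$ for each $i$ yields the weaker bound $\bigo(n \cdot d \cdot (c_{\max})^d)$.

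I expect the argument to be essentially routine, as it is a direct generalization of the well-known one-dimensional scheme; the only point requiring mild care is the feasibility guard in the recurrence, ensuring that item $a_j$ is added only when it respects the residual capacity in \emph{every} dimension simultaneously. Since this is a cited textbook result (Section 9.3.2 of \cite{KPP10}), I would keep the exposition brief and lean on the induction sketch rather than belaboring the correctness proof.
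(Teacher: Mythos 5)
Your proof is correct and is exactly the standard dynamic program that the paper cites from Section 9.3.2 of \cite{KPP10} (and that the paper itself spells out for the analogous multiple-knapsack result, Theorem \ref{t-dy3x}): a table indexed by the item prefix and a $d$-tuple of residual capacities, with an include/exclude recurrence costing $\bigo(d)$ per entry. The only cosmetic slack is that the table really has $\Pi_{i=1}^d(c_i+1)$ capacity tuples rather than $\Pi_{i=1}^d c_i$, but this convention is shared by the cited statement itself, so nothing further is needed.
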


\subsection{Pseudo-polynomial Algorithms} \label{multip-pseudo}

The existence of pseudo-polynomial algorithms
for {\sc Max d-KP} depends on the assumption whether the number of 
dimensions $d$ is given in the input or is assumed to be fixed.

\begin{theorem}\label{th-multi-pseudo}
{\sc Max d-KP} is not pseudo-poly\-no\-mial.
\end{theorem}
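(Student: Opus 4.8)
The plan is to prove the statement via strong NP-hardness: by Theorem \ref{th-strongnp-pseudo}, every strongly NP-hard problem fails to be pseudo-polynomial, so it suffices to show that the decision version of {\sc Max d-KP} remains NP-hard even when all of its numbers are bounded by a polynomial in the length of the input (Definition \ref{def-str-np-h}). To this end I would give a polynomial-time reduction from the NP-hard {\sc Independent Set} problem that produces {\sc Max d-KP} instances in which every size, profit, and capacity is bounded by a constant.

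Given a graph $G=(V,E)$ with $|V|=n$ and $|E|=d$ together with an integer $k$, I would build an instance of {\sc Max d-KP} by introducing one item $a_j$ for each vertex and one dimension for each edge. I set every profit $p_j=1$; for the dimension corresponding to an edge $e$ I let $s_{e,j}=1$ whenever the vertex $v_j$ is an endpoint of $e$ and $s_{e,j}=0$ otherwise; and I set every capacity $c_e=1$. The threshold $k$ is kept unchanged. This is exactly the setting in which the footnote of this section remarks that sizes $s_{i,j}=0$ are needed.

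For correctness, a subset $A'$ of items obeys all capacity constraints precisely when, for every edge, at most one of its two endpoints is chosen, i.e.\ when the selected vertices form an independent set; and since every profit equals $1$, the profit of $A'$ equals the number of chosen vertices. Hence $G$ has an independent set of size at least $k$ if and only if the constructed instance has a feasible solution of profit at least $k$. All profits, sizes, and capacities equal $0$ or $1$, while $n=|V|$ and $d=|E|$ are polynomial in the input, so every number is polynomially bounded in the input length; this establishes that {\sc Max d-KP} is strongly NP-hard, and Theorem \ref{th-strongnp-pseudo} then finishes the proof.

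The one technical point, which I expect to be the only real obstacle, is the standing assumption $\sum_{i=1}^{d}s_{i,j}\geq 1$ for every item, which is violated by an item arising from an isolated vertex. This is easily circumvented: isolated vertices lie in some maximum independent set, so I would either delete them and decrement $k$ accordingly, or add a single auxiliary dimension in which every item has size $1$ and the capacity is $n$, so that the assumption is met without constraining any solution (the auxiliary capacity $n$ is still polynomial in the input and so preserves strong NP-hardness). I note finally that this value-preserving construction is exactly a PTAS reduction from {\sc Max Independent Set}, which is the fact relied upon in the introduction to this section when ruling out a PTAS for {\sc Max d-KP}.
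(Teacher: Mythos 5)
Your proposal is correct and follows essentially the same route as the paper: a reduction from {\sc Independent Set} with one item per vertex, one dimension per edge, all profits, capacities, and incidence-sizes in $\{0,1\}$, combined with the observation that polynomially bounded numbers yield strong NP-hardness and hence (via Theorem \ref{th-strongnp-pseudo}) non-pseudo-polynomiality. Your extra care about isolated vertices violating the assumption $\sum_{i=1}^{d}s_{i,j}\geq 1$ is a small refinement the paper's proof glosses over, but it does not change the argument.
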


\begin{proof}
Every NP-hard problem for which every instance $I$ only 
contains numbers $x$, such that the value of $x$ is polynomial bounded
in $|I|$ is strongly NP-hard (cf.~Definition \ref{def-str-np-h}) and
thus not pseudo-polynomial (cf.~Theorem \ref{th-strongnp-pseudo}).
To show that {\sc Max d-KP} is not pseudo-polynomial in general
we can use a pseudo-polynomial reduction (cf.~page 101 in \cite{GJ79}) 
from {\sc Max Independent Set}. The problem
is that of finding a maximum independent set in a graph $G=(V,E)$, 
i.e.\ a subset $V'\subseteq V$ 
such that no two vertices of $V'$ are adjacent and $V'$ has maximum size.

Let graph $G=(V,E)$ be an input for the {\sc Max Independent Set}
problem. For every vertex $v_j$ of $G$ we define an item $a_j$ 
and for every edge of $G$ we define a dimension, i.e.\ $d=|E|$,
within an instance $I_G$ for {\sc Max d-KP}. 
The profit of every item is equal to 1 and the capacity for every 
dimension is equal to 1, too. The size of an item within a dimension
is equal to 1, if the vertex corresponding to the item
is involved in the edge corresponding to the dimension, otherwise
the size is 0, see Figure \ref{red} and Example \ref{ex-proof}.

By this construction a maximum independent set $V'\subseteq V$ in $G$
corresponds to  a subset $A'$ of maximum profit within $I_G$. Further
since all $c_i=1$ for every edge we can choose at most one vertex
into an independent set, and thus a subset $A'$ of maximum profit within $I_G$
corresponds to a maximum independent set $V'\subseteq V$ in $G$.
Thus the size of a maximum 
independent set in $G$ is equal to the value of a maximum possible
profit within $I_G$. Since the value of the largest number in $I_G$
is polynomial bounded in the value of the largest number and 
the input size of our original instance $G$ we have found a pseudo-polynomial reduction.
\end{proof}

\begin{example}\label{ex-proof} Figure \ref{red} shows an example for the
reduction given in the proof of Theorem \ref{th-multi-pseudo}.
The graph on six vertices and seven edges defines the {\sc Max d-KP}
instance on six items and seven dimensions
shown in the table.
The maximum independent set $V'=\{v_1,v_4,v_6\}$ of size $3$ corresponds to 
the subset $A'=\{a_1,a_4,a_6\}$ of maximum profit $3$.
\end{example}

\begin{figure}[ht]
\begin{minipage}{45mm}
\centerline{\epsfig{figure=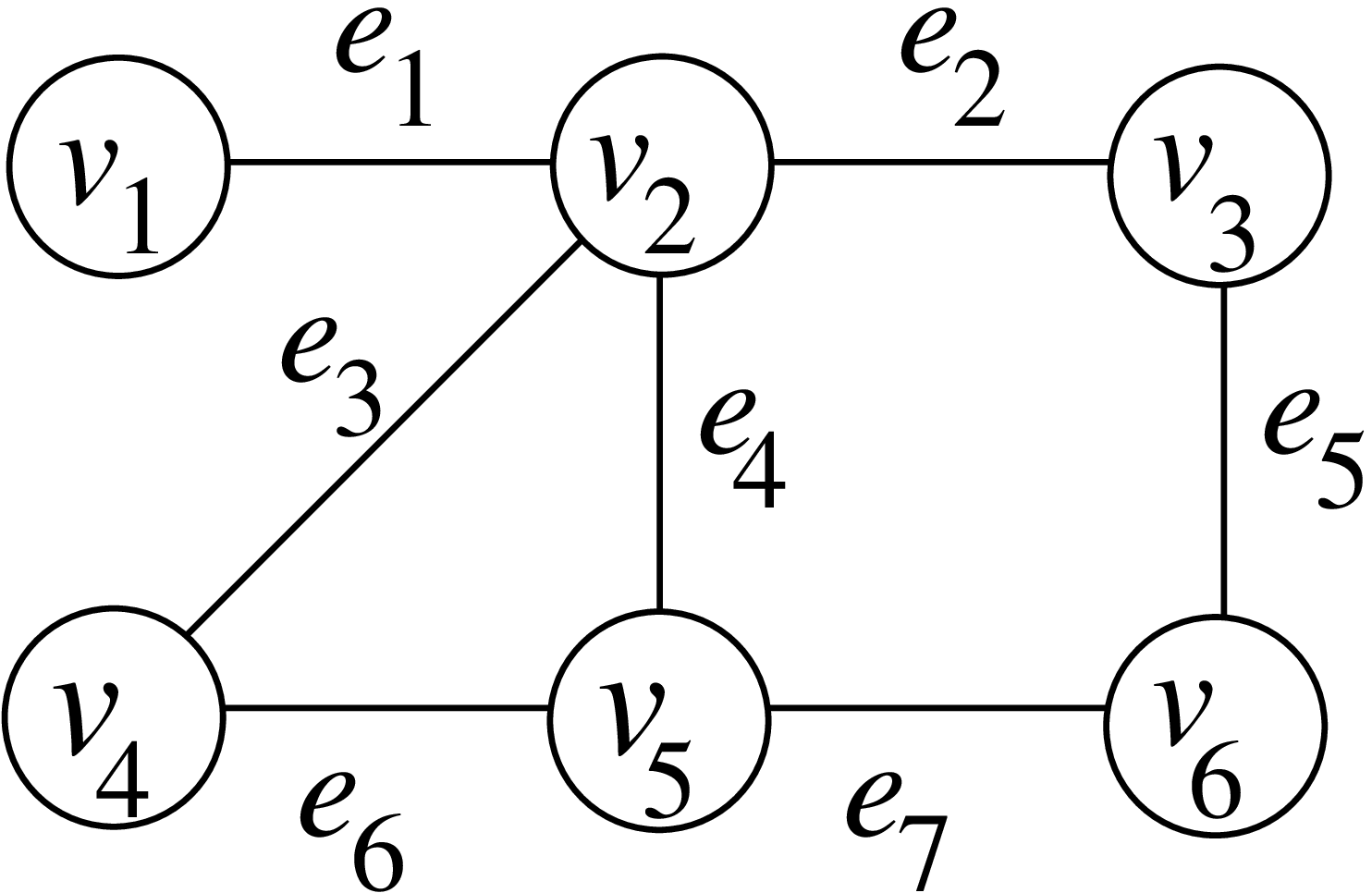,width=3.5cm}}
\end{minipage}
\hspace{1cm}
\begin{minipage}[b]{50mm}
$$
\begin{array}{l|c|c|c|c|c|c||c}
s_{i,j}  & j=1 & j=2 & j=3 & j=4 & j=5 & j=6 & c_i\\
\hline
i=1   &  1  &  1 &  0  & 0       & 0 & 0 & 1   \\
i=2   &  0  &  1 &  1  & 0       & 0 & 0 & 1 \\
i=3   &  0  &  1 &  0  & 1       & 0 & 0 & 1 \\
i=4   &  0  &  1 &  0  & 0       & 1 & 0 & 1\\
i=5   &  0  &  0 &  1  & 0       & 0 & 1 & 1 \\
i=6   &  0  &  0 &  0  & 1       & 1 & 0 & 1 \\
i=7   &  0  &  0 &  0  & 0       & 1 & 1 & 1 \\
\hline
\hline
p_j     & 1 & 1  & 1  & 1   & 1 & 1      \\
\end{array}
$$
\end{minipage}
\caption{Reduction in the proof of Theorem \ref{th-multi-pseudo}}
\label{red}
\end{figure}

\begin{remark} The proof of Theorem \ref{th-multi-pseudo} shows that 
{\sc Max d-KP} is not pseudo-poly\-no\-mial for $n\leq d$.
In order to show the same result for the more common case $d<n$
we can consider special  {\sc Max Independent Set} instances by replacing
every graph $G=(V,E)$ by $G'=(V',E)$, where $V'=V\cup\{v_e~|~e\in E\}$.
\end{remark}

%
%

\begin{theorem}\label{th-multi-pseudo2}
For every fixed $d$ there is a pseudo-polynomial algorithm that solves 
{\sc Max d-KP} in time $\bigo(n\cdot d \cdot (c_{\max})^d)$.
\end{theorem}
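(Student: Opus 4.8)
The plan is to invoke the dynamic programming algorithm already established in Theorem \ref{t-dy3}, whose running time is $\bigo(n\cdot d \cdot \Pi_{i=1}^d c_i)\subseteq\bigo(n\cdot d \cdot (c_{\max})^d)$, and then argue that this bound fits the definition of a pseudo-polynomial running time once $d$ is treated as a fixed constant. No new algorithm needs to be designed; the entire content of the statement lies in reinterpreting the known running time through the lens of Section \ref{sec-pre}.

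First I would recall that an algorithm is pseudo-polynomial if its running time is bounded by a polynomial $p(|I|,\max(I))$, where $\max(I)$ denotes the value of the largest number occurring in $I$. For the given algorithm the running time is $\bigo(n\cdot d \cdot (c_{\max})^d)$. Since $n\leq |I|$ holds for every instance and $c_{\max}$ is one of the numbers appearing in $I$, we have $c_{\max}\leq \max(I)$. Substituting these bounds yields a running time of order $\bigo(|I|\cdot d \cdot (\max(I))^d)$.

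The decisive step is to observe that, because $d$ is fixed, the factor $d$ is an absolute constant and the exponent $d$ in $(\max(I))^d$ is a constant as well. Consequently $(\max(I))^d$ is a polynomial in $\max(I)$, and the product $|I|\cdot d \cdot (\max(I))^d$ is a polynomial in $|I|$ and $\max(I)$. This exhibits a polynomial $p$ witnessing pseudo-polynomiality, which completes the argument.

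The point that requires care — though it is not an obstacle so much as the crux of the statement — is exactly the role of the hypothesis that $d$ is fixed. If $d$ were part of the input and allowed to grow with $|I|$, the term $(c_{\max})^d$ would no longer be polynomially bounded in $\max(I)$, and indeed Theorem \ref{th-multi-pseudo} shows that {\sc Max d-KP} is not pseudo-polynomial in that general setting. I would therefore emphasize in the write-up that fixing $d$ is precisely what converts the a priori exponential dependence $(c_{\max})^d$ into a polynomial one, reconciling this result with the non-pseudo-polynomiality of the general problem.
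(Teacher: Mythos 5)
Your proposal is correct and follows essentially the same route as the paper: both invoke the dynamic programming algorithm of Theorem \ref{t-dy3} and observe that for fixed $d$ the factor $n\cdot d$ is polynomial in $|I|$ while $(c_{\max})^d$ becomes a polynomial in $\max(I)$ of constant degree. Your additional remarks on why fixing $d$ is essential, and the contrast with Theorem \ref{th-multi-pseudo}, are accurate but not needed for the argument itself.
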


\begin{proof}
We consider the algorithm with running 
time $\bigo(n\cdot d \cdot (c_{\max})^d)$ of
Theorem \ref{t-dy3}. If $d$ is assumed to be fixed, $n\cdot d$
is polynomial in the input size and $(c_{\max})^d$ is polynomial
in the value of the largest occurring number in every input.
\end{proof}

\subsection{Parameterized Algorithms}

Also the {\sc Max d-KP} problem 
is defined on inputs of various informations, which motivates us
to consider parameterized versions of the problem. By 
adding a threshold value $k$ for  the profit to the instance
and choosing a parameter $\kappa(I)$ from the instance
$I$, we define the following parameterized problem.

\begin{desctight}
\item[Name] $\kappa$-{\sc d-dimensional Knapsack} ($\kappa$-{\sc d-KP})

\item[Instance]  A set $A=\{a_1,\ldots,a_n\}$ of $n$ items and a number $d$ of dimensions.
Every item $a_j$ has a profit $p_j$ and for dimension $i$ the size $s_{i,j}$. 
Further for every dimension $i$ there is a capacity $c_i$ and we have given 
a positive integer $k$.

\item[Parameter] $\kappa(I)$

\item[Question] Is there a subset $A'\subseteq A$ such that
the total profit of $A'$  is at least $k$ 
and 
for every dimension $i$ the total size of $A'$ is at most the capacity $c_i$?
\end{desctight}

In Theorem 7 in \cite{KS10} it is shown that for $d=2$ dimensions 
$k$-{\sc d-KP} is $\w[1]$-hard. We generalize this result for $d\geq 2$
dimensions to obtain a result for parameter $(d,k)$ in Theorem 
\ref{maintheorem-d-kp}.

\begin{lemma}\label{le-d-kp}
For every $d\geq 2$ problem $k$-{\sc d-KP} is $\w[1]$-hard.
\end{lemma}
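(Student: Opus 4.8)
The plan is to reduce from the known base case $d=2$, which is $\w[1]$-hard by Theorem 7 in \cite{KS10}, and show that this hardness is preserved when we add extra dimensions. The cleanest way to do this is a parameterized reduction that takes an arbitrary instance of $k$-{\sc 2-KP} and produces an instance of $k$-{\sc d-KP} for any fixed $d\geq 2$, keeping the parameter $k$ unchanged so that $\w[1]$-hardness transfers.

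First I would take an instance $I$ of $k$-{\sc 2-KP} with item set $A=\{a_1,\dots,a_n\}$, sizes $s_{1,j},s_{2,j}$ in the two dimensions, capacities $c_1,c_2$, profits $p_j$, and threshold $k$. I would build an instance $I'$ of $k$-{\sc d-KP} on the same $n$ items with the same profits and the same threshold $k$. The first two dimensions of $I'$ are copied verbatim from $I$, i.e.\ $s_{1,j}'=s_{1,j}$, $s_{2,j}'=s_{2,j}$, $c_1'=c_1$, $c_2'=c_2$. For each of the remaining dimensions $i\in\{3,\dots,d\}$ I would make the capacity constraint vacuous, so that it never rules out any subset that is feasible in the first two dimensions. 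The simplest realization is to set the sizes in these dimensions so that every subset of $A$ fits; for example set $s_{i,j}'=0$ for all $j$ and all $i\in\{3,\dots,d\}$, and set $c_i'=1$ (recall sizes of value $0$ are permitted provided each item has positive total size, which holds automatically since each item already has $s_{1,j}+s_{2,j}\geq 1$ inherited from $I$, so the side condition $\sum_{i}s_{i,j}'\geq 1$ is satisfied).

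With this construction, a subset $A'\subseteq A$ is feasible for $I'$ if and only if it satisfies $\sum_{j\in A'}s_{1,j}\leq c_1$ and $\sum_{j\in A'}s_{2,j}\leq c_2$ (the added dimensions impose no constraint, since every dimension $i\geq 3$ gives $\sum_{j\in A'}0=0\leq 1=c_i'$), which is exactly feasibility for $I$. Since profits and the threshold $k$ are identical, $I'$ is a yes-instance of $k$-{\sc d-KP} if and only if $I$ is a yes-instance of $k$-{\sc 2-KP}. The reduction runs in polynomial time (it writes down $d-2$ extra all-zero dimensions), $d$ is a fixed constant, and the parameter value $k$ is preserved exactly, so it is a valid parameterized reduction. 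Combining this with the $\w[1]$-hardness of $k$-{\sc 2-KP} yields $\w[1]$-hardness of $k$-{\sc d-KP} for every $d\geq 2$.

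I do not expect a serious obstacle here; the only point requiring care is making the padding dimensions genuinely inert while respecting the paper's standing conventions on instances. In particular I must ensure that the admissibility condition allowing $s_{i,j}=0$ (namely $\sum_{i=1}^d s_{i,j}\geq 1$ for each item) is not violated, which is guaranteed because the inherited first two dimensions already give each item positive total size, and that the trivial-solution assumptions $s_{i,j}'\leq c_i'$ and $\sum_j s_{i,j}'\geq c_i'$ stated earlier can be met or are harmless for the padded dimensions. If one prefers to honor $\sum_{j}s_{i,j}'\geq c_i'$ strictly, a minor variant assigns some single item a size of $1$ in each padding dimension with capacity $c_i'=n$, which again never excludes any feasible subset; this is the one bookkeeping detail I would verify, but it does not affect the correctness of the equivalence.
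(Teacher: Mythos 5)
Your proposal is correct and takes essentially the same route as the paper: the paper also starts from the $\w[1]$-hard base case $d=2$ of \cite{KS10} and pads with an inert extra dimension (applied iteratively from $d$ to $d+1$, whereas you add all $d-2$ padding dimensions at once). The only bookkeeping detail to adjust is that, to honor the standing assumption $\sum_{j=1}^n s_{i,j}\geq c_i$, the padding dimensions should give \emph{every} item size $1$ with capacity $n$ (as the paper does), rather than all-zero sizes with capacity $1$ or a single unit-size item with capacity $n$.
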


\begin{proof}
Let $I$ be an instance for $k$-{\sc d-KP} on $d$ dimensions and $n$ items 
of profits $p_j$ and sizes $s_{i,j}$ and $m$ knapsacks of capacities $c_i$. 
We define an instance $I'$ for $k$-{\sc d-KP}  on $d+1$ dimensions and $n$ items
of profits $p'_j$ and sizes $s'_{i,j}$ and $m$ knapsacks of capacities $c'_i$
as follows. The profits are $p'_j=p_j$ for $1\leq j \leq n$. The sizes are
$s'_{i,j}=s_{i,j}$ for $1\leq i\leq d$ and $1\leq j \leq n$ and $s'_{d+1,j}=1$ 
for $1\leq j \leq n$. The capacities are $c'_i=c_i$ for $1\leq i\leq d$ 
and $c_{d+1}=n$.
Then $I$ has a solution of profit $k$ if and only if $I'$ has
a solution of profit $k$.
\end{proof}

Next we give  parameterized algorithms for the d-dimensional knapsack problem.

\begin{table}[ht]
\begin{center}
\begin{tabular}{|l||l|lcl|}
\hline
Parameter       &  class & time    & & \\
\hline\hline
$n$            & $\not\in\pfpt$, $\in\fpt$   & $\bigo(d\cdot n \cdot 2^n)$ & $\subseteq$ & $\bigo(|I| \cdot 2^n)$ \\
\hline
$k$            &  $\w[1]$-hard, $\in\xp$   & $\bigo(d\cdot n^{k+1})$& $\subseteq$ & $\bigo(|I|\cdot n^{k})$     \\
\hline
 $(c_1,\ldots,c_d)$    &  $\in\pfpt$   & $\bigo(d \cdot n \cdot  \Pi_{i=1}^d c_i)$ & $\subseteq$ &$\bigo(|I| \cdot  \Pi_{i=1}^d c_i)$  \\
\hline
$(p_1,\ldots,p_n)$    & $\in\fpt$   & $\bigo(d\cdot n \cdot 2^{(\sum_{j=1}^{n}p_j)})$ & $\subseteq$ &  $\bigo(|I| \cdot 2^{(\sum_{j=1}^{n}p_j)})$     \\
\hline
$(s_{1,1},\ldots,s_{d,n})$ & $\in\fpt$   & $\bigo(d\cdot n \cdot 2^{\sum_{j=1}^n \sum_{i=1}^{d}s_{i,j}})$ & $\subseteq$ &  $\bigo(|I| \cdot 2^{\sum_{j=1}^n \sum_{i=1}^{d}s_{i,j}})$                                      \\

\hline
$d$            &    $\not\in\xp$ & & &\\
 \hline
$(d,\val)$  &  $\in\fpt$  & $\bigo(d\cdot n \cdot 2^{d\cdot\operatorname{val}(I)} )$  & $\subseteq$ & $\bigo( |I|\cdot 2^{d\cdot\operatorname{val}(I)} )$   \\
\hline

$(d,c_{\max})$  &  $\in\fpt$ &$\bigo(d\cdot n \cdot (c_{\max})^d)$ &$\subseteq$ & $\bigo(|I| \cdot (c_{\max})^d)$ \\
\hline

$(d,k)$  &   $\not\in\fpt$, $\in\xp$ &  $\bigo(d\cdot n^{k+1})$ & &   \\
\hline

\end{tabular}
\end{center}
\caption{Overview of parameterized algorithms for d-KP}\label{tab-dkp-survey}
\end{table}

\begin{theorem}\label{maintheorem-d-kp}
There exist parameterized algorithms for the d-dimensional knapsack problem
such that the running times of Table \ref{tab-dkp-survey} 
hold true and the problems belong to the specified parameterized complexity classes.
\end{theorem}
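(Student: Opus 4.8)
The plan is to work through the rows of Table \ref{tab-dkp-survey} one family at a time, in each case invoking either the dynamic programming algorithm of Theorem \ref{t-dy3}, a brute-force enumeration over all subsets, or a bounded-size subset enumeration justified by Lemma \ref{le-bd-sol-size-dkp}; the class memberships and hardness claims then follow from the general machinery of Section \ref{sec-pre} together with Lemma \ref{le-d-kp}.

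For the capacity-flavoured parameters I would rely entirely on Theorem \ref{t-dy3}, which solves {\sc Max d-KP} in time $\bigo(n\cdot d\cdot \Pi_{i=1}^d c_i)$. For the combined parameter $(c_1,\ldots,c_d)$ this running time is already a product of a polynomial in $|I|$ and a polynomial in the parameters, hence a polynomial fpt-algorithm, giving membership in $\pfpt$. For $(d,c_{\max})$ I bound $\Pi_{i=1}^d c_i\le (c_{\max})^d$, and for $(d,\val)$ I use $c_i\le\max(I)\le 2^{\val(I)}$ from Section \ref{sec-pre}, so that $\Pi_{i=1}^d c_i\le 2^{d\cdot\val(I)}$; in both cases the bound has the shape $f(\kappa(I))\cdot|I|^{\bigo(1)}$, so the problems lie in $\fpt$.

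For the remaining upper bounds I would use enumeration. Brute force over all $2^n$ subsets, checking constraint (\ref{lp-dkp2}) for each in time $\bigo(d\cdot n)$, solves $n$-{\sc d-KP} in $\bigo(d\cdot n\cdot 2^n)$. The two combined parameters over all profits and over all sizes reduce to this same brute force via a counting observation: since every $p_j\ge 1$ we have $n\le\sum_{j=1}^n p_j$, and since the input convention guarantees $\sum_{i=1}^d s_{i,j}\ge 1$ for every item we have $n\le\sum_{j=1}^n\sum_{i=1}^d s_{i,j}$; substituting these into $2^n$ yields the claimed fpt running times. For the $\xp$ rows ($k$ and $(d,k)$) I would instead appeal to Lemma \ref{le-bd-sol-size-dkp}: it suffices to search for a feasible solution of size at most $k$, so enumerating the $\bigo(n^k)$ subsets of cardinality at most $k$ and testing each in $\bigo(d\cdot n)$ time gives $\bigo(d\cdot n^{k+1})$, an xp-algorithm with respect to $k$.

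Finally, for the lower bounds I would argue as follows. Since $n\le|I|$ and {\sc Max d-KP} is NP-hard already for $d=1$ (where it is ordinary {\sc Knapsack}), Corollary \ref{cor-pfpt-p} rules out a polynomial fpt-algorithm for $n$-{\sc d-KP} unless $\p=\np$. The $\w[1]$-hardness of $k$-{\sc d-KP} is exactly Lemma \ref{le-d-kp}, and the same lemma with $d=2$ fixed shows that $(d,k)$-{\sc d-KP} is not in $\fpt$ unless $\fpt=\w[1]$. The step I expect to be the most delicate is the claim that {\sc d-KP} parameterized by $d$ alone is not even in $\xp$: here I would use para-NP-hardness, namely that an xp-algorithm of running time $f(d)\cdot|I|^{g(d)}$ would, restricted to the fixed slice $d=1$, run in polynomial time and thus decide the NP-hard {\sc Knapsack} problem in polynomial time, forcing $\p=\np$. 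Stating this slicing argument cleanly, and checking that the combined-parameter reductions of the previous paragraph genuinely preserve the parameter as required by the definitions, is the part that needs the most care, whereas the remaining entries are direct consequences of Theorem \ref{t-dy3} and the enumeration bounds.
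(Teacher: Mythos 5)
Your proposal matches the paper's proof essentially step for step: Theorem \ref{t-dy3} for the capacity-flavoured parameters, brute force over $2^n$ subsets for $n$ with the counting reductions $n\leq\sum_j p_j$ and $n\leq\sum_j\sum_i s_{i,j}$ for the profit and size parameters, Lemma \ref{le-bd-sol-size-dkp} for the $\bigo(d\cdot n^{k+1})$ xp-algorithm, Corollary \ref{cor-pfpt-p} and Lemma \ref{le-d-kp} for the lower bounds, and the slice-to-$d=1$ argument to exclude an xp-algorithm for parameter $d$. The only differences are cosmetic (the paper additionally mentions the Lenstra/Kannan integer-programming alternative for parameter $n$, and your conditional phrasing of the $(d,k)\not\in\fpt$ claim is if anything slightly more careful), so the proposal is correct and takes the same route.
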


\begin{proof}
For parameter $\kappa(I)=n$ we can use
a brute force solution by checking for all $2^n$ possible subsets 
of $A$ within condition (\ref{lp-dkp2}) in the given  BIP, which
leads to an algorithm of time complexity $\bigo(d\cdot n \cdot 2^n)$.
Alternatively one can use  the result of \cite{Len83} or its improved
version in \cite{Kan87}, 
which implies that integer linear programming is fixed-parameter tractable for the 
parameter ''number of variables''. 
Thus by BIP (\ref{lp-dkp1})-(\ref{lp-dkp3}) problem $n$-{\sc d-KP}
is fixed-parameter tractable.

As a lower bound Corollary \ref{cor-pfpt-p} implies that there is no 
polynomial fpt-algorithm for  $n$-{\sc d-KP}.

For the standard parameterization $\kappa(I)=k$ we can use
the reduction given in the proof of Theorem \ref{th-multi-pseudo} in
order to obtain a parameterized reduction from the $k$-{\sc Independent Set} 
problem, which is $\w[1]$-hard, 
see \cite{DF13}. Thus $k$-{\sc d-KP} is $\w[1]$-hard.

In order to obtain an xp-algorithm that solves $k$-{\sc d-KP}
we apply the result of Lemma \ref{le-bd-sol-size-dkp} which allows
us to assume that every solution $A'$ of $k$-{\sc d-KP} has at most $k$
items. Thus we have to  check at most $n^k$ different possible solutions.
Each such solution can be verified with condition (\ref{lp-dkp2}) of the given 
BIP in time $\bigo(d\cdot n)$, which 
implies an xp-algorithm w.r.t.\ parameter $k$ of time $\bigo(d\cdot n^{k+1})$.

For parameter $\kappa(I)=(c_1,\ldots,c_d)$ we consider the dynamic programming 
algorithm cited in 
Theorem \ref{t-dy3} which has running time $\bigo(d \cdot n \cdot  \Pi_{i=1}^d c_i)$. 
From a parameterized point of view a polynomial fpt-algorithm that solves
$(c_1,\ldots,c_d)$-{\sc d-KP}  follows.

For parameter $\kappa(I)=(p_1,\ldots,p_n)$ we know that $n\leq \sum_{j=1}^{n}p_j$
since all profits are positive integers and thus we conclude an
fpt-algorithm for {\sc d-KP}  w.r.t. $(p_1,\ldots,p_n)$ by the result for parameter $n$.

An xp-algorithm w.r.t. $(p_1,\ldots,p_n)$ can be obtained as follows.
If for some $(p_1,\ldots,p_n)$-{\sc d-KP} instance it holds 
$\sum_{j=1}^{n}p_j < k$, then it is not possible to reach a profit of $k$.
Otherwise we know that $k\leq \sum_{j=1}^{n}p_j$ which implies by the upper result 
for parameter $k$ an xp-algorithm of running time  $\bigo(d\cdot n^{\sum_{j=1}^n p_j+1})$ 
for  $(p_1,\ldots,p_n)$-{\sc d-KP}.

The parameter $\kappa(I)=(s_{1,1},\ldots,s_{d,n})$ can be treated in the same 
way. Since all sizes are non-negative integers and $\sum_{i=1}^{d}s_{i,j}\geq 1$ for every item $a_j$
by our assumptions, it holds $n\leq \sum_{j=1}^n \sum_{i=1}^{d}s_{i,j}$
and thus we conclude an
fpt-algorithm for {\sc d-KP}  w.r.t. $(s_{1,1},\ldots,s_{d,n})$ by the result for parameter $n$.

If we choose $\kappa(I)=d$ then the parameterized problem is at 
least  $\w[1]$-hard, unless $\p=\np$. An fpt-algorithm with respect
to parameter $d$ would imply a polynomial time algorithm
for every fixed $d$, but even for $d=1$ the problem is NP-hard. 
For the same reason there is no xp-algorithm with respect
to parameter $d$.


Next we consider several combined parameters including $d$.
For parameter $\kappa(I)=(d,\val)$ we apply Theorem \ref{t-dy3} to obtain 
a parameterized running time of
$\bigo(d \cdot n \cdot (c_{\max})^d) \subseteq \bigo(d\cdot n \cdot (2^{\operatorname{val}(I)})^d)$.

For parameter $\kappa(I)=(d,c_{\max})$ we also can use
Theorem \ref{t-dy3} which leads to the fact that {\sc d-KP} can be solved in  time 
$\bigo(n \cdot d \cdot (c_{\max})^d)$.

For parameter $\kappa(I)=(d,k)$ the problem {\sc d-KP} is $\w[1]$-hard.
If $(d,k)$-{\sc d-KP} would be in $\fpt$, then for every fixed dimension $d$ 
problem $k$-{\sc d-KP} is fixed-parameter tractable in contradiction to
Lemma \ref{le-d-kp}. Since the running time $\bigo(d\cdot n^{k+1})$, which was
mentioned above for parameter $k$, is polynomial for every
fixed $k$ and $d$ the problem $(d,k)$-{\sc d-KP}  is slicewise polynomial.
\end{proof}

Theorem \ref{maintheorem-d-kp} states $\w[1]$-hardness of $k$-{\sc d-KP}. 
But this changes if we only look for solutions of high profit.
Under a restriction $n/c\leq k$, for some constant $c>1$, fixed
parameter tractability with regard to $n$ then implies the problem to be in
$\fpt$ with respect to parameter $k$.

\subsection{Kernelizations}

Next we give kernelization bounds for the d-dimensional knapsack problem.

\begin{table}[ht]
\begin{center}
\begin{tabular}{|l||l|l|l|}
\hline
Parameter      & lower bound  & upper bound  \\
\hline\hline
$n$    ~~~~~~~~   if $d\leq n$     & $\omega(1)$  & $\bigo(n^5)$  \\
\hline
 $(c_1,\ldots,c_d)$     & $\Theta(1)$ & $\Theta(1)$\\
\hline
$(p_1,\ldots,p_n)$    & &   $\bigo(2^{(\sum_{j=1}^{n}p_j)})$ \\
\hline
$(s_{1,1},\ldots,s_{d,n})$  &  & $\bigo(2^{\sum_{j=1}^n \sum_{i=1}^{d}s_{i,j}})$\\
\hline
$(d,\val)$         & &  $\bigo(2^{d\cdot\operatorname{val}(I)} )$  \\
\hline
$(d,c_{\max})$     &  &$\bigo((c_{\max})^d)$  \\
\hline
\end{tabular}
\end{center}
\caption{Overview for kernel sizes of parameterized d-KP}\label{tab-dkp-ker-survey}
\end{table}

\begin{theorem}\label{maintheorem2-dkp}
There exist kernelizations for the parameterized d-dimensional knapsack problem 
such that the upper bounds for the sizes
of a possible kernel in Table \ref{tab-dkp-ker-survey} 
hold true.
\end{theorem}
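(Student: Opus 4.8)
The plan is to prove each row of Table \ref{tab-dkp-ker-survey} separately, reusing the general machinery from Section \ref{sec-pre} exactly as was done in the proof of Theorem \ref{maintheorem2} for the one-dimensional case. The key observation is that each kernelization bound pairs naturally with a corresponding algorithmic result established in Theorem \ref{maintheorem-d-kp}, so the main work is to invoke the right meta-theorem (Theorem \ref{th-ker}, Theorem \ref{th-me}, or Theorem \ref{th-ft87}) for each parameter.

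First I would dispatch the parameters that admit fpt-algorithms but are not known to be in \pfpt{} by a direct appeal to Theorem \ref{th-ker}: for $\kappa(I)\in\{(p_1,\ldots,p_n),(s_{1,1},\ldots,s_{d,n}),(d,\val),(d,c_{\max})\}$, Theorem \ref{maintheorem-d-kp} supplies an fpt-algorithm with running time $\bigo(f(\kappa(I))\cdot|I|^c)$, where $f$ is respectively $2^{\sum_j p_j}$, $2^{\sum_{j,i}s_{i,j}}$, $2^{d\cdot\val(I)}$, and $(c_{\max})^d$. Theorem \ref{th-ker} then yields a kernel of size $\bigo(f(\kappa(I)))$, matching each table entry. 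For the combined capacity parameter $(c_1,\ldots,c_d)$, Theorem \ref{maintheorem-d-kp} gives a \emph{polynomial} fpt-algorithm of running time $\bigo(d\cdot n\cdot\Pi_{i=1}^d c_i)$, so by Theorem \ref{th-me} the problem lies in \pfpt{} and hence admits a kernel of $\bigo(1)$ size, which is the claimed $\Theta(1)$ upper bound (the matching lower bound being trivial).

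The one genuinely new computation is the row for parameter $n$ with the restriction $d\leq n$, giving the bound $\bigo(n^5)$. Here I would mimic the $n$-{\sc KP} argument from the proof of Theorem \ref{maintheorem2}: apply Theorem \ref{th-ft87} once to the capacity constraint of each dimension $i\in[d]$ to replace $(s_{i,1},\ldots,s_{i,n},c_i)$ by an integer vector whose entries are bounded by $2^{4(n+1)^3}(\ell+2)^{(n+1)(n+3)}$, and once to the profit/threshold inequality. After compression the instance consists of $n$ items, $d$ capacities, and roughly $(d+1)n+(d+1)$ numbers, each of bit-length $\bigo(n^3+n^2\log_2(\ell+2))$. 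Taking $\ell\leq n$ as in the one-dimensional case and using $d\leq n$, the total size is $\bigo\!\left(nd+((d+1)n+d+1)\cdot n^3\right)\subseteq\bigo(n^2\cdot n^3)=\bigo(n^5)$, which is the stated bound.

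The main obstacle I anticipate is the row for $n$: I must verify that Theorem \ref{th-ft87} can be applied dimension-by-dimension while preserving the sign of \emph{every} constraint simultaneously for all feasible $0/1$ assignments, i.e.\ that the $d+1$ independent compressions (the $d$ size constraints and the profit constraint) jointly yield an equivalent instance. This is fine because each $\sgn$-preservation statement holds for all $(x_1,\ldots,x_n,-1)$ with $\sum|x_j|\leq\ell+1$, and a $0/1$-vector extended by $-1$ has $\ell\leq n$; feasibility and the profit threshold are each determined by one such inequality, so preserving all $d+1$ signs preserves yes/no status. The dependence of the size bound on $d$ is then controlled precisely by the hypothesis $d\leq n$, which is exactly why the table restricts this row to $d\leq n$; without it the factor $(d+1)$ cannot be absorbed into a polynomial in $n$ alone.
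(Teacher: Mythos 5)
Your proposal is correct and follows essentially the same route as the paper: constant kernels for $(c_1,\ldots,c_d)$ via Theorem \ref{th-me}, kernels of size $\bigo(f(\kappa(I)))$ for the four fpt-parameters via Theorem \ref{th-ker}, and for parameter $n$ the Frank--Tardos compression applied to the $d$ capacity constraints plus the profit constraint, yielding $\bigo(d\cdot n^4)\subseteq\bigo(n^5)$ under $d\leq n$. Your explicit justification that the $d+1$ independent sign-preserving compressions jointly preserve the yes/no status is a point the paper leaves implicit, but the argument and the bounds coincide.
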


\begin{proof}
For parameter  $\kappa(I)=n$ let $I$ be an instance of $n$-{\sc d-KP}.
We proceed as in the proof of Theorem \ref{maintheorem2}.
In the case of $n$-{\sc d-KP} we have to scale $d$  
inequalities of type (\ref{eq-ker-n-1}) and one inequality of type (\ref{eq-ker-n-2})
by Theorem \ref{th-ft87}.
For the obtained instance $I'$ we can
bound $|I'|$ by the number of items $n$ and $d(n+1)+(n+1)=(d+1)(n+1)$ numbers
of value at most $2^{4(n+1)^3} (\ell+2)^{(n+1)(n+3)}$.
Since we can assume $\ell\leq n$ for {\sc
d-KP}, this problem has a  kernel of size 
$$\bigo\left(n+ (d+1)(n+1) \log_2\left( 2^{4(n+1)^3} (\ell+2)^{(n+1)(n+3)} \right) \right)
\subseteq \bigo(d\cdot n^4 + d\cdot \log_2(\ell+2)n^3) \subseteq \bigo(d\cdot n^4),
$$
which implies a kernel for $n$-{\sc d-KP} of size $\bigo(n^4)$ for every fixed $d$ and
a kernel of size $\bigo(n^5)$ for  $d\leq n$.

For parameter $\kappa(I)=(c_1,\ldots,c_d)$ we obtain by 
Theorem \ref{maintheorem-d-kp} and Theorem \ref{th-me}  a kernel of constant
size.

For the remaining four parameters of Table \ref{tab-dkp-ker-survey}  the upper bounds follow by
Theorem \ref{maintheorem-d-kp} and Theorem \ref{th-ker}.
\end{proof}

\section{Multiple Knapsack Problem}\label{def-sec-mkp}

Next we consider the multiple knapsack problem.

\begin{desctight}
\item[Name] {\sc Max multiple Knapsack} ({\sc Max MKP})

\item[Instance] A set $A=\{a_1,\ldots,a_n\}$ of $n$ items and a number $m$ of knapsacks.
Every item $a_j$ has a profit $p_j$ and a size $s_j$.
Each knapsack $i$  has a capacity $c_i$. 

\item[Task] Find $m$ disjoint (possibly empty) subsets $A_1,\ldots, A_m$ of $A$ such that 
the total profit of the selected items  is maximized 
and 
each subset can be assigned to a different knapsack 
$i$ without exceeding its capacity $c_i$ by the sizes of the selected items.
\end{desctight}

For $m=1$ the {\sc Max MKP} problem corresponds to the {\sc Max KP} problem considered in
Section  \ref{sec-kp}.
{\sc Max MKP} does not allow an FPTAS even for $m=2$ knapsacks, see \cite{CK00} or \cite{CKP00}. 
{\sc Max MKP} allows an EPTAS of running time 
$2^{\bigo(\nicefrac{1}{\epsilon}\cdot\log^4(\nicefrac{1}{\epsilon}))} + n^{\bigo(1)}$, see \cite{Jan12}.

The parameters $n$, $m$, $p_j$, $s_{j}$, and $c_i$  are assumed to be positive integers. 
Let $s_{\max}=\max_{1\leq j \leq n}s_j$, 
$p_{\max}=\max_{1\leq j \leq n}p_{j}$, and
$c_{\max}=\max_{1\leq i \leq m}c_i$. 
The same notations are also used for $\min$ instead of $\max$.
In order to avoid trivial solutions
we assume that $s_{\max}\leq c_{\max}$, $s_{\min}\leq c_{\min}$, and  $\sum_{j=1}^{n} s_j > c_{\max}$.
Further we can assume that $n\geq m$, since otherwise we can eliminate the $m-n$ knapsacks
of smallest capacity.

For some instance $I$ of {\sc Max MKP} its size $|I|$ can be bounded (cf.~\cite{GJ79}) by 
$$
\begin{array}{lcl}
|I| &  =  & n + \sum_{j=1}^{n}(1+\lfloor\log_2(p_j)\rfloor) + \sum_{j=1}^{n}(1+\lfloor\log_2(s_j)\rfloor)+  \sum_{i=1}^{m}(1+\lfloor\log_2(c_i)\rfloor)\\
    & \in & \bigo(n + m + \sum_{j=1}^{n}\log_2(p_j) + \sum_{j=1}^{n}\log_2(s_j)+  \sum_{i=1}^{m}\log_2(c_i))\\
    & = & \bigo(n + m + n\cdot \log_2(p_{\max}) + n\cdot \log_2(s_{\max})+ m\cdot \log_2(c_{\max})).\\
\end{array}
$$

By assuming that we have one knapsack of capacity $\sum_{i=1}^m c_i$
similar as in the poof of Lemma \ref{le-blb} we can bound the
number of items w.r.t.\ the sum of all capacities. 

\begin{lemma}[Reduced Instance]\label{le-red-p}
Every  instance of {\sc Max MKP}  can be 
transformed into a reduced instance, such that  
$n\in \bigo(\log (c_{\max})\cdot  \sum_{i=1}^m c_i)$.
\end{lemma}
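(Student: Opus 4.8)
The plan is to mimic the proof of Lemma~\ref{le-blb}, replacing the single capacity $c$ by the combined capacity $C:=\sum_{i=1}^m c_i$. The key observation is that in any feasible solution $A_1,\ldots,A_m$ of {\sc Max MKP} the total size of all selected items is at most $\sum_{i=1}^m c_i=C$, since each $A_i$ respects its capacity $c_i$. Consequently, for a fixed size $s$ the number $t_s$ of selected items of size $s$ satisfies $t_s\cdot s\le C$, hence $t_s\le\lfloor C/s\rfloor$. Thus no feasible solution, and in particular no optimal solution, can use more than $n_s:=\lfloor C/s\rfloor$ items of size $s$.

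First I would set up the reduction. By the normalization $s_{\max}\le c_{\max}$ every item has size between $1$ and $c_{\max}$. For each size $s$ with $1\le s\le c_{\max}$, if the given instance $I_1$ contains more than $n_s$ items of size $s$, I keep only the $n_s$ items of size $s$ of highest profit and discard the rest, obtaining an instance $I_2$ on an item set $A_2\subseteq A_1$ in polynomial time. To see that $OPT(I_1)=OPT(I_2)$, I would use the standard exchange argument: an optimal solution of $I_1$ uses at most $n_s$ items of each size $s$ by the observation above, and items of equal size are interchangeable across all knapsacks without affecting feasibility (the capacity constraints depend only on the sizes placed in each knapsack). Hence I may replace the used size-$s$ items by the $t_s$ highest-profit kept ones, all of which lie in $A_2$; this never decreases the profit and preserves feasibility, so $OPT(I_2)\ge OPT(I_1)$, while $A_2\subseteq A_1$ gives the reverse inequality. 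Therefore $I_2$ is a reduced instance of $I_1$.

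It then remains to bound the number of items by a harmonic series. Using $\lfloor C/s\rfloor\le C/s$ and summing only over the relevant range $1\le s\le c_{\max}$, I would compute
$$
n\ \le\ \sum_{s=1}^{c_{\max}} n_s\ =\ \sum_{s=1}^{c_{\max}}\Big\lfloor\frac{C}{s}\Big\rfloor\ \le\ C\cdot\sum_{s=1}^{c_{\max}}\frac{1}{s}\ <\ C\cdot(\ln(c_{\max})+1)\ \in\ \bigo\Big(\log(c_{\max})\cdot\sum_{i=1}^m c_i\Big).
$$
The main obstacle — indeed the only delicate point — is ensuring that the harmonic factor is $\ln(c_{\max})$ and not $\ln(C)$: this hinges on restricting the summation to sizes $s\le c_{\max}$, which is justified precisely by the normalization $s_{\max}\le c_{\max}$ removing all items too large to fit into any knapsack. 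The combined capacity $C=\sum_{i=1}^m c_i$ enters only through the per-size bound $\lfloor C/s\rfloor$, yielding exactly the claimed estimate $n\in\bigo(\log(c_{\max})\cdot\sum_{i=1}^m c_i)$.
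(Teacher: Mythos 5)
Your proposal is correct and follows essentially the same route as the paper: bound the number of items of each size $s$ by $n_s=\lfloor(\sum_{i=1}^m c_i)/s\rfloor$, keep only the $n_s$ most profitable ones, and sum the harmonic series over $1\le s\le c_{\max}$. The only difference is that you spell out the exchange argument for $OPT(I_1)=OPT(I_2)$, which the paper leaves implicit.
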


\begin{proof}
First we can assume, that there is no item
in $A$, whose size is larger than the maximum capacity $c_{\max}$, i.e.\ 
$s_j\leq c_{\max}$ for every $1\leq j \leq n$.
Further for $1 \leq s\leq c_{\max}$
we can assume that there are at most $n_s:=\lfloor\frac{\sum_{i=1}^m c_i}{s}\rfloor$  
items of size $s$ in $A$. 

By the harmonic series we always can bound the number $n$ of items in $A$ by
$$\begin{array}{lcl}
n&\leq &\sum_{s=1}^{c_{\max}} n_s \\
&=&  \sum_{s=1}^{c_{\max}}  \bigg\lfloor\frac{\sum_{i=1}^m c_i}{s}\bigg\rfloor \\
&\leq &\sum_{s=1}^{c_{\max}} \frac{\sum_{i=1}^m c_i}{s} \\
&=& (\sum_{i=1}^m c_i) \cdot \sum_{s=1}^{c_{\max}} \frac{1}{s}  \\
&< & (\sum_{i=1}^m c_i) \cdot (\ln(c_{\max})+1) \\
&\in &\bigo((\sum_{i=1}^m c_i) \cdot \log (c_{\max})).
\end{array}$$

If we have given an instance $I_1$ for {\sc Max MKP}  with more than 
the mentioned number $n_s$ of items of size  $s$ for some $1 \leq s\leq c_{\max}$, we remove
all of them except the $n_s$ items of the highest profit. 
The new instance $I_2$ satisfies $n\in \bigo(\log (c_{\max})\cdot  \sum_{i=1}^m c_i)$ and 
is a reduced instance of $I_1$. 
\end{proof}

\begin{lemma}[Bounding the size of a solution]\label{le-bd-sol-size} 
For every instance of {\sc Max MKP}
there is a feasible solution $A_1,\ldots,A_m$ of profit at least $k$ 
if and only if
there is a feasible solution $A'_1,\ldots,A'_m$ of profit at least $k$, 
which has size at most $k$.
\end{lemma}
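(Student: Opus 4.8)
The plan is to follow exactly the strategy used for Lemma \ref{le-bd-sol-size-dkp}, since removing a single item from a multiple-knapsack solution behaves in the same way as in the $d$-dimensional case. The backward direction is trivial: a feasible solution $A'_1,\ldots,A'_m$ of profit at least $k$ and size at most $k$ is in particular a feasible solution of profit at least $k$, so nothing has to be shown there.

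For the forward direction I would start from an arbitrary feasible solution $A_1,\ldots,A_m$ of profit at least $k$ and repeatedly shrink it. Let the \emph{size} of the solution be $|A_1\cup\cdots\cup A_m|$, the number of selected items (the $A_i$ being disjoint, this equals $\sum_{i=1}^m |A_i|$). If this size is already at most $k$ we are done; otherwise there are at least $k+1$ selected items, and I remove one item $a_j$ of smallest profit $p'$ from the (unique) subset $A_i$ containing it.

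The two things to check at each step are feasibility and the profit bound. Feasibility is easy: deleting $a_j$ only decreases the total size of the affected subset $A_i$, so its total size stays at most $c_i$ for every knapsack, and the disjointness of the collection is obviously preserved. For the profit bound I would use that all profits are positive integers: since $p'$ is the minimum profit among the at least $k+1$ selected items, the total profit before the deletion is at least $p'(k+1)$, so afterwards it is at least $p'(k+1)-p' = p'k \geq k$. Hence the new solution is still feasible and of profit at least $k$, but strictly smaller in size. Iterating this removal a finite number of times yields a feasible solution $A'_1,\ldots,A'_m$ of profit at least $k$ with at most $k$ items, which is exactly the claim.

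I expect no real obstacle here; the only point that needs a word of care is the bookkeeping that ``size'' refers to the cardinality of the union $A_1\cup\cdots\cup A_m$ and that the removal preserves both feasibility in every single knapsack and the disjointness of the subsets — all of which are immediate. The argument is essentially verbatim the one for Lemma \ref{le-bd-sol-size-dkp}, transported from dimensions to knapsacks.
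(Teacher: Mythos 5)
Your proposal is correct and follows essentially the same route as the paper: the paper's own proof is a one-line reduction ("if $|A_1\cup\ldots\cup A_m|\geq k+1$, remove one item of smallest profit"), relying on exactly the positivity argument $p'(k+1)-p'=p'k\geq k$ that it spells out for Lemma \ref{le-bd-sol-size-dkp} and that you reproduce here. Your additional remarks on feasibility and disjointness are fine but not needed beyond what the paper states.
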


\begin{proof}
If $|A_1\cup \ldots \cup A_m|\geq k+1$ we can remove one item of smallest 
profit, since all profits are  positive integers. 
\end{proof}

\subsection{Binary Integer Programming}

By choosing  a boolean variable $x_{i,j}$ for every item $a_j\in A$ and every
knapsack $1\leq i\leq m$,
indicating whether or not the item $a_j$ will be put into knapsack $i$,
a binary integer programming (BIP) version of the {\sc Max MKP} problem is 
as follows.
\begin{eqnarray}
&\text{max } &  \sum_{i=1}^{m} \sum_{j=1}^{n}  p_j \cdot x_{i,j}                          \label{lp-mkp1}   \\
&\text{s.t. }&  \sum_{j=1}^{n} s_{j} \cdot x_{i,j}  \leq    c_i   \text{ for } i\in [m]   \label{lp-mkp2} \\
&\text{and  }&  \sum_{i=1}^{m} x_{i,j}  \leq  1   \text{ for } j\in [n]                   \label{lp-mkp3} \\
&\text{and  }&  x_{i,j}\in\{0,1\}  \text{ for }  i\in [m], j\in [n]                       \label{lp-mkp4} 
\end{eqnarray}

The condition (\ref{lp-mkp3}) ensures that all knapsacks are disjoint, i.e.\
every item is contained in at most one knapsack.

\subsection{Dynamic Programming Algorithms}

A dynamic programming  solution for {\sc Max MKP} 
for $m=2$ knapsacks can be found in \cite{PT99}, 
which can be generalized as follows.

\begin{theorem}\label{t-dy3x}
{\sc Max MKP} can be solved in time 
$\bigo(n\cdot m \cdot \prod_{i=1}^{m}c_i)\subseteq \bigo(n\cdot m \cdot (c_{\max})^m)$
\end{theorem}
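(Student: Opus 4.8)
The plan is to generalize the dynamic programming solution for the two-knapsack case (cited from \cite{PT99}) to an arbitrary number $m$ of knapsacks. The natural state space records, for a prefix $a_1,\dots,a_j$ of the items, how much capacity has been consumed in each of the $m$ knapsacks simultaneously. Concretely, I would define $P(j, t_1,\dots,t_m)$ to be the maximum total profit achievable using only the first $j$ items, subject to the selected items assigned to knapsack $i$ having total size exactly (or at most) $t_i$, for each $i\in[m]$. Here each $t_i$ ranges over $0,1,\dots,c_i$, so the number of capacity states is $\prod_{i=1}^m (c_i+1)\in\bigo(\prod_{i=1}^m c_i)$, and there are $n$ choices for the prefix length $j$.

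The recurrence expresses the decision for item $a_j$: either it is left out of all knapsacks, or it is placed into exactly one knapsack $i$ (consistent with constraint (\ref{lp-mkp3}) that the subsets be disjoint). This yields
$$P(j,t_1,\dots,t_m)=\max\Bigl\{\,P(j-1,t_1,\dots,t_m),\ \max_{i\in[m],\,t_i\geq s_j} \bigl(P(j-1,t_1,\dots,t_i-s_j,\dots,t_m)+p_j\bigr)\Bigr\}.$$
The base case is $P(0,0,\dots,0)=0$ with all other entries at prefix $0$ being $-\infty$ (or, in the "at most" formulation, $P(0,t_1,\dots,t_m)=0$ for all feasible $t_i$). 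The answer to {\sc Max MKP} is then read off as the maximum of $P(n,t_1,\dots,t_m)$ over all $t_i\leq c_i$. First I would set up the table, then fill it in order of increasing $j$, and finally extract the optimum.

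For the running time, there are $n\cdot\prod_{i=1}^m(c_i+1)$ table entries, and computing each entry from the recurrence takes $\bigo(m)$ time, since we take a maximum over the $m$ knapsacks into which $a_j$ might be inserted. This gives an overall bound of $\bigo(n\cdot m\cdot\prod_{i=1}^m c_i)$, and since each $c_i\leq c_{\max}$ we obtain $\prod_{i=1}^m c_i\leq (c_{\max})^m$, yielding the claimed $\bigo(n\cdot m\cdot (c_{\max})^m)$ as a weaker but cleaner bound. I expect the main obstacle — really the only subtle point — to be correctly handling the disjointness constraint in the recurrence: one must ensure that each item is charged to at most one knapsack, which is exactly why the transition offers the $m$ mutually exclusive insertion options plus the skip option, rather than allowing an item to contribute to several dimensions as in the {\sc Max d-KP} dynamic program of Theorem \ref{t-dy3}. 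A brief correctness argument by induction on $j$, verifying that every feasible disjoint assignment corresponds to exactly one path through the table and vice versa, would complete the proof.
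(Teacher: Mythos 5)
Your proposal is correct and is essentially the same dynamic program as the paper's proof: the paper also indexes states by an item prefix together with one residual capacity per knapsack, uses the same ``skip or insert into exactly one of the $m$ knapsacks'' recurrence (handling infeasible insertions via $-\infty$ entries), and obtains the identical $\bigo(n\cdot m\cdot\prod_{i=1}^m c_i)$ count of table entries times per-entry work. No substantive differences to report.
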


\begin{proof}
We define $P[k,c'_1,\ldots,c'_m]$ to be the maximum profit of the subproblem
where we only may choose a subset from the first $k$ items
$a_1,\ldots,a_k$ and the capacities are $c'_1,\ldots,c'_m$.
We initialize $P[0,c'_1,\ldots,c'_m]=0$ for all  $c'_1,\ldots,c'_m$ since
when choosing none of the items, the profit is always zero.
Further we set $P[k,c'_1,\ldots,c'_m]=-\infty$ if at least one of the $c'_1,\ldots,c'_m$
is negative in order to represent the case where the size $s_k$ of an item
is too high for packing it into a knapsack of capacity $c'_i$.
The values $P[k,c'_1,\ldots,c'_m]$, $1 \leq k\leq n$, 
for every $0\leq c'_i\leq c_i$ and every 
$1\leq i\leq m$ can be computed by the following recursion.
\[
P[k,c'_1,\ldots,c'_m]    = \max \left\{
 \begin{array}{lll}
       P[k-1,c'_1,\ldots,c'_m] \\ 
        P[k-1,c'_1-s_k,\ldots,c'_m]+p_k \\ 
     $\vdots$ \\
P[k-1,c'_1,\ldots,c'_m-s_k]+p_k \\ 
     \end{array}
   \right.
\]
All these values define a table with $\bigo(n\cdot \prod_{i=1}^{m}c_i)$
fields, where each field can be computed in time $\bigo(m)$.
The optimal return is $P[n,c_1,\ldots,c_m]$. Thus we have shown that 
{\sc Max MKP} can be solved in time
$\bigo(n\cdot  m \cdot \prod_{i=1}^{m}c_i)$. 
\end{proof}


\subsection{Pseudo-polynomial Algorithms} \label{multip-pseudo2}

The existence of pseudo-polynomial algorithms
for {\sc Max MKP} depends
on the assumption whether the number of knapsacks $m$ is given in the input
or is assumed to be fixed.

\begin{theorem}\label{th-m-pseudo}
{\sc Max MKP} is not pseudo-poly\-no\-mial.
\end{theorem}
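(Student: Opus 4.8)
The plan is to show that {\sc Max MKP} is strongly NP-hard and then invoke Theorem~\ref{th-strongnp-pseudo} to conclude that it cannot be pseudo-polynomial. To establish strong NP-hardness, I would give a pseudo-polynomial reduction (cf.~page 101 of \cite{GJ79}) from a problem that is itself strongly NP-hard and whose structure naturally matches the multiple-knapsack setting. The natural candidate, as already announced in the introduction (Section~\ref{def-sec-mkp}), is {\sc 3-Partition}: given $3m$ positive integers $b_1,\ldots,b_{3m}$ with $\sum_j b_j = m\cdot B$ and $B/4 < b_j < B/2$ for every $j$, decide whether the multiset can be partitioned into $m$ triples each summing exactly to $B$. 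The crucial feature is that {\sc 3-Partition} is strongly NP-hard, i.e.\ it remains NP-hard even when all the $b_j$ are bounded by a polynomial in the input length.

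First I would set up the reduction: create one item $a_j$ for each integer $b_j$ with both size and profit equal to $b_j$, i.e.\ $s_j = p_j = b_j$, and create $m$ knapsacks each of capacity $c_i = B$. The threshold is $k = m\cdot B = \sum_{j=1}^{3m} b_j$. The equivalence I would argue is the following. If the $b_j$ admit a 3-partition into triples summing to $B$, then assigning the three items of each triple to a distinct knapsack fills every knapsack exactly to capacity, uses all $3m$ items, and yields total profit $\sum_j b_j = k$, so $I$ is a yes-instance. Conversely, any feasible assignment achieving profit at least $k = \sum_j b_j$ must place \emph{every} item into some knapsack (since the total available profit is exactly $\sum_j b_j$), and since the capacities sum to $m\cdot B = \sum_j b_j$, each knapsack must be filled exactly to capacity $B$; the bound $B/4 < b_j < B/2$ then forces each knapsack to receive exactly three items, yielding a valid 3-partition.

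Because all numbers appearing in the constructed instance $I$ (the sizes, profits, capacities, and $k$) are bounded by $\max(I) \leq \max\{\,m\cdot B,\ B\,\} = m\cdot B$, which is polynomial in the values and the input length of the original {\sc 3-Partition} instance, the transformation is a pseudo-polynomial reduction and it transfers strong NP-hardness. Hence {\sc Max MKP} is strongly NP-hard, and by Theorem~\ref{th-strongnp-pseudo} it is not pseudo-polynomial, which proves the statement. (This also lines up with Theorem~\ref{t-dy3x}, which gives a pseudo-polynomial algorithm only when $m$ is fixed, so strong NP-hardness must rely on $m$ being part of the input, exactly as in {\sc 3-Partition}.)

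I expect the main obstacle to be the verification of the \emph{converse} direction of the equivalence, specifically the argument that a profit of at least $k$ forces every knapsack to be packed exactly full and therefore to contain exactly three items. This is where the three defining properties of {\sc 3-Partition} — the total sum $m\cdot B$, the matching capacities, and the size window $B/4 < b_j < B/2$ — must be combined carefully, and it is the step most prone to a subtle gap if the $b_j$ range or the profit accounting is handled loosely. Everything else (the construction and the size bound on $\max(I)$) is routine.
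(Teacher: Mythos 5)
Your proposal is correct and follows essentially the same route as the paper: a pseudo-polynomial reduction from {\sc 3-Partition} with one item per integer, $m$ knapsacks of capacity $B$, where feasibility of packing all items forces each knapsack to be exactly full and the window $B/4<w_j<B/2$ forces triples. The only difference is cosmetic — the paper sets all profits to $1$ and uses threshold $n$, whereas you set $p_j=s_j=b_j$ and threshold $\sum_j b_j$; both encodings make ``profit at least $k$'' equivalent to ``every item is packed,'' and the rest of the argument is identical.
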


\begin{proof}
We give a pseudo-polynomial reduction (cf.~page 101 in \cite{GJ79}) from
{\sc 3-Partition} which is not pseudo-poly\-no\-mial by \cite{GJ79}. 
Given are $n=3m$ positive integers $w_1,\ldots, w_n$ such that the 
sum $\frac{1}{m}\cdot \sum_{j=1}^n w_j=B$ and
$\nicefrac{B}{4}< w_j < \nicefrac{B}{2}$ for every $1\leq j \leq n$.
The question is to decide whether there is a partition of $N=\{1,\ldots,n\}$ 
into $m$ sets $N_1,\ldots,N_m$ such that $\sum_{j\in N_i}w_j=B$ for 
every $1\leq i \leq m$.

Let $I$ be an instance for the {\sc 3-Partition}. 
We define an instance $I'$ for {\sc Max MKP} by choosing
the number of items as $n$, the number of knapsacks as $m$, 
the capacities\footnote{The choice of the capacities in the proof 
of Theorem \ref{th-m-pseudo} shows that even {\sc Multiple Knapsack
with identical capacities (MKP-I)}, see \cite{KPP10}, is not pseudo-poly\-no\-mial.} 
$c_i=B$ for $1\leq i \leq m$, the profits $p_j=1$ and sizes $s_j=w_j$ for  
$1\leq j \leq n$.
By this construction every  {\sc 3-Partition} solution for $I$ implies a 
solution with optimal profit $n$ for the {\sc Max MKP} 
instance $I'$ and vice versa.
\end{proof}

\begin{theorem}\label{th-m-pseudo2}
For every fixed $m$ there is a pseudo-polynomial algorithm that solves 
{\sc Max MKP} in time $\bigo(n\cdot m \cdot (c_{\max})^m)$.
\end{theorem}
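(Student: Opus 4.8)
The plan is to invoke the dynamic programming algorithm of Theorem \ref{t-dy3x}, which already solves {\sc Max MKP} in time $\bigo(n\cdot m \cdot \prod_{i=1}^{m} c_i)\subseteq \bigo(n\cdot m \cdot (c_{\max})^m)$, and then simply verify that once $m$ is regarded as a fixed constant this running time matches the definition of a pseudo-polynomial algorithm given in Section \ref{sec-pre}. This is the exact analogue of the proof of Theorem \ref{th-multi-pseudo2} for {\sc Max d-KP}, so no new algorithm needs to be constructed.

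Concretely, I would split the running time into the two factors that the definition of pseudo-polynomiality requires, namely a part that is polynomial in the input size $|I|$ and a part that is polynomial in $\max(I)$, the value of the largest number occurring in the instance. The factor $n\cdot m$ is polynomial in $|I|$, since both $n$ and $m$ are bounded by $|I|$. The factor $(c_{\max})^m$ is polynomial in $\max(I)$: because $c_{\max}\leq \max(I)$ and $m$ is a fixed constant, we get $(c_{\max})^m\leq (\max(I))^m$, which is a fixed-degree polynomial in $\max(I)$. Hence the total running time is bounded by some polynomial $p(|I|,\max(I))$, exactly as required by the definition.

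The main (and essentially only) point that genuinely uses the hypothesis that $m$ is fixed is that the exponent in $(c_{\max})^m$ must be constant; otherwise $(c_{\max})^m$ would grow exponentially in $\max(I)$ and the bound would fail to be pseudo-polynomial. There is no real obstacle here, as all the computational work is carried out by Theorem \ref{t-dy3x}. The result should moreover be read in contrast with Theorem \ref{th-m-pseudo}, which shows that once $m$ is part of the input the problem ceases to be pseudo-polynomial, so the hypothesis "for every fixed $m$" is indeed necessary.
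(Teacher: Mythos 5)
Your proposal is correct and follows exactly the paper's own argument: both invoke the dynamic programming algorithm of Theorem \ref{t-dy3x} and observe that for fixed $m$ the factor $n\cdot m$ is polynomial in $|I|$ while $(c_{\max})^m$ is polynomial in $\max(I)$. No differences worth noting.
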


\begin{proof}
We consider the algorithm with running 
time $\bigo(n\cdot m \cdot (c_{\max})^m)$  of 
Theorem \ref{t-dy3x}. If $m$ is assumed to be fixed, $n\cdot m$
is polynomial in the input size and $(c_{\max})^m$ is polynomial
in the value of the largest occurring number in every input.
\end{proof}

\subsection{Parameterized Algorithms}

Also the {\sc Max MKP} problem 
is defined on inputs of various informations, which motivates us
to consider parameterized versions of the problem. By 
adding a threshold value $k$ for  the profit to the instance
and choosing a parameter $\kappa(I)$ from the instance
$I$, we define the following parameterized problem.

\begin{desctight}
\item[Name] $\kappa$-{\sc multiple Knapsack} ($\kappa$-{\sc MKP})

\item[Instance] A set $A=\{a_1,\ldots,a_n\}$ of $n$ items and a number $m$ of knapsacks.
Every item $a_j$ has a profit $p_j$ and a size $s_j$.
Each knapsack $i$  has a capacity $c_i$ and we have given 
a positive integer $k$.

\item[Parameter] $\kappa(I)$

\item[Question] Are there $m$ disjoint (possibly empty) subsets 
$A_1,\ldots, A_m$ of $A$ such that  
the total profit of the selected items is at least $k$
and 
each subset can be assigned to a different knapsack 
$i$ without exceeding its capacity $c_i$ by the sizes of the selected items?
\end{desctight}

We give a bound on the number of items in the threshold value for the profit $k$.

\begin{lemma}[Reduced Instance]\label{le-blb2}
Every instance of {\sc Max MKP} 
can be transformed into a reduced instance, such that  
$n\in \bigo(k\cdot \log(k))$.
\end{lemma}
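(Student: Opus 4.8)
The plan is to mirror the proof of Lemma~\ref{le-blb}, interchanging the roles played by the sizes and the profits. There the capacity $c$ acted as a budget on the total \emph{size} and one kept, for each size $s$, the $\lfloor c/s\rfloor$ items of \emph{highest} profit; here the threshold $k$ will act as a budget on the total \emph{profit} and I would keep, for each profit value $p$, a bounded number of items of \emph{smallest} size. Throughout, ``reduced'' is read with respect to the threshold $k$: the transformed instance must admit a feasible assignment of total profit at least $k$ if and only if the original one does, and no-instances are preserved automatically because the transformation only deletes items. Lemma~\ref{le-bd-sol-size} is available and guarantees that, if the answer is \emph{yes}, there is a witnessing solution using at most $k$ items.

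The construction of the retained set $R$ is as follows. Since $s_{\max}\le c_{\max}$, every single item fits into a largest knapsack, so any item of profit at least $k$ is by itself a feasible solution of profit at least $k$; hence it suffices to keep one such item, say one of smallest size. For each profit value $p\in\{1,\dots,k-1\}$ I would keep only the $\lceil k/p\rceil$ items of profit exactly $p$ that have the smallest sizes (or all of them, if there are fewer), and delete every other item.

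The crux -- and the step I expect to be the main obstacle -- is to show that $R$ still contains a feasible solution of profit at least $k$ whenever the original instance does; this is where an exchange argument is needed. Starting from a feasible solution $S$ of profit at least $k$, if $S$ uses any item of profit at least $k$ then the single retained heavy item already settles the claim. Otherwise all profits in $S$ are below $k$; writing $j_p$ for the number of profit-$p$ items used by $S$, either some $j_p$ exceeds $\lceil k/p\rceil$, in which case the $\lceil k/p\rceil$ smallest profit-$p$ items already realise profit at least $k$, or every $j_p\le\lceil k/p\rceil$, in which case I replace, class by class, the profit-$p$ items of $S$ by the $j_p$ smallest profit-$p$ items of the instance. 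In both situations the substituted items lie in $R$ and the total profit is unchanged. The delicate part is to argue that such a substitution respects the $m$ capacity constraints: replacing an item by one of equal profit and no larger size cannot increase the load of the knapsack it is placed in, so after ordering each profit class by size and matching the incoming items to the outgoing slots from smallest to largest, every knapsack's load only decreases. Making this termwise rearrangement precise across all $m$ knapsacks (and handling items common to $S$ and $R$, which simply stay put) is the only genuinely technical point.

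It then remains to count. There are at most $k-1$ profit classes below $k$, plus the single retained heavy item, so
$$|R|\ \le\ 1+\sum_{p=1}^{k-1}\left\lceil\frac{k}{p}\right\rceil\ \le\ 1+\sum_{p=1}^{k}\left(\frac{k}{p}+1\right)\ \le\ k\cdot(\ln(k)+1)+k+1\ \in\ \bigo(k\cdot\log(k)),$$
by the same harmonic-series estimate as in Lemma~\ref{le-blb}. I expect this final counting, together with the bookkeeping for the heavy item, to be routine, so that essentially all of the work sits in the exchange argument above.
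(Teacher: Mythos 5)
Your proposal follows essentially the same route as the paper's proof: keep at most $\lceil k/p\rceil$ smallest-size items per profit class $p$, collapse all items of profit at least $k$ to a single one, and bound the total via the harmonic series. The exchange argument you flag as the delicate step is correct (and in fact spelled out more carefully than in the paper, which simply asserts that the truncation preserves the answer), so the proposal is sound.
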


\begin{proof}
For some fixed profit $p$, $p\geq 1$ we need at most $\lceil\frac{k}{p}\rceil$ many items
of profit $p$ in order to reach the profit $k$. 
If we have given an instance for {\sc Max MKP}  with more than $\lceil\frac{k}{p}\rceil$
items of profit $p$, we remove
all of them except the $\lceil\frac{k}{p}\rceil$ items of the smallest size. 
Since each profit is a positive integer we can  bound the number of items by
$n\leq \sum_{p=1}^{\infty}  \lceil\frac{k}{p}\rceil$.
Further all items with profit $p\geq k$ can be replaced by one of them 
of the smallest size. Thus we can assume that
$n\leq \sum_{p=1}^{k}  \lceil\frac{k}{p}\rceil$.
Further by the harmonic series it holds
$$n\leq \sum_{p=1}^{k}  \bigg\lceil\frac{k}{p}\bigg\rceil \leq \sum_{p=1}^{k} \bigg(\frac{k}{p}+1\bigg) =  k+ \sum_{p=1}^{k} \frac{k}{p} =  k+ k\cdot \sum_{p=1}^{k} \frac{1}{p} <  k+ k\cdot ( \ln(k) +1) \in \bigo(k\cdot \log (k)).$$
\end{proof}

Next we give  parameterized algorithms for the multiple knapsack problem.
Therefor let $B(n)=\sum_{i=0}^{n-1} \binom{n-1}{i} B(i)$ be the $n$-th {\em Bell number} 
which asymptotically grows faster than 
$c^n$ for every constant $c$ but slower than $n$ factorial.

\begin{table}[ht]
\begin{center}
\begin{tabular}{|l||l|lcl|}
\hline
Parameter      &  class & time   & &      \\
\hline\hline
$n$            & $\not\in\pfpt$, $\in\fpt$   & $\bigo(B(n)\cdot (m \cdot \log(m)+n))$  & $\subseteq$ &$\bigo(|I|^2 \cdot B(n))$ \\
\hline
$k$            &  $\in\fpt$  &  $\bigo((m \cdot \log(m)+n)  \cdot 2^{k\cdot \ln^2(k)})$  &$\subseteq$ & $\bigo( |I|^{2} \cdot 2^{k\cdot \ln^2(k)})$ \\
\hline
$(c_1,\ldots,c_m)$&  $\in\pfpt$  &  $\bigo(n\cdot m \cdot \prod_{i=1}^{m}c_i)$   &$\subseteq$& $\bigo(|I|^2 \cdot \prod_{i=1}^{m}c_i)$   \\
\hline
$(p_1,\ldots,p_n)$&    $\in\fpt$ &   $\bigo(B(\sum_{j=1}^{n}p_j)\cdot (m \cdot \log(m)+n))$  &$\subseteq$&   $\bigo(|I|^2 \cdot B(\sum_{j=1}^{n}p_j))$   \\  

\hline
$(s_1,\ldots,s_n)$&    $\in\fpt$ &   $\bigo(B(\sum_{j=1}^{n}s_j)\cdot (m \cdot \log(m)+n))$    &$\subseteq$&  $\bigo(|I|^2 \cdot B(\sum_{j=1}^{n}s_j))$ \\  
\hline
$m$ &   $\not\in\xp$ &  & &\\
 \hline

$(m,\val)$ &  $\in\fpt$ & $\bigo( n\cdot m \cdot  2^{m\cdot \operatorname{val}(I)})$    &$\subseteq$ & $\bigo( |I|^2 \cdot  2^{m\cdot \operatorname{val}(I)})$   \\
\hline
$(m,c_{\max})$ &  $\in\fpt$ & $\bigo(n\cdot m \cdot (c_{\max})^m)$ &$\subseteq$ & $\bigo(|I|^2 \cdot (c_{\max})^m)$  \\
\hline
$(m,n)$ &  $\in\fpt$  &$\bigo(n\cdot m \cdot 2^{n\cdot m})$ & $\subseteq$ &$\bigo(|I|^2 \cdot 2^{n\cdot m})$ \\
\hline
\end{tabular}
\end{center}
\caption{Overview for parameterized algorithms for MKP}\label{tab-mkp-survey}
\end{table}

\begin{theorem}\label{maintheorem-mkp}
There exist parameterized algorithms for the multiple knapsack problem
such that the running times of Table \ref{tab-mkp-survey} 
hold true and the problems belong to the specified parameterized complexity classes.
\end{theorem}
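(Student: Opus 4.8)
The plan is to prove Table \ref{tab-mkp-survey} row by row, grouping the entries by the underlying technique. Three of the combined parameters follow immediately from the dynamic programming algorithm of Theorem \ref{t-dy3x}. For $\kappa(I)=(c_1,\ldots,c_m)$ its running time $\bigo(n\cdot m\cdot\prod_{i=1}^m c_i)$ is polynomial in the input and in the product of the capacities, hence it is a polynomial fpt-algorithm and $(c_1,\ldots,c_m)$-{\sc MKP}$\,\in\pfpt$. For $\kappa(I)=(m,c_{\max})$ we bound $\prod_{i=1}^m c_i\leq (c_{\max})^m$ directly, and for $\kappa(I)=(m,\val)$ we additionally use $c_{\max}\leq 2^{\val(I)}$ to obtain $(c_{\max})^m\leq 2^{m\cdot\val(I)}$; both give fpt-algorithms. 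For $\kappa(I)=(m,n)$ the simplest option is to enumerate all $2^{nm}$ assignments of the boolean variables $x_{i,j}$ of the BIP (\ref{lp-mkp1})--(\ref{lp-mkp4}), each checkable in time $\bigo(nm)$. Finally $m$-{\sc MKP}$\,\not\in\xp$: since {\sc Max MKP} already contains {\sc Max KP} for $m=1$ and is therefore NP-hard for this fixed value, an xp-algorithm with respect to $m$ would yield a polynomial time algorithm for an NP-hard problem, which is impossible unless $\p=\np$.

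The heart of the proof is the algorithm for $\kappa(I)=n$, which also drives the three remaining fpt-results. The idea is to enumerate all $B(n)$ set partitions of the item set $A$. For a fixed partition with blocks $P_1,\ldots,P_t$ we record for each block its total size and total profit and then assign blocks to knapsacks: a block may be placed in knapsack $i$ iff its total size is at most $c_i$, each knapsack receives at most one block, and the blocks left unmatched are exactly the unselected items. Because ``block fits in knapsack $i$'' depends only on the threshold $c_i$, the families of simultaneously assignable blocks form a (nested) transversal matroid, so a maximum-profit assignment is found greedily after sorting the knapsacks by capacity; exploiting the nested structure this costs $\bigo(m\cdot\log(m)+n)$ per partition. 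Every feasible solution $A_1,\ldots,A_m$ together with its unselected items induces such a partition, and conversely each partition yields a feasible solution, so the maximum profit over all partitions equals $OPT(I)$; comparing with $k$ decides the instance. The total time is $\bigo(B(n)\cdot(m\cdot\log(m)+n))$, giving $n$-{\sc MKP}$\,\in\fpt$. Since $n\leq|I|$ and {\sc Max MKP} is NP-hard, Corollary \ref{cor-pfpt-p} rules out a polynomial fpt-algorithm, so $n$-{\sc MKP}$\,\not\in\pfpt$.

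The remaining three rows reduce to the $n$-algorithm by bounding $n$ in the parameter. For $\kappa(I)=(p_1,\ldots,p_n)$ every profit is a positive integer, so $n\leq\sum_{j=1}^n p_j$ and running the above algorithm costs $\bigo(B(\sum_{j=1}^n p_j)\cdot(m\cdot\log(m)+n))$; the case $\kappa(I)=(s_1,\ldots,s_n)$ is identical using $n\leq\sum_{j=1}^n s_j$, which holds because all sizes are positive integers. For the standard parameter $\kappa(I)=k$ we first apply Lemma \ref{le-blb2} to pass to a reduced instance with $n\in\bigo(k\cdot\log(k))$ and then invoke the $n$-algorithm. It remains to translate the Bell number bound: using the asymptotics $B(n)\leq 2^{\bigo(n\log n)}$ with $n\in\bigo(k\log k)$ gives $B(n)\leq 2^{\bigo(k\log k\cdot\log(k\log k))}=2^{\bigo(k\cdot\log^2 k)}$, matching the stated $2^{k\cdot\ln^2(k)}$ up to constants in the exponent.

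The main obstacles are the two technical points just flagged. First, one must argue carefully that enumerating unrestricted set partitions (rather than labelled assignments) loses no solution and that the per-partition assignment is genuinely a matroid-greedy problem; the nested threshold structure of the feasibility relation is what makes the greedy correct and what keeps the factor at $m\cdot\log(m)+n$ rather than a full weighted-matching cost. Second, the Bell-number estimate $B(\bigo(k\log k))\subseteq 2^{\bigo(k\cdot\ln^2 k)}$ is the one genuinely quantitative step and should be recorded with explicit constants so that the reduced-instance bound of Lemma \ref{le-blb2} plugs in cleanly. All other entries are routine substitutions into Theorem \ref{t-dy3x} or the BIP brute force.
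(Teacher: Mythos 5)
Your proposal follows the same architecture as the paper's proof for every row of Table \ref{tab-mkp-survey}: the capacity-based parameters $(c_1,\ldots,c_m)$, $(m,c_{\max})$ and $(m,\val)$ all come from the dynamic program of Theorem \ref{t-dy3x}, the parameter $(m,n)$ from brute-forcing the $2^{nm}$ assignments of the BIP variables, the negative results for $m$ and the $\not\in\pfpt$ claims from NP-hardness at $m=1$ resp.\ Corollary \ref{cor-pfpt-p}, and the parameters $k$, $(p_1,\ldots,p_n)$, $(s_1,\ldots,s_n)$ by bounding $n$ in terms of the parameter (via Lemma \ref{le-blb2} for $k$) and invoking the Bell-number enumeration used for parameter $n$. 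The one place you genuinely diverge is the per-partition subroutine for that enumeration. The paper requires the partition to have at most $m$ blocks, sorts the blocks by total size, and matches them against the decreasingly sorted capacities, so every block --- hence every item --- is packed; read literally this misses solutions that discard items (take $m=2$, $c_1=c_2=1$ and three unit-size items: every partition into at most two blocks has an overfull block, yet packing the two most profitable items is feasible). Your version, which lets blocks remain unassigned (these being the unselected items) and extracts a maximum-profit matchable subfamily by a greedy over the transversal matroid induced by the nested threshold relation ``block fits knapsack $i$,'' covers exactly the feasible solutions and is the more careful argument; the price is a slightly more delicate per-partition routine, and your $\bigo(m\cdot\log(m)+n)$ accounting glosses over sorting up to $n$ blocks by profit, a logarithmic factor that is in any case absorbed by the $\bigo(|I|^2\cdot B(n))$ entry of the table. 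Finally, your asymptotic Bell-number estimate for parameter $k$ reaches the same exponent $\bigo(k\cdot\ln^2(k))$ as the paper, which instead substitutes the explicit bound $B(\ell)\leq(\nicefrac{\ell}{\ln(\ell)})^{\ell}$ of Berend and Tassa; the two are interchangeable here.
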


\begin{proof}
We start with parameter $\kappa(I)=n$. 
A brute force solution to solve $n$-{\sc MKP} is to consider all $B(n)$ possible partitions of the $n$ 
items into disjoint non-empty subsets, which all can be generated in time $\bigo(B(n))$ 
by \cite{KN05a}.
For every such partition we check if it consists of at most $m$ sets 
and after sorting descending w.r.t.\ the sum of sizes in each set $A'_j$ of the 
partition, we compare this sum with the capacity $c_j$. The capacities are
also assumed to be sorted, such that $c_1\geq c_2 \geq \ldots \geq c_m$.
Every of these partitions can be handled in time $\bigo(m\cdot \log(m)+n)$
for sorting the sets $A'_j$  and summing up the sizes.
This leads to an fpt-algorithm that solves $n$-{\sc MKP}
in time $\bigo(B(n)\cdot (m \cdot \log(m)+n))$.

As a lower bound Corollary \ref{cor-pfpt-p} implies that there is no 
polynomial fpt-algorithm for  $n$-{\sc MKP}.

For parameter $\kappa(I)=k$ by the result for parameter $n$ 
and  Lemma \ref{le-blb2} we obtain an fpt-algorithm that solves $k$-{\sc MKP}
in time $\bigo(B(k\cdot \ln(k))\cdot (m \cdot \log(m)+n))$.
By the bound on the Bell number $B(\ell)\leq (\frac{\ell}{\ln(\ell)})^\ell$ shown in \cite{BT10} we obtain the following inclusions.
$$\begin{array}{lcl}
\bigo((\frac{k\cdot \ln(k)}{\ln(k\cdot \ln(k))})^{k\cdot \ln(k)}\cdot (m \cdot \log(m)+n))&\subseteq& \bigo(k^{k\cdot \ln(k)}\cdot (m \cdot \log(m)+n))\\
&\subseteq & \bigo(2^{k\cdot \ln^2(k)}\cdot (m \cdot \log(m)+n))
\end{array}$$
Thus there is an fpt-algorithm that solves $k$-{\sc MKP}
in time $\bigo(2^{k\cdot \ln^2(k)}\cdot (m \cdot \log(m)+n))  \subseteq \bigo(2^{k\cdot \ln^2(k)}\cdot n^{2})$.

For parameter $\kappa(I)=(c_1,\ldots,c_m)$ we use the running time
mentioned in Theorem \ref{t-dy3x} to obtain
a polynomial fpt-algorithm that solves $(c_1,\ldots,c_m)$-MKP
in time $\bigo(n\cdot m \cdot \prod_{i=1}^{m}c_i)$.

For parameter $\kappa(I)=(p_1,\ldots,p_n)$ we know that $n\leq \sum_{j=1}^{n}p_j$
since all profits are positive integers and thus we conclude an
fpt-algorithm of running time
 $\bigo(B(\sum_{j=1}^{n}p_j)\cdot (m \cdot \log(m)+n))$
for {\sc MKP}  w.r.t. $(p_1,\ldots,p_n)$ by the result for parameter $n$.

For parameter $\kappa(I)=(s_1,\ldots,s_n)$ we know that $n\leq \sum_{j=1}^{n}s_j$
since all sizes are positive integers and in the same way as for
the profits  we conclude an fpt-algorithm for {\sc MKP}  w.r.t. $(s_1,\ldots,s_n)$.

If we choose $\kappa(I)=m$ the parameterized problem is at 
least  $\w[1]$-hard, unless $\p=\np$. An fpt-algorithm with respect
to parameter $m$ would imply a polynomial time algorithm
for every fixed $m$, but even for $m=1$ the problem is NP-hard. 
For the same reason there is no xp-algorithm with respect
to parameter $m$.

Next we consider several combined parameters including $m$.
For parameter $\kappa(I)=(m,\val)$ we 
apply Theorem \ref{t-dy3x} to obtain 
a parameterized running time of
$\bigo(n \cdot m \cdot (c_{\max})^m) \subseteq \bigo(n\cdot m \cdot 
(2^{\operatorname{val}(I)})^m) \subseteq \bigo(n\cdot m \cdot 2^{m\cdot\operatorname{val}(I)})$.
Thus there is an fpt-algorithm that solves $(m,\val)$-{\sc MKP}
in time $\bigo(n\cdot m \cdot 2^{m\cdot\operatorname{val}(I)})$.

For parameter $\kappa(I)=(m,c_{\max})$ we also use
Theorem \ref{t-dy3x} to show that the $(m,c_{\max})$-{\sc MKP} 
problem is fixed-parameter tractable with running time 
$\bigo(n \cdot m \cdot (c_{\max})^m)$

Finally for the parameter $\kappa(I)=(m,n)$ we verify all possible $2^{n\cdot m}$ assignments
of the $n\cdot m$ variables in  (\ref{lp-mkp1})-(\ref{lp-mkp4})  each 
in time $\bigo(n\cdot m)$.
Thus there is an fpt-algorithm that solves  $(m,n)$-{\sc MKP} 
in time $\bigo(n\cdot m \cdot 2^{n\cdot m})$.

Alternatively we can apply
the result of \cite{Kan87}, which implies
that integer linear programming is fixed-parameter tractable for the 
parameter ''number of variables''. By BIP (\ref{lp-mkp1})-(\ref{lp-mkp4})  
for {\sc Max MKP} we obtain an fpt-algorithm that solves  $(m,n)$-{\sc MKP} 
in time $\bigo((n\cdot m)^{\bigo(n\cdot m)}\cdot |I|)$.
\end{proof}

Next we give some further ideas how to show parameterized algorithms for  $k$-{\sc MKP}.
Although these are less efficient, the ideas might be interesting on its own.

\begin{remark}
\begin{enumerate}
\item The {\sc Max MKP} problem  allows an 
EPTAS of parameterized running time 
$2^{\bigo(\nicefrac{1}{\epsilon}\cdot\log^4(\nicefrac{1}{\epsilon}))} + n^{\bigo(1)}$, 
see  \cite{Jan12}. By Theorem \ref{theorem-fg}  we can use
this EPTAS for $\epsilon=\nicefrac{1}{2k}$ in order to obtain an fpt-algorithm that solves 
the standard parameterization of the corresponding decision problem 
in time $2^{\bigo(k\cdot\log^4(k))}+ n^{\bigo(1)}$. 
Thus there is an fpt-algorithm that solves $k$-{\sc MKP}
in time $2^{\bigo(k\cdot\log^4(k))}+ n^{\bigo(1)}$.

\item
By Lemma \ref{le-bd-sol-size} we can restrict to solutions
which have size at most $k$. Formally we need to bound the 
number 
of families of disjoint subsets $A'_1, \ldots, A'_m$ 
of a set $A$ on $n$ elements, such that $|A'_1\cup \ldots \cup A'_m|\leq k$.
Thus for $1\leq i\leq k$ we sum up $B(i)$ multiplicated by 
$\binom{n}{i}$ possible ways to choose the $i$ items from all
$n$ items.
$$
\begin{array}{lcll}
\sum_{i=1}^{k}\binom{n}{i}B(i)&  =   & \sum_{i=1}^k \binom{k}{i} \frac{(k-i)! n!}{k! (n-i)!} B(i)  & \text{since } \frac{\binom{n}{i}}{\binom{k}{i}}= \frac{(k-i)! n!}{k! (n-i)!}\\
                              & \leq & \frac{n!}{(n-k)!}  \sum_{i=1}^k \binom{k}{i} B(i)           & \text{since } i\leq k\\
                              & \leq & \frac{n!}{(n-k)!} B(k+1)                                    & \text{by  definition of } B(k+1)\\
                              & \leq & n^k B(k+1)                                                  & \text{since } \frac{n!}{(n-k)!}\leq n^k       
\end{array}$$
Every of these partitions can be generated in constant time by \cite{KN05a} and  
handled in time $\bigo(m\cdot \log(m)+n)$ as mentioned in the proof of Theorem \ref{maintheorem-mkp}.
Thus  there is an xp-algorithm that solves $k$-{\sc MKP} 
in time $\bigo(n^k \cdot B(k+1) \cdot (m \cdot \log(m)+n)) \subseteq \bigo(n^{k+2} \cdot B(k+1))$.
\end{enumerate}
\end{remark}

\subsection{Kernelizations}

Next we give kernelization bounds for the multiple knapsack problem.

\begin{table}[ht]
\begin{center}
\begin{tabular}{|l||l|l|l|}
\hline
Parameter               & lower bound  & upper bound  \\
\hline\hline
$n$                     & $\omega(1)$  & $\bigo(n^{8})$  \\
\hline
$k$                     &              & $\bigo(2^{k\cdot \ln^2(k)})$ \\
\hline
$(c_1,\ldots,c_m)$      & $\Theta(1)$  & $\Theta(1)$\\
\hline
$(p_1,\ldots,p_n)$      &              & $\bigo(B(\sum_{j=1}^n p_j))$ \\
\hline
$(s_{1},\ldots,s_{n})$  &              & $\bigo(B(\sum_{j=1}^n s_j))$\\
\hline
$(m,\val)$              &              & $\bigo(2^{m\cdot\operatorname{val}(I)} )$  \\
\hline
$(m,c_{\max})$          &              & $\bigo((c_{\max})^m)$  \\
\hline
$(m,n)$                 &              & $\bigo(2^{n\cdot m})$  \\
\hline
\end{tabular}
\end{center}
\caption{Overview for kernel sizes of parameterized MKP}\label{tab-mkp-ker-survey}
\end{table}

\begin{theorem}\label{maintheorem2-mkp}
There exist kernelizations for the parameterized multiple knapsack problem 
such that the upper bounds for the sizes
of a possible kernel in Table \ref{tab-mkp-ker-survey} hold true.
\end{theorem}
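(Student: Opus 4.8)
The plan is to mirror the structure of the proofs of Theorem \ref{maintheorem2} and Theorem \ref{maintheorem2-dkp}, splitting the eight rows of Table \ref{tab-mkp-ker-survey} into three groups according to how each kernel bound is obtained: the parameter $n$ (via number compression), the parameter $(c_1,\ldots,c_m)$ (via membership in $\pfpt$), and the remaining six parameters (via the fpt running times already established in Theorem \ref{maintheorem-mkp}).

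The main work lies in the parameter $\kappa(I)=n$, where I would reuse the Frank--Tardos technique from the proof of Theorem \ref{maintheorem2}. Starting from the BIP (\ref{lp-mkp1})--(\ref{lp-mkp4}) with its $n\cdot m$ variables $x_{i,j}$, I would apply Theorem \ref{th-ft87} to replace the large coefficients by small integers while preserving the sign of every numeric constraint: the $m$ capacity inequalities (of type (\ref{eq-ker-n-1})) and the single profit inequality (of type (\ref{eq-ker-n-2})) must be scaled, whereas the disjointness constraints (\ref{lp-mkp3}) carry only unit coefficients and are preserved automatically. Because every feasible $0/1$-assignment places each item in at most one knapsack, the relevant test vectors have $\ell_1$-norm at most $n+1$, so one may take $\ell\leq n$. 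Working over the $\bigo(nm)\subseteq\bigo(n^2)$ coordinates of the BIP (recall the assumption $n\geq m$), each compressed number then has bit-length $\bigo((nm)^3)\subseteq\bigo(n^6)$, and there are $\bigo(nm)\subseteq\bigo(n^2)$ of them, which yields a polynomial kernel of size $\bigo(n^8)$. The lower bound $\omega(1)$ for parameter $n$ follows from Corollary \ref{cor-pfpt-p2}, since {\sc Max MKP} is NP-hard and $n\leq|I|$.

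The step I expect to be the real obstacle is not the estimate itself but a genuine structural subtlety hidden in it: unlike {\sc Max d-KP}, each item of {\sc Max MKP} has a \emph{single} size $s_j$ shared by all knapsacks and a single profit $p_j$, so a naive compression of each constraint in isolation would produce different scaled sizes for the same item in different knapsacks, i.e.\ a generalized-assignment instance rather than a valid MKP instance. I would therefore compress the sizes together with the capacities in one invocation of Theorem \ref{th-ft87}, and the profits together with the threshold $k$ in another, and then verify that $\ell\leq n$ still suffices for all the test vectors described above, so that the output is again an MKP instance with one size and one profit per item. Checking this consistency, and carrying out the accompanying bit-length bookkeeping, is where the care is needed.

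For the parameter $\kappa(I)=(c_1,\ldots,c_m)$ the argument is immediate: by Theorem \ref{maintheorem-mkp} the problem lies in $\pfpt$, so Theorem \ref{th-me} furnishes a kernel of $\bigo(1)$ size, matching the $\Theta(1)$ entry. For the remaining parameters $k$, $(p_1,\ldots,p_n)$, $(s_1,\ldots,s_n)$, $(m,\val)$, $(m,c_{\max})$, and $(m,n)$, I would simply invoke Theorem \ref{th-ker} on the fpt running times of Theorem \ref{maintheorem-mkp}: reading off the function $f(\kappa(I))$ from each running time of the form $\bigo(f(\kappa(I))\cdot|I|^{\bigo(1)})$ gives kernels of size $\bigo(2^{k\cdot\ln^2(k)})$, $\bigo(B(\sum_{j=1}^{n}p_j))$, $\bigo(B(\sum_{j=1}^{n}s_j))$, $\bigo(2^{m\cdot\val(I)})$, $\bigo((c_{\max})^m)$, and $\bigo(2^{n\cdot m})$ respectively, which are exactly the upper bounds claimed in Table \ref{tab-mkp-ker-survey}.
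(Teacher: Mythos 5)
Your decomposition into three groups and the tools used for each are exactly those of the paper: Frank--Tardos compression for $n$, membership in $\pfpt$ plus Theorem \ref{th-me} for $(c_1,\ldots,c_m)$, and Theorem \ref{th-ker} applied to the running times of Theorem \ref{maintheorem-mkp} for the remaining six rows; the last two groups match the paper verbatim. Where you genuinely diverge is the compression step for parameter $n$, and your version is the more careful one. The paper scales the $m$ capacity inequalities separately (producing $m(n+1)$ scaled numbers) and the profit inequality as an inequality in the $nm$ variables $x_{i,j}$ (producing $nm+1$ scaled numbers); taken literally this assigns item $j$ a different scaled size and profit in each knapsack $i$, i.e.\ it outputs a generalized-assignment-type instance rather than an MKP instance. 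Your observation that the sizes must be compressed jointly with all capacities in a single invocation of Theorem \ref{th-ft87} (on the vector $(s_1,\ldots,s_n,c_1,\ldots,c_m)$, with test vectors of $\ell_1$-norm at most $n+1$, so $\ell\leq n$), and the profits jointly with $k$ (using $y_j=\sum_i x_{i,j}\in\{0,1\}$, guaranteed by (\ref{lp-mkp3})), is exactly the repair needed to keep one size and one profit per item. It in fact yields only $\bigo(n)$ numbers of bit-length $\bigo(n^3)$ each, hence a kernel of size $\bigo(n^4)$, which is better than the $\bigo(n^8)$ stated in the table and in your estimate; your own arithmetic still counts $\bigo(nm)$ coordinates per invocation, which is what inflates the bound to $\bigo(n^8)$ --- harmless, since $\bigo(n^8)$ remains a valid upper bound, but unnecessary once the joint compression you describe is used.
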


\begin{proof}
For parameter  $\kappa(I)=n$
we proceed as in the proof of Theorem \ref{maintheorem2} which
shows  a kernel  of size $\bigo(n^4)$ for $n$-{\sc KP}.
In the case of $n$-{\sc MKP} we have to scale $m$  
inequalities of type (\ref{eq-ker-n-1}) on $n$ 
variables and one inequality of type (\ref{eq-ker-n-2})
on $n\cdot m$ variables
by Theorem \ref{th-ft87}.
For the obtained instance $I'$ we can
bound $|I'|$ by the number of items $n$ and $m(n+1)$
numbers of value at most $2^{4(n+1)^3} (\ell+2)^{(n+1)(n+3)}$
and $n\cdot m +1$ numbers of value at most
 $2^{4(nm+1)^3} (\ell+2)^{(nm+1)(nm+3)}$.
Thus  $n$-{\sc MKP} has a  kernel of size 
$$\bigo\left(n+ m(n+1) \log_2( 2^{4(n+1)^3} (\ell+2)^{(n+1)(n+3)})  + (nm +1) \log_2( 2^{4(nm+1)^3} (\ell+2)^{(nm+1)(nm+3)})  \right).$$
We can assume $m\leq n$ (cf.~beginning of Section \ref{def-sec-mkp}) and 
since every item is assigned to at most one knapsack  (cf.~(\ref{lp-mkp3})) we know
$\ell \leq n$. Thus we obtain a kernel of size 
$$
\bigo(n\cdot m \cdot n^3 + n\cdot m \cdot \log_2(\ell+2)n^2 + n\cdot m \cdot (nm)^3  + n\cdot m \cdot\log_2(\ell+2)(nm)^2) \subseteq \bigo(n^{4}\cdot m^4) \subseteq\bigo(n^{8}).
$$
for $n$-{\sc MKP}.

For parameter $\kappa(I)=(c_1,\ldots,c_m)$ we obtain by 
Theorem \ref{maintheorem-mkp} and Theorem \ref{th-me}  a kernel of constant
size.

For the remaining six parameters of Table \ref{tab-mkp-ker-survey}  the upper bounds follow by
Theorem \ref{maintheorem-mkp} and Theorem \ref{th-ker}.
\end{proof}

\section{Conclusions and Outlook}\label{sec-con}

We have considered the {\sc Max Knapsack}
problem and its two generalizations {\sc Max Multidimensional Knapsack} 
and {\sc Max Multiple Knapsack}. 
The parameterized decision versions
of all three problems allow  several parameterized algorithms.


From a practical point of view choosing the standard parameterization $k$  is
not very useful, since a large profit of the subset $A'$
violates the aim that a good parameterization is small for every
input. So for KP we suggest it is better to choose the capacity as a parameter,
i.e.\ $\kappa(I)=c$, since common values of $c$ are low enough such that
the polynomial fpt-algorithm is practical. The same holds for d-KP and MKP.
Further one has a good parameter, if it is smaller than the input size $|I|$ but measures
the structure of the instance. This is the case for the parameter number
of items $n$ within all three considered knapsack problems.

The special case of the {\sc Max Knapsack} problem, 
where $s_j=p_j$ for all items $1\leq j \leq n$ is known 
as the {\sc Subset Sum} problem. For this case we know that 
$s_{\max}=p_{\max}\leq c$ and we conclude the existence of 
fpt-algorithms with respect to parameter $n$, $c$, and $k$.
Kernels for the  {\sc Subset Sum} problem w.r.t. $n$ and the number
of different sizes $\sizevar$ are examined in
\cite{EKMR15}.

The closely related minimization problem 
\begin{eqnarray}
\text{min }   \sum_{i=1}^{n}  x_i  \text{~~ s.t. ~~}  \sum_{i=1}^{n} c_{i}
 x_{i}  =    c  \text{~~ and ~~} x_{i}\in \{0,1\}  \text{ for } i\in [n]  \label{lp}
\end{eqnarray}
is known as the {\sc Change Making} problem, whose parameterized complexity
is discussed in \cite{GGRY15}.

In our future work, we want to find better fpt-algorithms, especially 
for {\sc d-KP} and {\sc MKP}. 
We also want to consider the following additional parameters.
\begin{itemize}
\item $\pvar(I)=|\{p_1,\ldots,p_n\}|$ for KP
\item $\pvar(I)=|\{p_1,\ldots,p_n\}|$, $\sizevar(I)=|\{s_{1,1},\ldots,s_{d,n}\}|$, 
$c_{\max}$,  $p_{\max}$,  $s_{\max}$, and $\val$ for d-KP, and
\item $\sizevar(I)=|\{s_1,\ldots,s_n\}|$, $\pvar(I)=|\{p_1,\ldots,p_n\}|$, $\val$,
$c_{\max}$,  $p_{\max}$, and $s_{\max}$ for MKP.
\end{itemize}

Also from a theoretical point of view it is interesting to increase
the number of parameters for which the parameterized complexity 
of the considered problems is known. 
For example if our problem is $\w[1]$-hard with respect to some parameter $\kappa$, 
then a natural question is to ask, whether it remains hard for the {\em dual parameter} 
$\kappa_d(I)=\max_{I'\in {\cal I}}\kappa(I') -\kappa(I)$. 
That is, if $\kappa$ measures the costs of a solution, then for some optimization problem
the dual parameter $\kappa_d$ measures the costs of the elements that are not in the solution \cite{Nie10}.
Since $k$-{\sc d-KP} is $\w[1]$-hard the question arises, whether {\sc d-KP} becomes tractable
w.r.t.\ parameter $n\cdot p_{\max}-k$. 
More general, one also might consider more combined parameters, i.e.\
parameters that consists of two or more parts of the input. For {\sc d-KP}
combined parameters including $k$ are of our interest.

The existence of polynomial kernels for knapsack problems seems to be
nearly uninvestigated. Recently  
a polynomial kernel for {\sc KP} using rational sizes and profits 
is constructed in \cite{EKMR15,EKMR16} by Theorem  \ref{th-ft87}.
This result also holds for integer sizes and profits  (cf. Theorem \ref{maintheorem2}).
By considering polynomial fpt-algorithms we could show some lower bounds for
kernels for KP (cf. Table \ref{tab-kp-ker-survey}). 
We want to consider further kernels for {\sc d-KP} and {\sc MKP}, try to improve 
the sizes of known kernels, and give lower bounds for the
sizes of kernels.

A further task is to extend the results to more knapsack problems, 
e.g. {\sc Max-Min Knapsack} 
problem and restricted versions of the presented problems, e.g. {\sc Multiple Knapsack
with identical capacities (MKP-I)}, see \cite{KPP10}.

We also want to consider the existence of parameterized approximation algorithms
for knapsack problems, see \cite{Mar08} for a survey.

\section*{Acknowledgements}

We would like to thank Klaus Jansen and Steffen Goebbels for useful discussions.

\end{document}